\definecolor{mycolor1}{rgb}{0.83,0.83,1}
\newtheorem{thm}{Theorem}
\newtheorem{lem}{Lemma}
\newtheorem{proposition}{Proposition}
\newtheorem{cor}{Corollary}
\newtheorem{definition}{Definition}
\newtheorem{remark}{Remark}
\newcommand{\CI}{{\rm ci}}
\newcommand{\SI}{{\rm si}}
\newcommand{\Opt}{{\rm opt}}
\newcommand{\Nyq}{{\rm nyq}}
\newcommand{\range}[1]{\llbracket #1 \rrbracket}
\newcommand{\Id}{\bs I}
\newcommand{\inv}[1]{\frac{1}{#1}}
\newcommand{\tinv}[1]{{\textstyle\frac{1}{#1}}}
\newcommand{\ud}{\mathrm{d}} 
\newcommand{\Nhs}{N_{\rm hs}}
\newcommand{\Nhsexp}{{}^{\hspace{-0.1mm}N_{\hspace{-0.1mm}\scalebox{.4}{\rm hs}}}}
\DeclareMathOperator{\st}{s.t.}
\DeclareMathOperator{\ve}{vec}
\DeclareMathOperator{\diag}{diag}
\DeclareMathOperator*{\argmin}{arg\,min}
\newcommand{\bs}{\boldsymbol}
\newcommand{\bb}{\mathbb}
\newcommand{\cl}{\mathcal}
\newcommand{\ts}{\textstyle}
\newcommand{\ie}{\emph{i.e.},\xspace}
\newcommand{\eg}{\emph{e.g.},\xspace}
\newcommand{\etal}{\emph{et al.}}
\newcommand{\nell}{{l}}
\newcommand{\iid}{%
  \ifmmode% math mode
  \mathrm{i.i.d.}%
  \else%
  i.i.d.\@\xspace%
  \fi%
}
\newcommand{\rv}{\mbox{r.v.}\xspace}
\newcommand{\rvs}{\mbox{r.v.s}\xspace}
\begin{document}

\title{A Variable Density Sampling Scheme\\ for Compressive Fourier Transform Interferometry}
\author{A.~Moshtaghpour\thanks{ISPGroup, ICTEAM/ELEN, UCLouvain, Belgium 
    (\url{{amirafshar.moshtaghpour,valerio.cambareri,laurent.jacques}@uclouvain.be}). AM is funded by the FRIA/FNRS. LJ and VC is funded by the F.R.S.-FNRS. Part of this study is funded by the project
    AlterSense (MIS-FNRS).}\and L.~Jacques\footnotemark[1] \and V.~Cambareri\footnotemark[1] \and P.~Antoine\thanks{Lambda-X, Nivelles, Belgium (\url{{pantoine,mroblin}@lambda-x.com})} \and~M.~Roblin\footnotemark[2]%
}

\markboth{}%
{}

\maketitle

\begin{abstract} 
Fourier Transform Interferometry (FTI) is an appealing Hyperspectral (HS) imaging modality for many applications demanding high spectral resolution, \eg in fluorescence microscopy. However, the effective resolution of FTI is limited by the durability of biological elements when exposed to illuminating light. Over-exposed elements are indeed subject to \emph{photo-bleaching} and become unable to fluoresce. In this context, the acquisition of biological HS volumes based on sampling the Optical Path Difference (OPD) axis at Nyquist rate leads to unpleasant trade-offs between spectral resolution, quality of the HS volume, and light exposure intensity. In this paper we propose two variants of the FTI imager, \ie Coded Illumination-FTI (CI-FTI) and Structured Illumination FTI (SI-FTI), based on the theory of compressive sensing (CS). These schemes efficiently modulate light exposure temporally (in CI-FTI) or spatiotemporally (in SI-FTI). Leveraging a variable density sampling strategy recently introduced in CS, we provide near-optimal illumination strategies, so that the light exposure imposed on a biological specimen is minimized while the  spectral resolution is preserved. Our theoretical analysis focuses on two criteria: \emph{(i)} a trade-off between exposure intensity and the quality of the reconstructed HS volume for a given spectral resolution; \emph{(ii)} maximizing HS volume quality for a fixed spectral resolution and constrained light exposure budget. Our contributions can be adapted to an FTI imager without hardware modifications. The reconstruction of HS volumes from compressive sensing FTI measurements relies on an $\ell_1$-norm minimization problem promoting a spatiospectral sparsity prior. Numerically, we support the proposed methods on synthetic data and simulated compressive sensing measurements (from actual FTI measurements) under various scenarios. In particular, the biological HS volumes considered in this work can be reconstructed with a three-to-ten-fold reduction in the light exposure.
\end{abstract}

\section{Introduction}\label{sec:intro} 
A HyperSpectral (HS) volume refers to a three-dimensional (3D) data cube associated with stacks of 2D spatial images along the spectral axis, \ie one monochromatic image for each wavenumber. Each spatial location records the variation of transmitted or reflected light intensity along a large number of spectral bands from that location in a scene.

Since chemical elements have unique spectral signatures, referred to as their fingerprints, observing the dense spectral content of an HS volume, instead of an RGB image, provides accurate information about the constituents of a scene. This has made HS imaging appealing in a wide range of applications, \eg agriculture~\cite{lacar2001use}, food processing~\cite{GOWEN2007590}, chemical imaging~\cite{farley2007chemical}, and biology~\cite{Leonard2015,Yudovsky2010,lu2014medical}.

Due to the extremely large amount of data stored in an HS volume, also referred to as HS cube, the acquisition and processing of these volumes encounters significant challenges. For example, in satellite or airborne applications it is common to transmit volumes with size of several gigabytes to a ground station. Additionally, in biological applications, designers of HS imagers face a myriad of trade-offs related to photon efficiency of the sensors, acquisition time, achievable spatiospectral resolution, and cost. 

Compared to other techniques that often reach lower spectral resolutions\footnote{See,~\eg \cite{Sellar:05} for a comprehensive classification of different HS acquisition methods.}, Fourier Transform Interferometry (FTI) is an HS imaging modality that has shown promising results in the acquisition of high spectral resolution biological HS volumes\footnote{They contain hundreds of spectral samples in the range of visible (400\,nm-700\,nm) and/or near-infrared (700\,nm-1000\,nm) wavelengths.}, specially in fluorescence spectroscopy. In this application a biological specimen is stained with dyes that fluoresce with a unique spectrum when they are exposed to light. Since each fluorescent dye binds to a certain protein, this technique is highly used for determining the localization, concentration, and growth of certain target cells (\eg malignant tumors), whose automatic characterizations are improved at higher spectral resolution.

As described in  Sec.~\ref{sec:fti}, FTI consists in a Michelson interferometer~\cite{michelson1887relative} with one moving mirror controlling the optical path difference of the light. Besides, the raw FTI measurements are intermediate data, \ie the Fourier coefficients of the actual HS volume with respect to the spectral domain. Shannon-Nyquist sampling theory states that for a fixed wavenumber range, the spectral resolution of an HS volume is proportional to the number of FTI measurements (see Sec.~\ref{sec:fti}). This fact is simultaneously a blessing and a curse in fluorescence spectroscopy, where high spectral resolution is highly desirable for biologists. One may record more FTI measurements, thereby over-exposing fluorescent dyes. As a consequence, the ability of the dyes to fluoresce fades during the experiments. This phenomenon, that is due to the photochemical alteration of the dyes, is called \textit{photo-bleaching}~\cite{Diaspro2006}. Therefore, the resolution of an HS volume is limited by the durability of the fluorescent dyes when exposed to the illumination. An alternative is to reduce light intensity in a fashion that is inversely proportional to the increase in the number of FTI measurements, resulting in a low input Signal-to-Noise Ratio (SNR). Considering all these facts, our main motivation in this paper is to reduce photo-bleaching in fluorescence spectroscopy using FTI. However, this requires delicate compromises between the goals of achieving high spectral resolution and high SNR while minimizing the light exposure imposed on the specimen.

As described in Sec.~\ref{sec:Sec_03} and~\ref{sec:Sec_04}, we propose two compressive sensing-FTI imagers, \ie \textit{Coded Illumination-FTI} (CI-FTI) and \textit{Structured Illumination-FTI} (SI-FTI), that are near-optimal in the sense of minimizing the light exposure on the biological specimen. In CI-FTI scheme the light source is randomly activated at few time intervals (or slots) during the motion of the FTI moving mirror, while in SI-FTI, at each time slot, the illumination is activated on a programmed set of spatial locations (\eg using a spatial light modulator). Both models amount to an incomplete set of FTI measurements that must be processed to recover the biological HS volume. While a common, but inaccurate, recovery procedure consists in applying the inverse Fourier transform on the (properly zero-padded) incomplete measurements~\cite{candes2006robust} --- with undesirable side-lobes artifacts limiting the usability of the reconstructed HS volume for biomedical purposes --- we resort here to the theory of Compressive Sensing (CS) introduced by Donoho~\cite{donoho2006compressed}, Cand\`es and Tao~\cite{candes2006near} for reconstructing the HS volume. In CS, the recovery of a signal from a small set of (random) linear measurements is ensured if the number of measurements is higher than the intrinsic dimension of the signal (\eg its sparsity level \cite{mallat2008wavelet,foucart2013mathematical}). 
A critical aspect of the adaptation of CI-FTI and SI-FTI to CS theory concerns the way FTI measurements are sub-sampled. Classical results of CS theory, \ie based on Uniform Density Sampling (UDS) of the signal frequency domain, are inefficient in the setting of FTI; the incompatibility between the Fourier sensing and the regularity of the HS volume (\eg in a wavelet domain) imposes limited sensing compression ratio. However, as most informative FTI measurements are concentrated around low frequencies \cite{moshtaghpour2016}, we here consider the sampling strategy as a degree of freedom in the system model; we leverage the Variable Density Sampling (VDS) procedure introduced by Krahmer and Ward~\cite{krahmer2014stable} to optimize the illumination coding strategy of CI-FTI and SI-FTI.

Although VDS-based methods are new in the field of HS imaging~\cite{roman2014asymptotic,studer2012compressive}, they have received considerable interest in Magnetic Resonance Imaging (MRI) applications where the recorded data is the 2D Fourier transform of an image. Our work differs in the sense that the practical FTI devices entail 3D signals whose 1D Fourier transform (with respect to spectral domain) is recorded. 

\subsection{Contributions}
The main objective of this paper is to establish a compressive sensing scheme for FTI, \ie reducing the total number of measurement in the optical path difference (OPD) domain compared to common (Nyquist rate) FTI and still allowing reliable and robust estimation of HS data volumes. We show that this compressive FTI enables HS imaging when total light exposure is a critical parameter, such as in biological confocal microscopy where special fluorescent dyes inserted in a biological specimen suffer from photo-bleaching. In fact, our work also studies this context when light exposure is considered as a fixed resource to be distributed equally over all measurements.        

More precisely, our work is driven by the following questions: \emph{(i)} For a fixed spectral resolution, to which extent can we reduce the amount of light exposure on the observed specimen and still allow robust reconstruction of HS data? \emph{(ii)} For a fixed exposure intensity budget and spectral resolution, what is the best illumination allocation per OPD sample (or time interval) and per pixel? These questions are answered through these two main contributions.
\medskip

\noindent \emph{(i)~Coded/Structured Illumination:} We propose two novel compressive sensing-FTI frameworks, based on coded and structured illumination techniques. The first system codes globally the light source, \ie whether the full biological specimen is globally highlighted or not per time slot, whilst in the second system we allow the illumination of each spatial location of the specimen to be independently coded over time, \eg thanks to a spatial light modulator \emph{structuring} the illumination of the specimen, before entering the FTI. The reconstruction of the HS volume resorts to the theory of CS, by leveraging an HS low-complexity prior, \ie spectral or joint spatiospectral sparsity models. By using a particular VDS scheme~\cite{krahmer2014stable} we derive uniform (\ie valid for all biological HS volumes) near-optimal reconstruction bounds in the sense of minimum amount of light exposure and its distribution for both CI-FTI and SI-FTI models. Table~\ref{tab:related works}, as described in Sec.~\ref{sec:related works}, summarizes our first contribution with respect to the state-of-the-art HS acquisition techniques.
\ \\[-2mm]

\noindent \emph{(ii)~Biologically-friendly constrained light exposure coding:} In fluorescence spectroscopy the tolerance of the fluorescent dyes, in terms of the light exposure budget, can be fixed by the biologists, \eg according to the specification of fluorescent dyes. In this case, the total light exposure is a fixed resource that must be consumed exactly over the compressive FTI measurements, while ensuring good reconstruction performance. We propose illumination strategies, both for CI-FTI and SI-FTI, that meet this context. This is done by adapting the intensity delivered by the light source according to the sample complexity --- the minimum number of measurements required for a successful HS data recovery --- of these schemes. Interestingly, in this scenario, we observe that full (Nyquist rate) sampling does not achieve the best HS reconstruction quality.
\ \\[-2mm]

\noindent \emph{(iii)~Noise level estimation in VDS:} In CS, the reconstruction quality of an HS volume, \eg from an $\ell_1$-minimization constrained by an $\ell_2$-fidelity term (see \eqref{eq:BPDN}), critically depends on an accurate bound on the $\ell_2$-norm of the noise.  In VDS, this fidelity term integrates a random weighting matrix --- with entries directly connected to the frequency sampling density --- whose presence prevents the use of common noise power estimator (such as a $\chi^2$-bound for Gaussian noise energy). As a complementary contribution, we propose in Sec.~\ref{sec:vds-noise} a new estimator that only 
depends on the unweighted $\ell_2$- and $\ell_\infty$-norms of the noise (Thm.~\ref{prop:gen-bound-noise-vds}). Moreover, this bound is instantiated to the case of an additive Gaussian noise in Cor.~\ref{cor:noise-level-estim-Gauss-noise}. 
\medskip

To the best of our knowledge, this paper provides the first theoretical analysis of \emph{(i)}~a compressive FTI scheme, \emph{(ii)}~an application of VDS of the Fourier coefficients outside the context of MRI (see Sec.~\ref{sec:related works}), and \emph{(iii)}~an estimation of noise power in VDS framework. In the sequel, beside the synthetic experiments, we validate our contribution by performing the proposed compressive sensing approaches on the practical FTI measurements recorded with respect to the Nyquist sampling rate.

\subsection{Related works}\label{sec:related works}

Hyperspectral imaging is an appealing area of research for the application of CS theory, in terms of acquisition~\cite{gehm2007single,wagadarikar2008single,arguello2011code,golbabaee2012joint,golbabaee2012hyperspectral,studer2012compressive,roman2014asymptotic,moshtaghpour2016,Woringer:17,moshtaghpourcoded,moshtaghpour2017a} and processing~\cite{zhang2011joint,li2012compressive,martin2015hyca,golbabaee2013compressive}. For instance, prior works~\cite{gehm2007single,wagadarikar2008single,arguello2011code} have proposed three types of Coded Aperture Snapshot Imager (CASSI) whose working principles are affiliated to our structured illumination scheme. The imager in~\cite{gehm2007single}, termed as Dual Disperser CASSI (DD CASSI), contains two dispersive elements and an aperture in between. The input scene is measured as a 2D multiplexed spectral projection. The corresponding sensing operator is equivalent to a cyclic S-matrix (\ie an alteration of the Hadamard matrix) and the solution of the system is estimated from under-determined measurements using the expectation maximization algorithm. A variant of CASSI, called Single Disperser CASSI (SD CASSI), that includes only one dispersive element is presented in~\cite{wagadarikar2008single}, wherein a 2D multiplexed spatiospectral projection of the scene is measured and the HS volume is recovered via an $\ell_{1}$ optimization problem. In~\cite{arguello2011code}, the idea of single-shot CASSI in~\cite{gehm2007single} and~\cite{wagadarikar2008single} is extended to an imager that collects multiple shot measurements. 

Besides CASSI systems, \cite{golbabaee2012joint} and~\cite{golbabaee2012hyperspectral} described a theoretical CS scheme for HS data acquisition based on random Gaussian projections. 
They provided a reconstruction method based on a convex optimization constrained by an $\ell_2$-fidelity term with respect to the noisy compressive measurements. The convex objective function in~\cite{golbabaee2012joint} penalizes both the trace norm and the (spatial) Total Variation (TV) norm of the HS volume; whereas in~\cite{golbabaee2012hyperspectral} the TV norm is replaced by an $\ell_{2,1}$ mixed-norm of the data matrix inferring a joint-sparse structure in the HS volume. 

Recently, the feasibility of the theory of CS in fluorescence spectroscopy has been demonstrated in~\cite{studer2012compressive,roman2014asymptotic,moshtaghpour2016}. In~\cite{studer2012compressive} and~\cite{roman2014asymptotic}, the proposed microscope directly scans the spectral dimension (as opposed to the interferometric methods of this paper), meaning that the resolution of the HS volume depends on the resolution of the photo-detector. Both works target the problem of acquiring high spatial resolution HS volume from compressive measurements using Hadamard sensing model, \eg by structuring the wide-field illumination~\cite{studer2012compressive} as in SI-FTI. The authors in~\cite{studer2012compressive} compared the uniform and half-half density samplings in their implementation while~\cite{roman2014asymptotic} employs a multilevel sampling strategy~\cite{adcock2013breaking,adcock2015quest} for the same microscope in~\cite{studer2012compressive} in addition to taking into account the photonic noise as well as the lens Point Spread Function~(PSF). 

Compressive sensing-FTI was introduced in~\cite{moshtaghpour2016} by sub-sampling the Optical Path Difference (OPD) dimension of the interferometric signals. Preliminary work by the same authors in \cite{moshtaghpour2016,moshtaghpourcoded,moshtaghpour2017a} showed that variable density sampling, when promoting low-frequencies, captures more informative FTI samples. Recently, a versatile scheme for fluorescence microscopy has appeared in the literature~\cite{Woringer:17}. It applies when the acquisition of the HS volume is done consecutively in the axial domain such that at every axial point the light source is modulated by a waveform. The CS measurements are essentially formed by recording deterministically the primitive part of axial domain.  A post-processing approach to increase the spectral resolution of an FTI, \ie a Sagnac interferometer, coupled with structured light source is demonstrated in~\cite{choi2014depth}. The objective of~\cite{choi2014depth} is to mitigate the physical limitation for achieving a high resolution HS volume; nonetheless this approach significantly increases photo-bleaching. The authors in \cite{jin2017hyperspectral} have recently advocated a compressive sensing FTI system based on single pixel imaging technique \cite{duarte2008single}. The structured illumination coding in \cite{jin2017hyperspectral} was later improved in \cite{moshtaghpour2018compressive}, using VDS scheme of \cite{krahmer2014stable}, and in \cite{moshtaghpour2018compressive}, using sparsity structure of fluorochrome spectra and VDS scheme of \cite{adcock2016note}. On the other hand, the works~\cite{zhang2011joint,li2012compressive,martin2015hyca,golbabaee2013compressive} tackle the problem of computing the abundance fraction of endmembers in the context of blind~\cite{zhang2011joint} or non-blind unmixing~\cite{martin2015hyca} and~\cite{golbabaee2013compressive}. 

Our work provides a clear, fivefold contribution compared to the aforementioned studies. First, we target the problem of compressive HS volume acquisition; second, the physical model of the FTI device imposes the Fourier sensing operator (since it captures interferometric information) that brings about challenges different from ideal Gaussian sensing operator; third, we study the trade-off between the spectral resolution of the HS volume and photo-bleaching deterioration of the fluorescent dyes; forth, we apply VDS scheme to the field of interferometric HS imaging\footnote{Parallel to this study, we have presented an improved coding scheme for CI-FTI in \cite{moshtaghpour2018multilevel} by leveraging the sparsity structure of the fluorochrome spectra.}; and fifth, unlike single pixel FTI models \cite{jin2017hyperspectral,moshtaghpour2018compressive,moshtaghpour2018compressive2}, the proposed compressive FTI systems are based on 2D imaging sensors.

Let us finally mention that prior studies on variable density sampling in CS focus on specific signals, \eg 1D signals~\cite{puy2011variable}, 2D spatial images~\cite{adcock2013breaking,krahmer2014stable,boyer2015compressed,bigot2016analysis,candes2007sparsity,wang2010variable}. A VDS is computed in~\cite{puy2011variable} through a convex optimization problem whose performance depends on the parameter tuning. In \cite{wang2010variable}, a VDS scheme is adapted to statistical models of natural images. Based on the RIP-less analysis in~\cite{candes2011probabilistic}, a block sampling CS model is introduced in~\cite{bigot2016analysis,boyer2015compressed}, whose stable and robust recovery guarantee was later proved in~\cite{adcock2018oracle}. A novel multi-level sampling scheme is proposed in~\cite{adcock2013breaking,roman2014asymptotic} in order to exploit the level-sparsity feature of some natural images, \eg in MRI. A RIP-based guarantee is presented in~\cite{krahmer2014stable} based on the notion of \textit{random sampling in bounded orthonormal systems} introduced in \cite{rauhut2010compressive,foucart2013mathematical}. The sampling density in \cite{krahmer2014stable} is controlled by the bound of local coherence between sensing and sparsity bases. Besides, a number of empirical VDS methods exist in the literature including~\cite{lustig2008compressed,moshtaghpour2016,studer2012compressive}. An example of a 3D HS acquisition is considered in~\cite{roman2014asymptotic} using filter banks for scanning spectral domain. Our work explores another VDS-based acquisition; we study compressive FTI methods where the spectral information is multiplexed through a Fourier (interferometric) transform (see Sec.~\ref{sec:fti}).

%%%%%%%%%%%%%%%%%%%%%%%%%%%%%%%%%%%%%%%%%%%%%%%%%%%%%%%%%%%%%%%%%%%%%%%%%%% 
\begin{table}[t]
  {\footnotesize
    \begin{center}
      \noindent
      \scalebox{0.85}{\begin{tabular}{| l | m{9mm} | m{9mm} | c| c |c|c|c|c|c|c|}
                        \hline
                        & \multicolumn{2}{l|}{\textbf{Spectrometry}} & \multicolumn{3}{l|}{\textbf{Study type}} & \multicolumn{3}{l|}{\textbf{Prior model}} & \multicolumn{2}{l|}{\textbf{Sampling}}\\
                        \cline{2-11}
                        & \centering \rotatebox[origin=c]{90}{~Interferometric~} & \centering\rotatebox[origin=c]{90}{Direct} & \rotatebox[origin=c]{90}{Hardware} & \rotatebox[origin=c]{90}{Numerical} & \rotatebox[origin=c]{90}{Analytical} & \rotatebox[origin=c]{90}{3D low-rank} & \rotatebox[origin=c]{90}{~Spectral sparsity~} & \rotatebox[origin=c]{90}{Spatial sparsity} & \rotatebox[origin=c]{90}{Operator}& \rotatebox[origin=c]{90}{Strategy}\\
                        \hline
                        M. Gehm \etal~\cite{gehm2007single} &  & \centering $\checkmark$ &  $\checkmark$ &  &  &  & DW  & DW & RSE & UDS\\
                        \hline
                        A. Wagadarikar \etal~\cite{wagadarikar2008single} &  & \centering $\checkmark$ & $\checkmark$ &  &  &  & Dirac & DW & RSE & UDS\\
                        \hline
                        H. Arguello \etal~\cite{arguello2011code} &  & \centering $\checkmark$ &  $\checkmark$ &  &  &  & DC & DW & RSE & UDS\\
                        \hline
                        M. Golbabaee \etal~\cite{golbabaee2012joint} &  & \centering $\checkmark$ &  &  $\checkmark$ &  & $\checkmark$ & Dirac & TV & RGM &  --- \\
                        \hline
                        M. Golbabaee \etal~\cite{golbabaee2012hyperspectral} &  & \centering $\checkmark$ &  & $\checkmark$ &  &  $\checkmark$ & Dirac & DW & RGM &  --- \\
                        \hline
                        V. Studer \etal~\cite{studer2012compressive} &  & \centering $\checkmark$ & $\checkmark$ &  &  &  & Dirac & DW & RHE & Half-half\\
                        \hline
                        B. Roman \etal~\cite{roman2014asymptotic} &  & \centering $\checkmark$ &  & $\checkmark$ &  &  & Dirac & DW & RHE & Multi-level\\
                        \hline
                        A. Moshtaghpour \etal~\cite{moshtaghpour2016} & \centering $\checkmark$ &  &  & $\checkmark$ &  &  & DW & DW & RFE & VDS\\
                        \hline
                        M. Woringer \etal~\cite{Woringer:17} &  & \centering $\checkmark$ & $\checkmark$ &  $\checkmark$ &  &  & Dirac & Dirac & LFE & Deterministic\\
                        \hline
                        \rowcolor{mycolor1}
                        CI-FTI & \centering $\checkmark$ &  &  & $\checkmark$ & $\checkmark$ &  & Dirac & DW & RFE & VDS\\
                        \hline
                        \rowcolor{mycolor1}
                        SI-FTI & \centering $\checkmark$ &  &  & $\checkmark$ & $\checkmark$ &  & DW & DW & RFE & VDS\\
                        \hline\hline
                        \multicolumn{5}{|l||}{RSE: Random S-matrix Ensembles} & \multicolumn{6}{l|}{RGM: Random Gaussian Mixtures}
                        \\
                        \multicolumn{5}{|l||}{RHE: Random Hadamard Ensembles} & \multicolumn{6}{l|}{RFE: Random Fourier Ensembles}
                        \\
                        \multicolumn{5}{|l||}{LFE: Low-pass Fourier Ensembles} & \multicolumn{6}{l|}{TV: Total Variation}
                        \\
                        \multicolumn{5}{|l||}{DC: Discrete Cosine} & \multicolumn{6}{l|}{DW: Discrete Wavelet}
                        \\
                        \multicolumn{5}{|l||}{UDS: Uniform Density Sampling} & \multicolumn{6}{l|}{VDS: Variable Density Sampling}
                        \\
                        \hline
                      \end{tabular}
                   }
                  \end{center}
               }
                \caption{Comparison of the related acquisition modalities in HS imaging. In direct spectrometry the spectral dimension is recorded by either wavelength filtering, grating, or prism. Half-half sampling strategy means half of the $M$ measurements is taken from the lowest frequency components and the other half is taken by randomly sub-sampling the rest of the remaining samples. Note that compressive FTI systems designed based on single pixel imaging technique, \eg \cite{jin2017hyperspectral,moshtaghpour2018compressive,moshtaghpour2018compressive2}, are not reported in this table.}
                \label{tab:related works}
              \end{table}
              
%%%%%%%%%%%%%%%%%%%%%%%%%%%%%%%%%%%%%%%%%%%%%%%%%%%%%%%%%%%%%%%%%%%%%%%%%%%%%%% 
\subsection{Paper Organization}
This paper is organized as follows. We start by providing preliminary introduction to the theory of CS, with an emphasis on sensing with random ensembles of orthonormal systems, and the acquisition model of a conventional FTI imager. The problem formulation and theoretical analysis of CI-FTI and SI-FTI frameworks are presented in Sec.~\ref{sec:Sec_03} and Sec.~\ref{sec:Sec_04}, respectively. The concept of constrained light exposure budget for both CI-FTI and SI-FTI is revealed in Sec.~\ref{sec:Sec_05}. Sec.~\ref{sec:Sec_06} focuses on a comprehensive synthetic and experimental tests of our schemes demonstrating the  efficiency of the proposed methods when the fluorochromes in a biological specimen undergone photo-bleaching. Finally, we conclude with some remarks and perspectives in Sec.~\ref{sec:Sec_07}.
\subsection{Notations}
\label{sec:notations}

Domain dimensions are represented by capital letters, \eg $K,M,N$. Vectors, matrices, and data cubes are denoted by bold symbols. For a cube $\bs{\mathcal{U}} \in \bb{C}^{N_3 \times N_1 \times N_2}$,  $\bs{U} = (\bs{u}_1, \cdots, \bs{u}_{N_1N_2}) \in \bb{C}^{N_3 \times N_1N_2}$ corresponds to the unfolded matrix representation of $\bs{\cl U}$, \ie compacting its two last dimensions, while the vectorization of $\bs{\cl U}$ reads $\bs{u} = {\rm vec}(\bs{U}) := (\bs{u}_1^\top, \cdots, \bs{u}_{N_1N_2}^\top)^\top \in \bb{C}^{N_1N_2N_3}$. When this is clear from the context, we assimilate 3D data in $\bb{C}^{N_1\times N_2 \times N_3}$ with their vector and matrix representations, \eg identifying $\bs{\cl U}$ with $\bs{U}$. For any matrix (or vector) $\bs V \in \bb C^{M\times N}$, $\bs V^\top$ and $\bs V^*$ represent the transposed and the conjugate transpose of $\bs V$, respectively, and $\bs{V} \otimes \bs{W}$ denotes the Kronecker product of two matrices $\bs V$ and $\bs W$. The $\ell_p$-norm of $\bs{u}$ reads $\|\bs{u}\|_p := (\sum_i |u_i|^p)^{1/p}$, for $p\geq 1$, with $\|\bs u\| :=\|\bs u\|_2$. The identity matrix of dimension $N$ is represented as $\bs{I}_N$. The set of indices ranging between 1 and $N$ is denoted $\range{N}:=\{1, \cdots, N\}$. By an abuse of convention, except if expressed differently, we consider that a set (or a subset) of indices is actually a \emph{multiset}, \ie we allow for multiple instances for each of its elements; the set cardinality thus considers the total number of (non-unique) multiset elements. For a subset $\Omega \subset \range{N}:=\{1, \cdots, N\}$ of cardinality $|\Omega|$, the restriction of a vector $\bs{u} \in \bb{C}^N$ (or a matrix $\bs{\Phi}\in \bb{C}^{M\times N}$) to the components (or the columns) indexed in $\Omega$ is denoted by $\bs{u}_\Omega$ (or $\bs{\Phi}_\Omega$). We always consider that the restriction operator has a higher precedence than the adjoint or transposition operators, \eg $\bs{\Phi}^*_\Omega:=(\bs{\Phi}_\Omega)^*$. Finally, given a random event $\cl E$, $\bb{P}(\cl E)$ denotes the probability that it occurs, and we use the asymptotic relations $f \lesssim g$ (or $f \gtrsim g$), also denoted by $f = O(g)$ (resp. $f = \Omega(g)$), if $f \le c\,g$ (resp. $g \le c\, f$) for two functions $f$ and $g$ and some value $c>0$ independent of their parameters.

\section{Preliminaries on Compressive Sensing}             
\label{sec:Sec_02}

This section provides the main tools from CS theory and its recent extension to variable density sampling of an orthogonal sensing basis (\eg Fourier). 

\subsection{From Uniform to Variable Density Sampling}
\label{sec:compr-uds-vds}

When restricted to signal sensing with random orthonormal basis ensembles~\cite{donoho2006compressed,candes2006near}  (\eg random Fourier or Hadamard ensembles), CS theory targets the recovery of a ``low-complexity'' signal $\bs x \in \bb C^N$ from a vector of noisy measurements $\bs{y} = \bs{\Phi}^*_\Omega \bs{x} + \bs n$, where $\bs{\Phi} \in \bb{C}^{N \times N}$ is an orthonormal sensing system and $\Omega \subset \range{N}$ is a set of indices chosen at random with $|\Omega| =M \ll N$, and $\bs n$ accounts for some additive observation noise. The low-complexity nature of $\bs x$ generally amounts to assuming it $K$-sparse, \ie with $\|\bs x\|_0 := |{\rm supp}(\bs{x})|\le K$, or at least well approximated by a sparse signal, \ie compressible.

In order to estimate $\bs x$ from $\bs y$, the \textit{restricted isometry property} (RIP)~\cite{candes2005decoding} is a sufficient condition on the matrix $\bs{\Phi}^*_\Omega$ for most classes of algorithms, \eg convex optimization~\cite{candes2005decoding}, greedy \cite{Zhangopm}, and thresholding \cite{blumensath2009iterative} strategies.
\begin{definition}\label{def:RIP}
  Given $K \le N$, the restricted isometry constant $\delta_K$ associated with $\bs{A} \in \bb{C}^{M \times N}$ is the smallest number $\delta$ for which 
  \begin{equation}
    (1-\delta)\|\bs{x}\|^2 \leq \|\bs{A} \bs{x}\|^2 \le (1+\delta)\|\bs{x}\|^2\label{eq:RIP}
  \end{equation}
  holds for all $K$-sparse vectors $\bs{x}\in \bb{C}^N$. Alternatively, if~\eqref{eq:RIP} holds for  $\delta=\delta_K$, we say that  $\bs{A}$ satisfies the RIP of order $K$ and constant~$\delta_K$. 
\end{definition}
When the RIP holds and the observation noise is bounded, \ie $\|\bs n\| \leq \varepsilon$ for some $\varepsilon > 0$, the Basis Pursuit DeNoise (BPDN) program expressed as 
\begin{equation}
  \hat{\bs{x}}\ =\ \argmin_{\bs{u} \in \bb{C}^{N}} \|\bs{u}\|_1\ \st\ \| \bs{y}-\bs{A} \bs{u}\| \le \varepsilon, \label{eq:BPDN old}
\end{equation}
provides an accurate estimate of the original signal. 
\begin{proposition}[Prop. 4.2~\cite{krahmer2014stable}]\label{prop:rip old}
  Assume that the restricted isometry constant $\delta_{5K}$ of $\bs{A} \in \bb{C}^{M \times N}$ satisfies\footnote{There exist alternative versions of this condition on restricted isometry constant, \eg in \cite{cai2014sparse} and \cite[Theorem 5]{foucart2016flavors}, that provide sharper results.} $\delta_{5K} < \frac{1}{3}$. Then, for all $\bs{x} \in \bb{C}^N$ observed through the noisy CS model $\bs{y} = \bs{A} \bs{x}+\bs{n}$ with  $\|\bs{n}\| \le\varepsilon$, the solution $\hat{\bs x}$ of~\eqref{eq:BPDN old} satisfies 
  \begin{equation*}
    \|\bs{x}-\hat{\bs{x}}\| \le \textstyle \frac{2\sigma_K(\bs{x})_1}{\sqrt{K}} + \varepsilon,
  \end{equation*}
  where $\sigma_K(\bs{x})_1 :=\|\bs{x}-\bb H_K(\bs{x})\|_1$ is the best $K$-term approximation error (in the $\ell_1$ sense), and $\bb H_K$ is the hard thresholding operator that maps all but the $K$ largest-magnitude entries of the argument to zero. In particular, the reconstruction is exact, \ie $\hat{\bs x}=\bs x$, if $\bs x$ is $K$-sparse and $\varepsilon$ = 0.
\end{proposition}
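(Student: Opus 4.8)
The statement is a standard RIP-to-recovery guarantee for BPDN, so the plan is to control the error vector $\bs h:=\hat{\bs x}-\bs x$ through two geometric constraints together with the RIP. Concretely, I would combine a \emph{tube constraint} (that $\bs A\bs h$ is small, since both $\bs x$ and $\hat{\bs x}$ are feasible) with a \emph{cone constraint} (that $\bs h$ is nearly concentrated on the best $K$-term support of $\bs x$, forced by $\ell_1$-optimality); the RIP then prevents $\bs h$ from having a large component off that support. An equivalent, cleaner packaging is to first deduce a \emph{robust null space property} from $\delta_{5K}<\tfrac13$ and then read off the bound, but I would follow the more self-contained Cand\`es-type route. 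The very clean constants in the statement (a $2$ on the model-mismatch term and a $1$ on $\varepsilon$) signal that a sharp analysis is required, so I would keep the Cai--Zhang polytope representation of the cone in reserve in case the elementary block estimates turn out too lossy.

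First I would fix $S=\supp(\bb H_K(\bs x))\subset\range{N}$, the indices of the $K$ largest-magnitude entries of $\bs x$, so that $\|\bs x_{S^\compl}\|_1=\sigma_K(\bs x)_1$. Since $\|\bs y-\bs A\bs x\|=\|\bs n\|\le\varepsilon$, the signal $\bs x$ is feasible for~\eqref{eq:BPDN old}, whence $\|\hat{\bs x}\|_1\le\|\bs x\|_1$; splitting this $\ell_1$-inequality over $S$ and $S^\compl$ gives the cone constraint $\|\bs h_{S^\compl}\|_1\le\|\bs h_S\|_1+2\sigma_K(\bs x)_1$. At the same time, feasibility of $\bs x$ and $\hat{\bs x}$ yields, by the triangle inequality, the tube constraint $\|\bs A\bs h\|\le 2\varepsilon$.

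The technical core is to convert these two constraints into an $\ell_2$-bound on $\bs h$. I would sort the entries of $\bs h_{S^\compl}$ by decreasing magnitude and partition $S^\compl$ into consecutive blocks $S_1,S_2,\dots$ of size $2K$, so that the head $S\cup S_1$ has size $3K$ while pairing $\bs h_{S\cup S_1}$ against any tail block $\bs h_{S_j}$ touches $3K+2K=5K$ coordinates --- this is exactly where the order $5K$ appears. Applying the RIP to $\bs h_{S\cup S_1}$, using $\bs A\bs h_{S\cup S_1}=\bs A\bs h-\sum_{j\ge2}\bs A\bs h_{S_j}$ and expanding $\|\bs A\bs h_{S\cup S_1}\|^2$, I would bound the term in $\bs A\bs h$ by the tube constraint and each cross term through the RIP approximate-orthogonality estimate $|\scp{\bs A\bs u}{\bs A\bs v}|\le\delta_{5K}\|\bs u\|\,\|\bs v\|$ for disjointly supported $\bs u,\bs v$. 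The shifting inequality $\sum_{j\ge2}\|\bs h_{S_j}\|\le(2K)^{-1/2}\|\bs h_{S^\compl}\|_1$, combined with the cone constraint, then re-expresses the tail through $\|\bs h_{S\cup S_1}\|$ and $\sigma_K(\bs x)_1$, yielding a self-bounding inequality for $\|\bs h_{S\cup S_1}\|$.

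I expect the main obstacle to be precisely this constant bookkeeping. Solving the self-bounding inequality needs the net multiplier of $\|\bs h_{S\cup S_1}\|$ to remain strictly below $1$, which is what $\delta_{5K}<\tfrac13$ secures; compressing the final constants down to the stated $2$ and $1$ is the step most likely to require the sharper polytope argument rather than the crude block splitting. Once $\|\bs h_{S\cup S_1}\|$ is controlled, I would finish via $\|\bs h\|\le\|\bs h_{S\cup S_1}\|+\sum_{j\ge2}\|\bs h_{S_j}\|$, reinserting the shifting and cone estimates and collecting terms to reach $\|\bs x-\hat{\bs x}\|\le\tfrac{2\sigma_K(\bs x)_1}{\sqrt K}+\varepsilon$. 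The exact-recovery statement is then immediate: a $K$-sparse $\bs x$ has $\sigma_K(\bs x)_1=0$, and with $\varepsilon=0$ the right-hand side collapses to $0$, forcing $\hat{\bs x}=\bs x$.
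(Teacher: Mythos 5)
The paper never proves this proposition: it is imported, statement and constants included, from Krahmer and Ward (their Prop.~4.2), so there is no in-paper argument to measure yours against. Your blind reconstruction follows the standard tube-and-cone route, and its skeleton is sound: the cone constraint $\|\bs h_{S^\compl}\|_1\le\|\bs h_S\|_1+2\sigma_K(\bs x)_1$ and the tube constraint $\|\bs A\bs h\|\le 2\varepsilon$ are both correct, and your block bookkeeping is internally consistent --- a head $S\cup S_1$ of size $3K$, tail blocks of size $2K$, cross terms touching $5K$ coordinates (which is exactly where $\delta_{5K}$ enters), and the shifting inequality with factor $(2K)^{-1/2}$. The self-bounding inequality closes because the net multiplier $1-\delta_{5K}(1+2^{-1/2})$ stays positive under $\delta_{5K}<\tfrac13$, so this route does deliver a bound of the form $C_1\,\sigma_K(\bs x)_1/\sqrt K+C_2\,\varepsilon$.

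The one genuine shortfall is the constants, and you correctly diagnose it but do not repair it. Pushing your own estimates to the end at $\delta_{5K}\to\tfrac13$ gives roughly $C_1\approx 3.3$ and $C_2\approx 9.1$; indeed the $\varepsilon$-coefficient from this route can never drop below $2(1+2^{-1/2})\approx 3.4$ even at $\delta=0$, because the $2\varepsilon$ tube constraint is multiplied by the $(1+2^{-1/2})$ factor in the final recombination $\|\bs h\|\le\|\bs h_{S\cup S_1}\|+\sum_{j\ge2}\|\bs h_{S_j}\|$. So the sketch as written proves the proposition only up to larger absolute constants; certifying the displayed coefficients $2$ and $1$ would require actually executing the sharper Cai--Zhang-type argument you hold "in reserve" (or simply quoting the source, as the paper does). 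This is worth noting because the paper propagates the literal factors $2$ and $1$ into \eqref{eq:l2-l1-inst-optimality}, \eqref{eq:cifti error final} and \eqref{eq:sifti final error}; if one is content with unspecified absolute constants, your proposal is essentially complete.
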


More generally, if the signal $\bs x$ has a sparse or compressible representation in a general orthonormal basis $\bs \Psi \in \bb{C}^{N \times N}$ (\eg in a wavelet basis), \ie $\bs x = \bs \Psi \bs s$ where the vector of coefficients $\bs s \in \bb{C}^{N}$ is $K$-sparse or compressible, one must ensure that $\bs{\Phi}^*_\Omega\bs{\Psi}$ respects the RIP. 

Interestingly, according to~\cite{rauhut2010compressive}, choosing each of the $M$ elements of $\Omega$ uniformly at random in $\range{N}$, \ie according to a Uniform Density Sampling (UDS) gives that, with probability exceeding $1-N^{-c\log^3(K)}$, the restricted isometry constant of the matrix $\bs A = \sqrt{\frac{N}{M}} \bs{\Phi}^*_\Omega\bs{\Psi}$ satisfies $\delta_K \le\delta$, if 
\begin{equation}
  M \gtrsim \delta^{-2}\mu^2 K \log^3(K)\log(N),\label{eq:rip_for_onb}
\end{equation} 
where $\mu/ \sqrt{N}$ is an upper bound on the \textit{mutual coherence}\footnote{In some references, \eg~\cite{roman2014asymptotic}, it is defined as $\mu^{\rm mut}(\bs{\Phi},\bs{\Psi}) :=\max_{\nell,j} |\langle \bs{\phi}_\nell,\bs{\psi}_j\rangle|$.} of $\bs{\Phi}^*\bs{\Psi}$, \ie 
$$
\mu^{\rm mut}(\bs{\Phi}^*\bs{\Psi}) := \max_{\nell,j}|(\bs{\Phi}^*\bs{\Psi})_{\nell,j} |\le \mu /\sqrt{N}.
$$ 

However, the impact of this result is unfortunately limited in the case where, for instance, $\bs{\Phi}$ and $\bs{\Psi}$ are the discrete Fourier and wavelet bases, respectively. This is in fact an important special combination appearing in, \eg Magnetic Resonance Imaging~\cite{adcock2016note}, radio-interferometry~\cite{wiaux2009compressed} and in the compressive FTI scheme considered in this paper (see Sec.~\ref{sec:Sec_03} and Sec.~\ref{sec:Sec_04}). The coherence in this case is $\mu^{\rm mut}(\bs{\Phi}^*\bs{\Psi}) = 1$ and~\eqref{eq:rip_for_onb} states that all samples are required for reconstruction ($M \gtrsim N$), although the signal can be highly sparse in the wavelet basis. 

Fortunately, this limitation is actually induced by the way $\Omega$ is built, \ie according to a UDS of the columns of $\bs \Phi$. Adopting a Variable Density Sampling (VDS) scheme allows us to break this ``coherence barrier''~\cite{puy2011variable,adcock2013breaking,roman2014asymptotic,wang2010variable, krahmer2014stable}. Of interest for our analysis,~\cite{krahmer2014stable} shows in particular that the number of measurements in~\eqref{eq:rip_for_onb} can be reduced by assigning higher sampling probability to the columns of $\bs{\Phi}$ that are highly coherent with the columns the of sparsity basis $\bs{\Psi}$. This probability is determined by the \emph{local coherence} of the sensing basis over the sparsity basis, \ie the quantity $\mu^{{\rm loc}}_\nell(\bs \Phi, \bs \Psi) := \mu^{{\rm loc}}_\nell(\bs \Phi^* \bs \Psi)$ 
with 

\begin{equation}
  \label{eq:loc-coh}
  \mu^{{\rm loc}}_\nell(\bs{A}) := \max_{1 \le j\le N} |a_{\nell,j}|,\quad \bs A \in \bb C^{N \times N}. 
\end{equation}

\begin{proposition}[RIP for VDS~{\cite[Thm. 5.2]{krahmer2014stable}}]\label{prop:variable_density_sampling}
  Let $\bs{\Phi} \in \bb{C}^{N \times N}$ and $\bs{\Psi} \in \bb{C}^{N \times N}$ be orthonormal sensing and sparsity bases, respectively, with $\mu^{{\rm loc}}_\nell(\bs \Phi, \bs \Psi) \le \kappa_\nell$ for some values $\kappa_\nell \in \bb R_+$. Let us define $\bs \kappa := (\kappa_1,\cdots, \kappa_N)^\top$. Suppose $K \gtrsim \log(N)$,
  \begin{equation}
    M \gtrsim \delta^{-2}\|\bs{\kappa}\|^2 K \log^3(K) \log(N),\label{eq:rip_for_onb_vds}
  \end{equation}
  and choose $M$ (possibly not distinct) indices $\nell \in \Omega \subset \range{N}$ \iid with respect to the probability distribution $p$ on $\range{N} $ given by 
  \begin{equation}
 \ts p(\nell) := \frac{\kappa_\nell^2}{\|\bs{\kappa}\|^2}.\label{eq:vds pmf}
  \end{equation}
  Consider the diagonal matrix $\bs{D}={\rm diag}(\bs{d}) \in \bb{R}^{M \times M}$ with $d_j = {1}/{\sqrt{p(\Omega_j)}}$, $j \in \range{M}$. Then with probability exceeding $1-N^{-c\log^3(K)}$, the restricted isometry constant $\delta_K$ of the preconditioned matrix $\frac{1}{\sqrt{M}}\bs{D}\bs{\Phi}^*_\Omega\bs\Psi$ satisfies $\delta_K \le\delta$.
\end{proposition}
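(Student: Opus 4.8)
The plan is to reduce the statement to the standard RIP guarantee for \emph{random sampling in bounded orthonormal systems} (BOS)~\cite{rauhut2010compressive,foucart2013mathematical}; the diagonal preconditioner $\bs D$ and the sampling law~\eqref{eq:vds pmf} are engineered precisely so that the preconditioned partial matrix $\tfrac1{\sqrt M}\bs D\bs\Phi^*_\Omega\bs\Psi$ becomes such a BOS sampling matrix with boundedness constant exactly $\|\bs\kappa\|$. First I would set $\bs A := \bs\Phi^*\bs\Psi \in \bb C^{N\times N}$ and record that, as a product of two unitary matrices, $\bs A$ is itself unitary; in particular its columns are orthonormal, so $\sum_{\nell=1}^N \overline{a_{\nell,j}}\, a_{\nell,j'} = \delta_{j,j'}$ for all $j,j'\in\range N$. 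By~\eqref{eq:loc-coh}, the hypothesis $\mu^{\rm loc}_\nell(\bs\Phi,\bs\Psi)\le\kappa_\nell$ is exactly the row bound $\max_{j}|a_{\nell,j}|\le\kappa_\nell$.

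Next I would renormalize the rows of $\bs A$ by the sampling density and define the system of functions $g_j:\range N\to\bb C$, $g_j(\nell) := a_{\nell,j}/\sqrt{p(\nell)}$, for $j\in\range N$. Two properties must be checked, treating $p$ in~\eqref{eq:vds pmf} as a probability measure on $\range N$. \emph{(Orthonormality with respect to $p$.)} Using the column-orthonormality of $\bs A$,
\[
\sum_{\nell=1}^N p(\nell)\,\overline{g_j(\nell)}\,g_{j'}(\nell) = \sum_{\nell=1}^N \overline{a_{\nell,j}}\, a_{\nell,j'} = \delta_{j,j'},
\]
so $\{g_j\}_{j=1}^N$ is orthonormal in $L^2(\range N, p)$. \emph{(Boundedness.)} Since $\sqrt{p(\nell)}=\kappa_\nell/\|\bs\kappa\|$, for every $\nell$ and $j$,
\[
|g_j(\nell)| = \frac{|a_{\nell,j}|}{\sqrt{p(\nell)}} \le \frac{\kappa_\nell}{\kappa_\nell/\|\bs\kappa\|} = \|\bs\kappa\|,
\]
so $\{g_j\}$ is a BOS with constant $B = \|\bs\kappa\|$. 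This cancellation between the local-coherence bound and the matching sampling density is the heart of the argument and the only place where the precise form of~\eqref{eq:vds pmf} enters.

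Finally I would identify the matrix in the statement with a normalized BOS sampling matrix. Drawing $\Omega=(\Omega_1,\dots,\Omega_M)$ \iid from $p$, the $(k,j)$ entry of $\tfrac1{\sqrt M}\bs D\bs\Phi^*_\Omega\bs\Psi$ equals $\tfrac1{\sqrt M}\,d_k\, a_{\Omega_k,j} = \tfrac1{\sqrt M}\,g_j(\Omega_k)$, which is exactly the sampling matrix of the BOS $\{g_j\}$ at the random nodes $\Omega_1,\dots,\Omega_M$. Invoking the RIP theorem for a BOS with boundedness constant $B=\|\bs\kappa\|$ then gives $\delta_K\le\delta$ with probability at least $1-N^{-c\log^3(K)}$ as soon as $M\gtrsim\delta^{-2}B^2K\log^3(K)\log(N)=\delta^{-2}\|\bs\kappa\|^2K\log^3(K)\log(N)$ and $K\gtrsim\log N$, which are precisely hypotheses~\eqref{eq:rip_for_onb_vds} and $K\gtrsim\log(N)$.

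The genuinely hard analytic content --- the concentration of the random Gram matrix of a BOS around the identity, via symmetrization, Dudley's entropy integral and chaining --- sits entirely inside the cited BOS theorem. Consequently the work here reduces to the exact constant bookkeeping in the boundedness step, and I expect that cancellation $|a_{\nell,j}|/\sqrt{p(\nell)}\le\|\bs\kappa\|$, rather than any new probabilistic estimate, to be the crux of the proof.
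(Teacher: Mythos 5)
Your proposal is correct and follows exactly the route of the cited result: the paper does not reprove Prop.~\ref{prop:variable_density_sampling} but imports it from Krahmer and Ward, whose argument is precisely the reduction you describe — precondition by $\bs D$, observe that $g_j(\nell)=a_{\nell,j}/\sqrt{p(\nell)}$ forms an orthonormal system in $L^2(\range{N},p)$ with uniform bound $\|\bs\kappa\|$ thanks to the cancellation $|a_{\nell,j}|/\sqrt{p(\nell)}\le\|\bs\kappa\|$, and invoke the RIP theorem for bounded orthonormal systems from \cite{rauhut2010compressive,foucart2013mathematical}. The only cosmetic point worth noting is the degenerate case $\kappa_\nell=0$, where $p(\nell)=0$ and $g_j(\nell)$ is formally undefined, but such rows are then identically zero and sampled with probability zero, so nothing breaks.
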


We will see later that this VDS offers new means for compressive FTI. Note that, in practice, although $\bs\Phi^*_\Omega$ models the analog/optical sensing procedure of a growing number of CS applications (\eg in compressive MRI or radio-interferometry), the conditioning by $\bs{D}$ proposed in Prop.~\ref{prop:variable_density_sampling} is rather achieved by post-processing (digitally) the acquired data~\cite{krahmer2014stable}. Therefore, our estimation of the HS data will rather follow this straightforward adaptation of Prop.~\ref{prop:rip old}, also tuned to an analysis-based sparsity framework~\cite{elad2007analysis} that better suits the rest of our developments.
\\
\begin{proposition}\label{prop:RIP}
  Assume that the restricted isometry constant $\delta_{5K}$ of $\frac{1}{\sqrt{M}}\bs{D}\bs{\Phi}^*_\Omega \bs{\Psi}$ $\in \bb{C}^{M \times N}$ satisfies $\delta_{5K} < \frac{1}{3}$. Let $\bs{x} \in \bb{C}^N$ be a signal observed by the noisy sensing model $\bs{y} = \bs{\Phi}^*_\Omega \bs{x}+\bs{n}$ with $\|\frac{1}{\sqrt{M}}\bs{D} \bs{n}\| \le\varepsilon$. Then
  \begin{equation}
    \ts \hat{\bs{x}} = \argmin_{\bs{u} \in \bb{C}^{N}} \|\bs{\Psi}^*\bs{u}\|_1\ \st \ \|\frac{1}{\sqrt{M}}\bs{D}( \bs{y}-\bs{\Phi}^*_\Omega \bs{u})\| \le \varepsilon, \label{eq:BPDN}
  \end{equation}
  satisfies 
  \begin{equation}
    \label{eq:l2-l1-inst-optimality}
    \ts \|\bs{x}-\hat{\bs{x}}\| \le \frac{2\sigma_K(\bs{\Psi}^*\bs{x})_1}{\sqrt{K}} + \varepsilon.
  \end{equation}
\end{proposition}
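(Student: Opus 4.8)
The plan is to reduce the analysis-based recovery in \eqref{eq:BPDN} to the synthesis-based guarantee of Prop.~\ref{prop:rip old} by a unitary change of variables. First I would set $\bs{A} := \frac{1}{\sqrt{M}}\bs{D}\bs{\Phi}^*_\Omega\bs{\Psi} \in \bb{C}^{M \times N}$, which is exactly the matrix whose restricted isometry constant is assumed to satisfy $\delta_{5K} < \frac{1}{3}$. Introducing the representation $\bs{s} := \bs{\Psi}^*\bs{x}$, orthonormality of $\bs{\Psi}$ gives $\bs{x} = \bs{\Psi}\bs{s}$. Premultiplying the sensing model by $\frac{1}{\sqrt{M}}\bs{D}$ then yields $\tilde{\bs{y}} := \frac{1}{\sqrt{M}}\bs{D}\bs{y} = \bs{A}\bs{s} + \tilde{\bs{n}}$ with $\tilde{\bs{n}} := \frac{1}{\sqrt{M}}\bs{D}\bs{n}$, and the hypothesis $\|\frac{1}{\sqrt{M}}\bs{D}\bs{n}\| \le \varepsilon$ reads $\|\tilde{\bs{n}}\| \le \varepsilon$.

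Next I would rewrite the program \eqref{eq:BPDN} under the change of variables $\bs{v} = \bs{\Psi}^*\bs{u}$ (equivalently $\bs{u} = \bs{\Psi}\bs{v}$), a bijection on $\bb{C}^N$ because $\bs{\Psi}$ is unitary. Under this substitution the objective $\|\bs{\Psi}^*\bs{u}\|_1$ becomes $\|\bs{v}\|_1$, while the constraint $\|\frac{1}{\sqrt{M}}\bs{D}(\bs{y} - \bs{\Phi}^*_\Omega\bs{u})\| \le \varepsilon$ becomes $\|\tilde{\bs{y}} - \bs{A}\bs{v}\| \le \varepsilon$. Hence \eqref{eq:BPDN} is transformed into precisely the BPDN program \eqref{eq:BPDN old} with sensing matrix $\bs{A}$, observation $\tilde{\bs{y}}$, and noise level $\varepsilon$. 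Denoting by $\hat{\bs{v}}$ the minimizer of the transformed program, the bijection gives $\hat{\bs{x}} = \bs{\Psi}\hat{\bs{v}}$, so that $\hat{\bs{v}} = \bs{\Psi}^*\hat{\bs{x}}$.

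Applying Prop.~\ref{prop:rip old} to the triple $(\bs{A}, \bs{s}, \tilde{\bs{y}})$ then yields $\|\bs{s} - \hat{\bs{v}}\| \le \frac{2\sigma_K(\bs{s})_1}{\sqrt{K}} + \varepsilon$. It remains only to translate this bound back to the signal domain. Since $\bs{\Psi}$ is unitary it preserves the $\ell_2$-norm, so $\|\bs{x} - \hat{\bs{x}}\| = \|\bs{\Psi}(\bs{s} - \hat{\bs{v}})\| = \|\bs{s} - \hat{\bs{v}}\|$; moreover $\bs{s} = \bs{\Psi}^*\bs{x}$ by definition, whence $\sigma_K(\bs{s})_1 = \sigma_K(\bs{\Psi}^*\bs{x})_1$. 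Combining these identities with the bound just obtained gives exactly \eqref{eq:l2-l1-inst-optimality}.

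I do not anticipate a substantive obstacle: the entire argument rests on the orthonormality of $\bs{\Psi}$, which simultaneously makes the change of variables bijective, maps the analysis objective to a synthesis $\ell_1$-objective, and preserves the reconstruction error in $\ell_2$. The only points requiring care are bookkeeping ones --- checking that the preconditioner $\frac{1}{\sqrt{M}}\bs{D}$ is applied consistently to both sides of the model so that the noise hypothesis transfers verbatim, and confirming that minimizers correspond under $\bs{u} \leftrightarrow \bs{v}$ so that no spurious feasibility gap is introduced.
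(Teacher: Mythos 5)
Your argument is correct and follows exactly the route the paper sketches: identify $\bs A = \frac{1}{\sqrt{M}}\bs D\bs\Phi^*_\Omega\bs\Psi$ as the matrix satisfying the RIP hypothesis, perform the unitary change of variables $\bs v = \bs\Psi^*\bs u$ to reduce the analysis program to the synthesis BPDN of Prop.~\ref{prop:rip old}, and use orthonormality of $\bs\Psi$ to transfer the $\ell_2$ error bound back to the signal domain. Your write-up is in fact more detailed than the paper's three-line proof sketch, but there is no substantive difference in approach.
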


\begin{proof}
  The proof involves three steps: \emph{(i)} in Prop.~\ref{prop:rip old} let $\bs A = \bs{\Phi}^*_\Omega \bs \Psi$; \emph{(ii)} precondition the  matrix $\bs A$ according to Prop.~\ref{prop:variable_density_sampling} so that it satisfies RIP; \emph{(iii)} apply a change of variable $\bs u \rightarrow \bs \Psi^* \bs u$ in \eqref{eq:BPDN old} using the fact that $\bs \Psi$ is an orthonormal basis.
\end{proof}

In Sec.~\ref{sec:Sec_04} and Sec.~\ref{sec:Sec_05}, we will leverage Prop.~\ref{prop:variable_density_sampling} and Prop.~\ref{prop:RIP} in order to characterize the sample complexities and the stability of our two compressive FTI strategies, namely coded illumination FTI (Thm~\ref{thm:cifti}) and structured illumination FTI (Thm~\ref{thm:sifti}).
\medskip

\subsection{Noise level estimation in Variable Density Sampling}
\label{sec:vds-noise}

We study now an important aspect of the VDS scheme that seems not covered in the literature: estimating the noise level $\varepsilon$ in \eqref{eq:BPDN} by integrating the influence of the weighting (random) matrix $\bs D$ introduced in Prop.~\ref{prop:variable_density_sampling}. This estimation is indeed critical for reaching robust reconstruction of HS volumes from the two compressive FTI schemes studied in Sec.~\ref{sec:Sec_04} and Sec.~\ref{sec:Sec_05}. The content of this section is voluntarily written in general notations, as it could also be of general interest for any VDS-based compressive sensing application corrupted by some additive measurement noise.   

The next theorem bounds with controlled probability the (weighted) $\ell_2$-norm of any vector $\bs n \in \bb C^M$, \eg the fixed realization of a noise vector corrupting the sensing model $\bs y = \bs \Phi_{\Omega}^* \bs x + \bs n$, when this norm is weighted by the random diagonal matrix $\bs D$ associated with $\Omega$. The bound only depends on the unweighted $\ell_2$- and $\ell_\infty$-norms of $\bs n$, and on a quantity $\rho>1$ fixed by the density defining the VDS.     

\begin{thm}
\label{prop:gen-bound-noise-vds}
Given two integers $M<N$, let us consider a discrete random variable (\rv) $\beta \in \range{N}$ associated with the probability mass function (pmf) $\eta(\nell):= \bb P[\beta = \nell]$ for $l\in [N]$, and assume there exists a $\rho \in [1, \infty)$ such that
\begin{equation}
  \label{eq:pmf-moment}
\ts \tinv{N}\,\sup_{q\geq 1} \frac{1}{q} (\bb E_\beta\,\eta(\beta)^{-q})^{1/q} \leq\ \rho.  
\end{equation}
We define a random index set $\Omega = \{\Omega_j\}_{j=1}^M \subset \range{N}$ made of $M$ (possibly non-distinct) indices $\Omega_j \sim_{\iid} \beta$, and a random diagonal matrix $\bs D \in \bb R^{M\times M}$ such that $D_{jj} = 1/\sqrt{\eta(\Omega_j)}$.~Given $s>0$ and $\bs n \in \bb C^M$, we have  
\begin{align}
  \label{eq:gen-bound-noise-vds-ext-noise}
\ts \frac{1}{M}\,\|\bs D \bs n\|^2&\ts \leq \frac{N}{M}\|\bs n\|^2 + 4 e \max\big\{\frac{s}{M}, \frac{\sqrt s}{\sqrt M}\big\} \rho\,\|\bs n\|_\infty^{2} N,
\end{align} 
with probability exceeding $1-e^{-s/2}$ (\eg for $s=6$, this probability exceeds $0.95$).
\end{thm}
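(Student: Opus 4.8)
The plan is to recognise $\tfrac1M\|\bs D\bs n\|^2$ as a normalised sum of $M$ independent, identically distributed nonnegative random variables and to control its upper tail by a Bernstein inequality; the two-regime factor $\max\{s/M,\sqrt{s/M}\}$ is precisely the signature of a sub-exponential Bernstein bound. Writing $a_j := |n_j|^2$ and $X_j := 1/\eta(\Omega_j)$, we have $\tfrac1M\|\bs D\bs n\|^2 = \tfrac1M\sum_{j=1}^M a_j X_j$, where the $X_j$ are \iid copies of $X := 1/\eta(\beta)$. The mean is immediate, $\bb E[X] = \sum_{\nell}\eta(\nell)\,\eta(\nell)^{-1} \le N$, so $\bb E\big[\tfrac1M\|\bs D\bs n\|^2\big] \le \tfrac NM\|\bs n\|^2$, which matches the leading term of \eqref{eq:gen-bound-noise-vds-ext-noise}. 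It then remains to bound the upward fluctuation of $\sum_j a_j\big(X_j-\bb E X\big)$.

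The analytic heart is to convert hypothesis \eqref{eq:pmf-moment} into a Bernstein moment condition. Reading \eqref{eq:pmf-moment} for each fixed $q\ge 1$ gives $(\bb E[X^q])^{1/q}\le \rho\,q\,N$, \ie $\bb E[X^q]\le(\rho q N)^q$. Combining $q^q\le e^q q!$ with convexity of $t\mapsto t^q$ yields $\bb E|X-\bb E X|^q \le 2^{q-1}(\bb E[X^q]+N^q)\le (2e\rho N)^q\,q!$ for every integer $q\ge 2$. Setting $Y_j := a_j(X_j-\bb E X)$, independence together with $a_j\le\|\bs n\|_\infty^2$ gives the aggregate moment condition $\sum_j \bb E|Y_j|^q \le \tfrac{q!}{2}\,\Sigma^2 R^{q-2}$ with $R := 2e\rho N\|\bs n\|_\infty^2$ and $\Sigma^2 := 2(2e\rho N)^2\sum_j a_j^2$, where I deliberately bound $\sum_j a_j^2 \le M\|\bs n\|_\infty^4$ to get $\Sigma^2 \le 8e^2\rho^2 N^2 M\|\bs n\|_\infty^4$.

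With this condition in hand, I would invoke the Bernstein inequality for sums of independent variables in its moment form \cite{foucart2013mathematical}, namely $\bb P\big[\sum_j Y_j \ge \sqrt{2\Sigma^2\tau}+R\tau\big] \le e^{-\tau}$, and choose $\tau = s/2$. Direct substitution bounds the deviation by $2\sqrt2\,e\rho N\|\bs n\|_\infty^2\sqrt{Ms} + e\rho N\|\bs n\|_\infty^2 s$, and since $2\sqrt2+1<4$ this is at most $4e\rho N\|\bs n\|_\infty^2\max\{\sqrt{Ms},s\}$, holding with probability exceeding $1-e^{-s/2}$. Adding back the mean term (which is at most $N\|\bs n\|^2$) and dividing by $M$ reproduces \eqref{eq:gen-bound-noise-vds-ext-noise} exactly, using $\sqrt{Ms}/M=\sqrt{s/M}$.

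The difficulties here are bookkeeping rather than conceptual. The main one is the faithful passage from the abstract supremum bound \eqref{eq:pmf-moment} to the factorial-type moment condition: one must not discard the constant $e$ coming from $q^q\le e^q q!$, since it is exactly what produces the factor $4e$ in the statement. A secondary subtlety is that only the upper tail is required, so the one-sided Bernstein bound suffices and no union bound or absolute value is incurred; moreover the intentionally crude estimate $\sum_j a_j^2\le M\|\bs n\|_\infty^4$ (rather than the tighter $\|\bs n\|_\infty^2\|\bs n\|^2$) is what forces the $\sqrt{Ms}$ scaling, hence the clean $\ell_\infty$-only form of the fluctuation term.
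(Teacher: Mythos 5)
Your proof is correct and follows essentially the same route as the paper's: both decompose $\tfrac1M\|\bs D\bs n\|^2$ into its mean $\tfrac NM\|\bs n\|^2$ plus a centered sum of independent sub-exponential variables and close with Bernstein's inequality, arriving at the same constant $4e$ and the same $\max\{s/M,\sqrt{s/M}\}$ scaling. The only difference is cosmetic: the paper verifies sub-exponentiality by bounding the moment generating function directly (via the power series and Stirling's bound $p!\ge e^{1-p}p^p$) and then uses the $\min\{t^2,t\}$ form of Bernstein, whereas you convert \eqref{eq:pmf-moment} into the factorial moment condition $\bb E|Y_j|^q\le \tfrac{q!}{2}\sigma_j^2R^{q-2}$ and invoke the $\sqrt{2\Sigma^2\tau}+R\tau$ form — two equivalent packagings of the same inequality.
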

For the proof see App.~\ref{sec:proof_prop_gen-bound-noise-vds}. The fact that $\rho \geq 1$ is a simple consequence of $\bb E (\eta(\beta))^{-1} = \sum_l 1 = N$.

\begin{remark}
\label{remark:simple_bound_for_rho}
Up to the normalization in $1/N$, the parameter $\rho$ in \eqref{eq:pmf-moment} is actually a bound on the sub-exponential norm $\|1/\eta(\beta)\|_{\psi_1} := \sup_{q\geq 1} \frac{1}{q} (\bb E_\beta\,\eta(\beta)^{-q})^{1/q}$ of the \rv $1/\eta(\beta)$ \cite{V2012,vershynin_2018}. From this definition, and since $\|X\|_{\psi_1} \leq L$ for any bounded \rv $X$ with $|X| \leq L$ and $L>0$, we can directly identify that in the case of UDS, $\eta(l) = 1/N$ implies $\|1/\eta(\beta)\|_{\psi_1} \leq N$, \ie we can set $\rho = 1$. 

More generally, for any VDS scheme, \eg see Prop.~\ref{prop:cifti local coherence}, where $\eta(l) = c_\eta\,\min\{1,|l-l_0|^{-\alpha}\}$ for some exponent $\alpha > 0$, an offset parameter $l_0 \in \range{N}$ centering $\eta$ on $l_0$, and $c_\eta>0$, we find 
$$
\ts \|1/\eta(\beta)\|_{\psi_1} = c_\eta^{-1} \|\max\big\{1,|\beta-l_0|^\alpha\big\}\|_{\psi_1} \leq c_\eta^{-1} (N - l_0)^\alpha,
$$
which provides $\rho = c_\eta^{-1} |N-l_0|^\alpha/N$. In Sec.~\ref{sec:Sec_03} and Sec.~\ref{sec:Sec_04}, the parameter $l_0$ denotes the index of the OPD origin. In particular, for $\alpha = 1$, $\rho$ mainly depends on the pmf normalization constant $c_\eta$, \ie $c_\eta^{-1} \simeq \log N$, as computed in Prop.~\ref{prop:cifti local coherence}.

In conclusion, for $\alpha = 0$ (UDS) and for $\alpha = 1$, we can expect that $\rho$ is either constant or that it grows slowly (logarithmically) when $N$ increases, hence ensuring a good control over the bound \eqref{eq:gen-bound-noise-vds-ext-noise}. 
\end{remark}
The previous theorem is expected to provide useful bounds if the leading terms in the right-hand side of \eqref{eq:gen-bound-noise-vds-ext-noise} is $\frac{N}{M} \|\bs n\|^2$, \ie if $\rho \|\bs n\|_\infty^2 N \lesssim \frac{N}{M} \|\bs n\|^2$. This happens for $\rho$ slowly growing when $N$ increases, \eg $\rho = O(\log N)$, (see Remark~\ref{remark:simple_bound_for_rho}) and for any vector $\bs n$ that is not too sparse, \ie such that $\|\bs n\|^2_\infty \lesssim \|\bs n\|^2/M$. As shown in the next corollary, a Gaussian random noise respects this requirement with high probability\footnote{While Cor.~\ref{cor:noise-level-estim-Gauss-noise} can be easily extended to the complex field, we restrict it anyway to a real Gaussian noise, \ie the noise of interest in the next sections.}. 
\begin{cor}
\label{cor:noise-level-estim-Gauss-noise}
In the context of Thm.~\ref{prop:gen-bound-noise-vds}, let us consider the Gaussian random vector $\bs n \in \bb R^M$ with $n_k \sim_{\iid} \cl N(0,\sigma^2)$, $k\in\range{M}$. Given some $s>0$, 
in the UDS case, \ie $\eta(\nell) = 1/N$ and $\bs D^2 = N \bs I_N$, we have 
\begin{align}
\label{eq:noise-level-estim-Gauss-noise-ext-noise-uds}
\ts \frac{1}{\sqrt M}\,\|\bs D \bs n\|\ &\ts \leq\ \varepsilon^0_{\sigma,s}(N,M) := \sigma \sqrt N\, (1 + \frac{1}{\sqrt 2}\frac{\sqrt{s}}{\sqrt{M}} + \frac{s}{M})^{1/2},
\end{align}
with probability exceeding $1-\exp(-\frac{s}{2})$. More generally, in a VDS context with arbitrary pmf $\eta$, 
\begin{align}
\label{eq:noise-level-estim-Gauss-noise-ext-noise}
\ts \frac{1}{\sqrt M}\,\|\bs D \bs n\|\ &\leq\ \varepsilon_{\sigma,s}(N,M,\rho)\\
&\ts := \sigma \sqrt N\,\big[ (1 + \frac{1}{\sqrt 2}\frac{\sqrt{s}}{\sqrt{M}} + \frac{s}{M}) + 4e (2\log M +s) \max\big\{\frac{s}{M}, \frac{\sqrt s}{\sqrt M}\big\} \rho \big]^{1/2},\nonumber
\end{align}
with probability exceeding $1-3\exp(-\frac{s}{2})$ (\eg greater than $0.95$ for $s=8.2$). 
\end{cor}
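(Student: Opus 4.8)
The plan is to reduce the statement to Thm.~\ref{prop:gen-bound-noise-vds} combined with two standard Gaussian concentration estimates, one for $\|\bs n\|$ (a chi-square tail) and one for $\|\bs n\|_\infty$ (a maximum of Gaussians). Writing $n_k = \sigma g_k$ with $g_k \sim_{\iid} \cl N(0,1)$, observe that $\|\bs n\|^2 = \sigma^2\sum_{k=1}^M g_k^2$ is $\sigma^2$ times a chi-square variable with $M$ degrees of freedom, whereas $\|\bs n\|_\infty^2 = \sigma^2\max_k g_k^2$.

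I would treat the UDS case first, where $\bs D = \sqrt N\,\bs I_M$ is deterministic, so that $\frac1M\|\bs D\bs n\|^2 = \frac NM\|\bs n\|^2$. Invoking a one-sided chi-square tail bound of Laurent--Massart type, $\bb P\big[\sum_k g_k^2 - M \geq 2\sqrt{Mt} + 2t\big] \leq e^{-t}$, and taking $t = s/2$, yields $\|\bs n\|^2 \leq \sigma^2(M + \sqrt{2Ms} + s)$ with probability at least $1 - e^{-s/2}$. Rescaling by $N/M$ and taking a square root then gives \eqref{eq:noise-level-estim-Gauss-noise-ext-noise-uds}.

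For the general VDS case I would assemble three events through a union bound. Since the noise $\bs n$ is independent of the random sampling set $\Omega$, hence of $\bs D$, I would condition on each fixed realization of $\bs n$ and apply Thm.~\ref{prop:gen-bound-noise-vds}, which bounds $\frac1M\|\bs D\bs n\|^2$ by $\frac NM\|\bs n\|^2 + 4e\max\{\frac sM,\frac{\sqrt s}{\sqrt M}\}\rho\,\|\bs n\|_\infty^2 N$ with probability at least $1 - e^{-s/2}$ over $\Omega$; integrating over $\bs n$ preserves this probability. I would then reuse the chi-square bound above for $\|\bs n\|^2$, and control the $\ell_\infty$-term with the scalar estimate $\bb P[g^2 \geq v] \leq e^{-v/2}$ together with a union bound over the $M$ coordinates, giving $\|\bs n\|_\infty^2 \leq \sigma^2(2\log M + s)$ with probability at least $1 - e^{-s/2}$. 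Substituting the last two (now deterministic) bounds into the conditional estimate on the intersection of the three events produces \eqref{eq:noise-level-estim-Gauss-noise-ext-noise}, and a union bound over the three failure events gives the stated probability $1 - 3e^{-s/2}$.

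The individual ingredients are routine, so the care lies elsewhere: first, in the independence argument that legitimizes applying Thm.~\ref{prop:gen-bound-noise-vds} to a random $\bs n$ by conditioning and then integrating; second, in the constant bookkeeping, especially calibrating the maximum-of-Gaussians tail so that it reproduces exactly the factor $(2\log M + s)$, and choosing the chi-square deviation parameter so that the three $e^{-s/2}$ failure probabilities assemble into the claimed bound. I expect this constant tracking --- matching the Laurent--Massart coefficients to those displayed in \eqref{eq:noise-level-estim-Gauss-noise-ext-noise} --- to be the main, if modest, obstacle, since it is where the probability budget is actually spent.
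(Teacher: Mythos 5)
Your proposal is correct and follows essentially the same route as the paper's own proof in App.~\ref{sec:proof_cor_noise-level-estim-Gauss-noise}: a Laurent--Massart $\chi^2$ tail for $\|\bs n\|^2$, a union-bounded Gaussian maximum giving $\|\bs n\|_\infty^2 \leq \sigma^2(2\log M + s)$, and a union bound over these two events together with the event of Thm.~\ref{prop:gen-bound-noise-vds}, yielding the probability $1-3e^{-s/2}$ (the conditioning-on-$\bs n$ step you spell out is left implicit in the paper but is the same argument). The only divergence is the constant you were worried about: your direct application of Laurent--Massart gives $M+\sqrt{2Ms}+s$, whereas the paper writes $M+\sqrt{M}\sqrt{s/2}+s$ and hence the coefficient $\frac{1}{\sqrt 2}$ in \eqref{eq:noise-level-estim-Gauss-noise-ext-noise-uds}--\eqref{eq:noise-level-estim-Gauss-noise-ext-noise}; since $2\sqrt{M s/2}=\sqrt{2Ms}$, this appears to be a factor-of-$2$ slip in the paper's middle term rather than a gap in your argument, and your (slightly weaker but correctly derived) constant is the one actually supported by the cited lemma.
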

We postpone the proof of this corollary to App.~\ref{sec:proof_cor_noise-level-estim-Gauss-noise}.\medskip

  \begin{remark} By comparing \eqref{eq:noise-level-estim-Gauss-noise-ext-noise} to \eqref{eq:noise-level-estim-Gauss-noise-ext-noise-uds}, we observe that the variability of $\bs D$ induces a bias behaving like $O( \rho \log M \max\{\frac{1}{M}, \frac{\sqrt 1}{\sqrt M}\})$ in $\varepsilon_{\sigma,s}(N,M,\rho)$ compared to the simpler bound $\varepsilon^0_{\sigma,s}$ reached by UDS. 
Therefore, if $\rho$ is slowly growing when $N$ increases (see Rem.~\ref{remark:simple_bound_for_rho}), and for $M$ large, this bias is small and the two noise levels, for the VDS and UDS schemes, are thus comparable.   
\end{remark}
The previous remark will be used in conjunction with Cor.~\ref{cor:noise-level-estim-Gauss-noise} in Sec.~\ref{sec:cifti_HS Reconstruction Method and Guarantee} and Sec.~\ref{sec:sifti reconstruction} in order to bound the noise in our compressive FTI under a Gaussian noise assumption.

%%%%%%%%%%%%%%%%%%%%%%%%%%%%%%%%%%%%%%%%%%%%%%%%%%%%%%%%%%%%%%%%%%%%%%% 
\section{Acquisition Model in Conventional FTI}
\label{sec:fti}

This section describes the principles of FTI acquisition, its continuous observation model and its crucial discretization into a forward model that can be inverted numerically. We also discuss the main noise sources  that corrupt the FTI.  

\subsection{Continuous Observation Model}
\label{sec:FTI continuous model}

	In a nutshell, the operating principle of a conventional FTI, called hereafter \textit{Nyquist FTI}, is based on Michelson interferometry~\cite{michelson1887relative} (see, \eg~\cite{bell2012introductory} for a detailed survey on FTI). As shown in Fig.~\ref{fig:FTI_scheme}~(top), a parallel beam of coherent light obtained from the 2D image of a thin, still, and almost transparent biological specimen magnified by a (confocal) microscope (not represented here), is divided into two beams by a Beam-Splitter (BS), \eg made of two birefringent prisms. The two beams are reflected back by a fixed and a moving mirror, and interfere after being recombined by the BS. The intensity of the resulting beam is next acquired by a standard imaging sensor (camera); each image captured by this camera corresponds to a given position of the moving mirror, and each pixel records temporally a specific interference pattern. 

\begin{figure}[!t]
  \centering
  \includegraphics[width=0.9\textwidth]{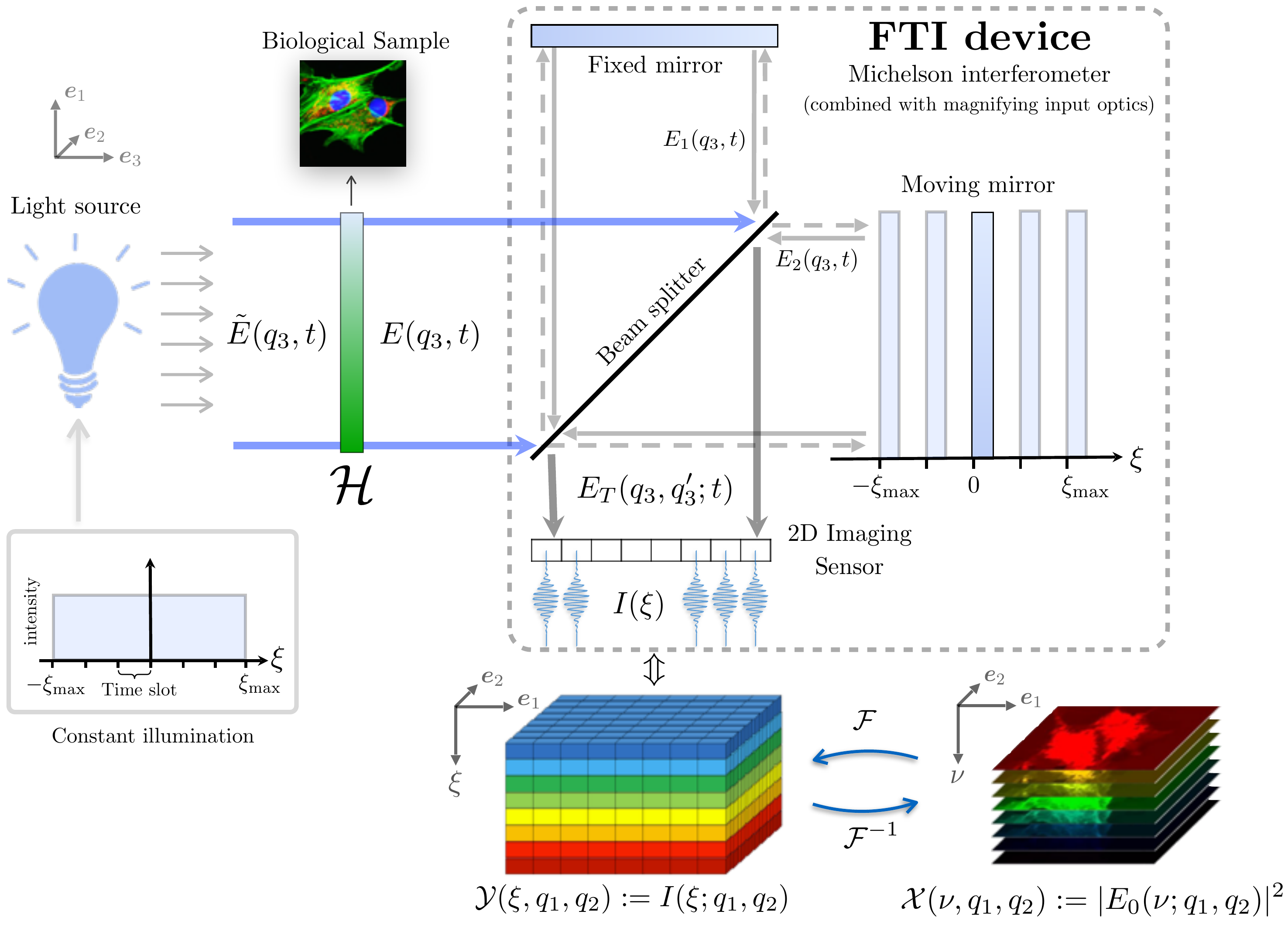}
  \caption{Operating principle of Fourier transform interferometry.}
  \label{fig:FTI_scheme}
  \vspace{-5mm}
\end{figure}

In order to understand how an HS volume of the observed biological specimen can be deduced from the recorded interferences, let us develop an idealized optical model of this FTI. We simplify our description by omitting the magnifying optics that are required to observe the spatial details of the specimen\footnote{Such an optical setup (\eg a confocal microscope) can be easily inserted in our scheme by the introduction of a scaling factor between the coordinate systems of the biological specimen plane and the 2D imaging sensor plane.}. Given an orthonormal coordinate system $\{\bs e_1, \bs e_2, \bs e_3\} \subset \bb R^3$, a coherent wide-band plane wave emitted by the light source and traveling along $\bs e_3$-direction may be formulated at some point $\bs q := (q_1,q_2,q_3)^\top \in \bb R^3$ and time $t$ as 
\begin{equation}
  \label{eq:wide-band-wave}
  \ts \tilde{E}(\bs q;t) = \int_{0}^{+\infty}\tilde{E}_0(q_1, q_2;\nu, t) e^{i(2\pi q_3\nu-\omega t)}\,\ud\nu,  
\end{equation}
where $\nu$ is the wavenumber\footnote{Recall that in the application of FTI in fluorescence spectroscopy the light frequency $\nu {\sf c}_{\rm v} \in $ [400 THz, 770 THz], with ${\sf c}_{\rm v}$ the speed of light in the vacuum.}, $\omega = 2\pi \nu {\sf c}_{\rm v}$ is the angular frequency (with ${\sf c}_{\rm v}$ the light velocity in the medium) and $\tilde{E}_0(q_1,q_2;\nu, t)$ represents the amplitude of light at~$(q_1,q_2;\nu)$; see, \eg \cite{young2007university} for the principles of electromagnetic wave propagation.

In this work, we consider that the illumination ensures that $\tilde{E}_0(q_1,q_2;\nu, t)$ is constant in time and with respect to~$(q_1,q_2)$, \ie $\tilde{E}_0(q_1,q_2;\nu, t)=\tilde{E}_0(\nu)$, so that the light source intensity per second and per unit area
\begin{equation}
  \label{eq:intensity-def}
\ts  \bar I_0 := \int_0^{+\infty} |\tilde{E}_0(q_1,q_2;\nu, t)|^2\ \ud \nu = \int_0^{+\infty} |\tilde{E}_0(\nu)|^2\ \ud \nu,
\end{equation}
is constant in the same way. Therefore, $c_0 \bar I_0$ is the total light exposure received per unit of time and per unit of surface on the biological specimen, with $c_0>0$ depending on the speed of light ${\sf c}_{\rm v}$ in the vacuum, the medium refractive index and the vacuum permittivity. Hereafter, for simplicity and up to a general rescaling of the intensities, we consider that $c_0=1$. 

Neglecting the refraction of light induced by the transparent biological specimen, we can forget the representation of $(q_1,q_2)$ in our next developments as well as in Fig.~\ref{fig:FTI_scheme}. This is equivalent to considering a single spatial location in the following equations. Moreover, the beam folding action of the optical elements in the Michelson interferometer (\ie the mirrors and the BS) are still compatible with the $q_3$-parametrization, provided that $q_3$ stands for the path length of the light propagation till the imaging sensor. Note that, from a one-to-one correspondence between the specimen plane and the imaging sensor, the coordinates $(q_1,q_2)$ are associated with a 2D pixel location on the imaging plane.     

Since a coherent light is a collection of monochromatic waves, we proceed by considering a monochromatic plane wave $\tilde{E}(q_3;\nu;t)=\tilde{E}_0(\nu) e^{i(2\pi q_3\nu-\omega t)}$ of wavenumber $\nu>0$ that is incident to the biological specimen. After having traveled through the specimen, this beam then reads
\begin{equation}
  \label{eq:outbeam}
  E(q_3;\nu;t) = E_0(\nu)e^{i(2\pi q_3\nu-\omega t)}.
\end{equation}

In non-fluorescent biological applications, the field intensity $|E_0(\nu)|^2$ is the result of the multiplication of $|\tilde E_0(\nu)|^2$ by the absorption $H(\nu)>0$ of the constituents. In fluorescent microscopy, however, the two field intensities are related through a more general model: in addition to absorption phenomena, a fluorescent material re emits light at different wavelengths compared to the one of the incident light beam. Consequently, we consider here that $E_0$ and $\tilde E_0$ are related through a general transfer operator $\cl H$, \ie 
\begin{equation}
  \label{eq:transfer-conversion}
  E_0(\nu) = \cl H[\tilde{E}_0](\nu),  
\end{equation}
which is linear in the total light intensity, \ie for any $\lambda >0$,  
\begin{equation}
  \label{eq:linearity-transfer-intensity}
\ts  \bar I_0 \to \lambda \bar I_0 \quad \Rightarrow\quad \bar I := \int_0^{+\infty} |\cl H[\tilde{E}_0](\nu)|^2 \ud \nu\  \to\ \lambda \bar I.  
\end{equation}
In words, if the illumination intensity is increased by a factor of $\lambda$, the same amplification is observed for the intensity of the light that has traveled though the specimen. As will be clear later, this assumption actually sustains a compressive FTI mode where the total light exposure is kept constant with respect to a Nyquist FTI acquisition (see Sec.~\ref{sec:Sec_05}).  

Next, from the field amplitude conversion \eqref{eq:transfer-conversion} occurring in the specimen, the beam~\eqref{eq:outbeam} is divided into two beams by the BS. After (perfect) reflexion on their respective mirror, and assuming perfect reflection and transmission in the BS, the incident beams at the BS read
$$
  E_1(q_3;\nu;t) = E_0(\nu) e^{i(2\pi q_3\nu-\omega t)} \quad\text{and}\quad E_2(q'_3;\nu;t) = E_0(\nu) e^{i(2\pi q'_3\nu-\omega t)},
$$
where $q_3$ and $q'_3$ are the distances that the two beams have traveled so far. Therefore, neglecting the equal additional paths traveled in the BS, the recombined beam is $E_T(q_3, q'_3;\nu;t) = E_1(q_3;\nu;t)+E_1(q'_3;\nu;t)$. 

Finally, the intensity $I_T(\xi;\nu) = |E_T(q_3, q'_3;\nu;t)|^2$ of this beam is measured in one pixel of the imaging sensor according to the rule    
\begin{equation*}
  I_T(\xi;\nu) = 2\,|E_0(\nu)|^2(1+\cos(2 \pi \nu\xi)),
\end{equation*}
where $\xi := q'_3-q_3$ is the OPD parameter. Here, $\xi = 0$ means that the moving mirror position makes the two arms of the Michelson interferometer, and thus the two light paths, of equal lengths. Hence, as illustrated in Fig.~\ref{fig:FTI_scheme}, there is a one-to-one correspondence between the time domain (or time slot), mirror position and OPD value. Moreover, to fix the ideas, we consider throughout this paper that each time slot (or OPD sample) integrates light over a constant time duration $\tau_{\xi} > 0$. This time is actually associated with the frame-per-second (fps) rate of the imaging sensor. 

Coming back to the general wide-band plane wave model \eqref{eq:wide-band-wave} with constant amplitude and assuming that $\tau_\xi$ is much larger than the temporal range\footnote{Which is clearly the case for a visible light and near-infrared HS system with $\nu {\sf c}_{\rm v}$ in the range of 400 THz to 770 THz, and $\tau_\xi$ of the order of a few hundredths of a second. } $1/(\nu {\sf c}_{\rm v})$, we quickly verify that the total temporal-averaged intensity recorded by the detector can be written~as
\begin{equation*}
  \ts I_T(\xi) = \int_{0}^{+\infty} I_T(\xi;\nu) \,\ud\nu=I_{{\rm DC}}+I(\xi),
\end{equation*}
with $I_{\rm DC} = 2\, \int_{0}^{+\infty} |E_0(\nu)|^2 \,\ud \nu = \frac{1}{2} I_T(0)$. After removing the DC (or mean) component, it is easy to show that the AC (or zero-mean) part~$I$, termed as \textit{interferogram}, is the Fourier transform of $|E_0(|\nu|)|^2$, \ie the symmetrization of $\nu\in\bb R_+ \mapsto |E_0(\nu)|^2 \in \bb R_+$ around $\nu=0$:
\begin{equation}
  \ts I(\xi)\ =\ I_T(\xi) - \frac{1}{2} I_T(0)\ =\ \int_{-\infty}^{+\infty}|E_0(|\nu|)|^2e^{-i2\pi \nu \xi} \,\ud\nu. \label{eq:continuous FTI model}
\end{equation}

Consequently, computing the inverse Fourier transform of the interferogram and restricting it to the positive spectral axis yields the spectrum\footnote{This inverse Fourier transform is actually proportional to the spectrum, with a multiplicative constant depending on, \eg the attenuation and reflection coefficients of the mirrors and the BS, and the sensor pixel efficiency.} $|E_0(\nu)|^2$ of the observed specimen on a given spatial location $(q_1,q_2)$.

Overall, reinserting these spatial coordinates, the continuous sensing model relating the HS volume $\cl X(\nu, q_1,q_2) := \cl M[|E_0(q_1, q_2; \nu)|^2]$, where $\cl M: f(q_1,q_2,\nu) \to f(q_1,q_2,|\nu|)$ is the symmetrization operator, to the volume of interferograms $\cl Y(\xi, q_1,q_2) = I(\xi) = I(\xi; q_1, q_2)$ reads
\begin{equation}
  \label{eq:full-continuous-sensing-model}
  \cl Y(\xi, q_1,q_2) = \cl F(\cl X)(q_1,q_2,\xi),
\end{equation}
with $\cl F : f(q_1,q_2,\nu) \to \int_{-\infty}^{+\infty} f(q_1,q_2, \nu) e^{-i2\pi \nu \xi} \,\ud\nu$ representing the 1D Fourier transform in the $\nu$ domain. 

\begin{remark} As represented through the action of the transfer operator $\cl H$ above, in biological imaging, the spectrum $|E_0(\nu)|^2$ is thus a combined signature of the absorption spectrum of the biological specimen, the emission spectra of the fluorescent dyes and the spectrum of the light source. In this work, we will not consider the question of unmixing these three parts and we thus focus on the acquisition of $|E_0(\nu)|^2$.    
\end{remark}

\begin{remark}
In practice, the location of $\xi = 0$, as required by~\eqref{eq:continuous FTI model}, can be obtained through calibration with two methods: \emph{(i)} by looking for the OPD point where the intensity of the recorded signal is twice the value of the empirical mean of the recorded signal, or  \emph{(ii)} by identifying the interferogram maximum, which occurs at the OPD origin from~\eqref{eq:continuous FTI model}. Note that the knowledge of the OPD origin is also critical to define our two VDS FTI schemes, as will be clear in Secs~\ref{sec:Sec_03} and~\ref{sec:Sec_04}.
\end{remark}

\subsection{Discrete Sensing Model} 
\label{sec:discrete-model}

FTI actually proceeds by first capturing discrete samples of the continuous volume $\mathcal{Y}$ in \eqref{eq:full-continuous-sensing-model}, both in the spatial and in the time domain according to the pixel grid and the number of frames per second recorded by the imaging sensor (see Fig.~\ref{fig:FTI_scheme}), and by processing them in order to reconstruct a discretized version of $\cl X$ compatible with the Shannon-Nyquist sampling theorem. 

To fix the ideas, we consider that \emph{(i)} the moving mirror gives access to an OPD domain $(-\xi_{\max}, \xi_{\max}]$  (for some range $\xi_{\max} >0$) that is evenly discretized over $N_\xi \in \bb N$ samples with an OPD step size $\Delta_\xi>0$ (\ie $2\xi_{\max} = N_\xi \Delta_\xi$), and \emph{(ii)} the spatial domain is sampled according to $\bar N_{\rm p} \times \bar N_{\rm p}$ square pixel grid with $N_{\rm p} = \bar N_{\rm p}^2$ pixels, \ie the grid of the same sensor, each pixel being square with side length $\Delta>0$. Note that, accordingly, the time domain is thus regularly discretized with $N_\xi$ samples related to a \emph{time slot} of $\tau_\xi = T/N_\xi$, given the total acquisition time\footnote{In practice, the FTI system is equivalently characterized from the frame-per-second rate of the imaging sensor, which determines $\tau_\xi$, and from the speed and the extent of the mirror motion, $2\xi_{\max}$.} $T>0$.

Mathematically, the discrete FTI measurements are gathered in a cube $\bs{\mathcal{Y}} \in \bb{R}^{N_\xi\times \bar N_{\rm p} \times \bar N_{\rm p}}$ approximating $\mathcal{Y}$ over $N_{\rm p} = \bar N_{\rm p}^2$ pixels and $N_\xi$ OPD points, \ie over $\Nhs = N_\xi N_{\rm p}$ voxels. Similarly, the discrete HS volume is represented by a data cube $\bs{\mathcal{X}} \in \bb{R}^{N_\nu\times \bar N_{\rm p} \times \bar N_{\rm p}}$ approximating $\mathcal{X}$ over $N_{\rm p}$ pixels and $N_\nu = N_\xi$ wavenumber samples, and thus also over $\Nhs$ voxels. Therefore, according to Shannon-Nyquist theorem \cite[Page 374]{Haykin:1998:SS:552094}, assuming $N_\xi$ even and positioning the OPD origin on the spectral index $\nell = N_\xi/2$, the sampling rules are thus 
\begin{align}
  \ts \bs{\mathcal{Y}}_{{\nell},{j_1},{j_2}}&\ts :=\mathcal{Y}((\nell-\inv{2} N_\xi)\Delta_\xi;j_1 \Delta_{\rm p},j_2 \Delta_{\rm p}),\nonumber\\
  \ts \bs{\mathcal{X}}_{{\nell'},{j_1},{j_2}}&\ts :=\mathcal{X}((\nell'-\inv{2} N_\nu)\Delta_\nu;j_1 \Delta_{\rm p},j_2 \Delta_{\rm p}),\label{eq:discrete model for intensities}
\end{align}
with $\nell, \nell' \in \range{N_\xi}$, $j_1,j_2 \in \range{\bar N_{\rm p}}$, $\Delta_\nu =1/(N_\xi\Delta_\xi)$ and $\nu_{\max}:=1/(2\Delta_\xi)$. Throughout this paper we call $\nell$ (and $\nell'$) as OPD index (resp. wavenumber index). Note that $\bs{\mathcal{X}}_{{\nell'},{j_1},{j_2}}=\bs{\mathcal{X}}_{N_\nu - {\nell'},{j_1},{j_2}}$ from the action of $\cl M$ in \eqref{eq:full-continuous-sensing-model}. This symmetric construction will be assumed throughout the paper, for the Nyquist FTI model above as well as for the compressive sensing schemes presented in the next sections. Consequently, despite the complex nature of the Fourier transform, all entries of $\cl Y$, partially observed or not, are real, in agreement with the measurement process. 

As a result, by increasing the number of OPD samples the final HS volume contains spectral information with higher details. However, since for a constant light source the total light exposure of an observed biological specimen is proportional to $N_\xi$, this increase is limited in practice by the risk of photo-bleaching~\cite{Diaspro2006}. Thus, a trade-off between spectral resolution and photo-bleaching must be found. 

Finally, if $\bs{Y}^{\Nyq} \in \bb R^{N_\xi \times N_{\rm p}}$ and $\bs{X} \in \bb R^{N_\xi \times N_{\rm p}}$ denote the matrix unfolding (see Sec.~\ref{sec:notations}) of the cube $\bs{\cl Y}$ and $\bs{\cl X}$, respectively, the acquisition process of Nyquist-FTI can be formulated in matrix form as
\begin{equation}
  \bs{Y}^{\Nyq} = \bs{F}^*\bs{X}+\bs{N}^{\Nyq},\label{eq:nyquist-fti}
\end{equation}
where $X_{l,j} = X_{N_\xi -l,j}$ for all $l \in \range{N_\xi/2}$ and $j\in\range{N_{\rm p}}$, $\bs{N}^{\Nyq}=[\bs{n}^{\Nyq}_1,\cdots,\bs{n}^{\Nyq}_{N_{\rm p}}] \in \bb{R}^{N_\xi \times N_{\rm p}}$ models an additive noise, and $\bs{F}  \in \bb{C}^{N_\xi \times N_\xi}$ is the 1D Discrete Fourier Transform (DFT) basis. Equivalently, this description can also be arranged into a vector form, \ie
\begin{equation}
\label{eq:nyquist-fti vector form}
  \bs{y}^{\Nyq} = \bs{\Phi}^*\bs{x}+\bs{n}^{\Nyq},
\end{equation}
with $\bs{\Phi}:=\bs{I}_{N_{\rm p}}\otimes\bs{F}$, $\bs{y}^{\Nyq}:={\rm vec}(\bs{Y}^{\Nyq})$, $\bs{x}:={\rm vec}(\bs{X})$, and $\bs{n}^{\Nyq}:={\rm vec}(\bs{N}^{\Nyq})$.

\subsection{Noise Model Identification and Estimation}
\label{sec:noise}

In practice, FTI experiments are of course corrupted by different noise sources. Let us list and assess the impact of the main ones. 

First, there is the \textit{observation noise} that mainly results from the combination of camera's electronic noise (such as thermal noise), quantization noise and photon noise. Electronic noise is commonly assumed additive, independent of the observations, and distributed as Gaussian and homoscedastic noise, \ie with constant variance for all measurements. Quantization noise is induced by the digitization of the observations. Strictly speaking, this is not noise but a deterministic distortion. However, when the bit-depth is large (\ie under the high-resolution assumption), each measurement is quantized over a large number of bits and the quantization distortion is also well modeled by an additive, heteroscedastic Gaussian noise~\cite{gray1998quantization}. Finally, the photon noise (or \emph{shot} noise) is signal dependent and associated with the quantized nature of light, \ie camera sensors record light intensity by integrating individual photons' energies over a given time interval. Photon noise is modeled by a Poisson distribution, whose variance is equal to its mean. Therefore, for high-photon counting rates, the ratio between the standard deviation and the mean of the light intensity vanishes, so that the impact of photon noise is limited before electronic and quantization noises. We will assume this high-photon counting regime in all our experiments (see Sec.~\ref{sec:Sec_06}).

Second, a \textit{modeling noise} is induced by the discretization of the sensing model \eqref{eq:full-continuous-sensing-model}, \ie by the discrepancy between the output of \eqref{eq:nyquist-fti}, given the discretization $\bs X$ of the continuous HS volume $\cl X$, and the true FTI observations in the absence of observation noises. In this work, we suppose that this modeling noise is limited by considering large resolutions $N_\xi$ and $N_{\rm p}$. Note that in \cite{roman2014asymptotic,adcock2013breaking}, the authors directly solve the infinite dimensional inverse problem posed by the reconstruction of a continuous function, sparsely representable in a continuous basis, and observed from finite or countable linear observations. We leave for a future study the application of this framework to our work. 

Third, there exist other \textit{instrument noises} corrupting the acquired data. For instance, bad instrument calibration induces an error in the OPD origin, supposed to lie on $\xi = 0$ in \eqref{eq:full-continuous-sensing-model}. We will see in the following that this point is important for applying VDS strategies. We omit this error by assuming an accurate calibration process. Additionally, Sec.~\ref{sec:FTI continuous model} does not consider light diffraction through the different optical elements of the Michelson interferometer and through the transparent biological specimen itself. This diffraction induces, however, spatial mixing of the recorded interferences, that would be modeled by convolving uncorrupted observation by an instrumental \emph{point spread function}.

\medskip
Consequently, since these noise sources are independent and the interferograms in~\eqref{eq:nyquist-fti} are already DC-free (with mean removed), we will suppose that the discrete FTI observations are corrupted by a global additive, zero-mean, homoscedastic Gaussian noise $\bs{N}^\Nyq = (\bs n^\Nyq_1, \cdots, \bs n^\Nyq_{N_{\rm p}}) \in \bb{R}^{N_\xi\times N_{\rm p}}$, with $N_{j,\nell}^\Nyq\sim_{\iid} \mathcal{N}(0,\sigma^2_\Nyq)$ for all OPDs $j \in \range{N_\xi}$ and pixels $\nell \in \range{N_{\rm p}}$. Given the recorded discrete and noisy FTI measurements $\bs Y^\Nyq$, the variance $\sigma^2_\Nyq$ can be estimated using, \eg the Robust Median (RM) estimator~\cite{donoho1994ideal,taswell2000and} (see Sec.~\ref{subsec:Simulated experimental data}). 

As an important aspect in the analysis of the two compressive FTI schemes developed in Sec.~\ref{sec:Sec_04} and Sec.~\ref{sec:Sec_05}, we will consider that the noise corrupting these compressive observations is also an additive homoscedastic Gaussian noise with the same variance, \ie with the variance $\sigma^2_\Nyq$ estimated from the Nyquist FTI acquisition.   

%%%%%%%%%%%%%%%%%%%%%%%%%%%%%%%%%%%%%%%%%%%%%%%%%%%%%%%%%%%%%%%%%%%%%%%%%%%%%%%%%%%%%%
\section{Sparse Models for a Biological HS Volume}
\label{sec:sparsity models}

To solve problems of the form~\eqref{eq:BPDN}, it is crucial to select a sparsity basis $\bs \Psi$ that captures key information about the signal of interest with a minimum of basis elements: we must ensure that the first error term in the right-hand side of \eqref{eq:l2-l1-inst-optimality} decreases rapidly when the sparsity level increases.
\begin{remark}
Beyond the sparsity models used in this work, one can exploit other HS priors. A specimen is comprised of few biological elements whose spectral signatures share common support set in a sparsity basis. Moreover, smoothness of the spectral signatures (see, \eg Fig.~\ref{fig:gt}) results in highly correlated spectral bands. These priors can be imposed on the reconstruction problem using nuclear-norm, mixed-norm, or both~\cite{jacques2011panorama,golbabaee2012hyperspectral,golbabaee2012joint}.
\end{remark}
In this work, we focus on two HS sparsity priors both leveraging the Haar wavelet transform in the spectral or in the spatiospectral domains \cite{mallat2008wavelet}. While these choices are perfectible, \eg by selecting other wavelet schemes (such as the Daubechies wavelets) or redundant systems~\cite{mallat2008wavelet}, the two selected priors yield computable bounds on the local coherence of the corresponding sensing and sparsity bases. Following Sec.~\ref{sec:compr-uds-vds}, they thus provide guarantees on the quality of the reconstructed HS volume.
 
\paragraph{(i) Spectral sparsity} In the context of fluorescence microscopy of still biological specimen, each spatial location of an HS volume (\ie $\bs{\mathcal{X}}$ in Sec.~\ref{sec:discrete-model}) corresponds to the spectrum of the specimen on this position, \ie the mixture of the spectral signatures of the fluorescent dyes. In general, these spectral signatures are (piecewise) smooth (see, \eg \cite{fluorochromes,moshtaghpour2018multilevel}, and Fig.~\ref{fig:gt} for the spectra of 3 common fluorochromes); we expect them to be sparsely approximable in a wavelet basis, such as the 1D Haar wavelet system~\cite{mallat2008wavelet,jacques2011panorama}.

Our first possible prior will thus leverage this fact and be purely spectral. This will be crucial in Sec.~\ref{sec:Sec_03} for our first compressive FTI system, the Coded Illumination FTI, which subsamples the OPD domain of interferometric data according to an identical sampling pattern for all spatial locations; this scheme can thus be seen as the replication of $N_{\rm p}$ 1D CS systems that can only be controlled by the HS spectral sparsity.

Mathematically, our purely spectral sparsity basis is defined by $\bs \Psi := \bs I _{N_{\rm p}} \otimes \bs \Psi_{\rm 1D}$. This basis is associated with the following representations of the HS volume $\bs X$:
$$
  \bs X =\bs \Psi_{\rm 1D} \bs{S},\  \bs S =\bs \Psi^*_{\rm 1D} \bs{X}, \quad {\rm or}\quad \bs x = \bs \Psi \bs s,\ \bs s = \bs \Psi^* \bs x, 
$$
with $\bs s := {\rm vec}(\bs S)$, and where $\bs \Psi_{\rm 1D} \in \bb R^{N_\xi \times N_\xi}$ is the 1D discrete Haar wavelet basis. From the above considerations, the columns of $\bs S=(\bs s_1, \cdots, \bs s_{N_{\rm p}})$ (and the vector $\bs s$) are expected to have small best $K$-term approximation errors $\sigma_K(\bs s_i)_1$ (resp. $\sigma_K(\bs s)_1$) for a relatively small $K \ll N_\xi$ (resp. $K \ll \Nhs = N_\xi N_{\rm p}$). Sec.~\ref{sec:Sec_03} leverages this model for our analysis.\\[-2mm] 

\paragraph{(ii) Joint spatiospectral sparsity} The HS volume of a still specimen observed by an FTI system is made of a collection of monochromatic images describing the spatial configuration of fixed materials at a given wavenumber index. These images are expected to be sparsely approximable in a convenient 2D basis, \eg a 2D Haar wavelet basis $\bs \Psi_{\rm 2D} \in \bb R^{N_{\rm p} \times N_{\rm p}}$ \cite{mallat2008wavelet,jacques2011panorama}. 

Therefore, combining this new representation with the spectral sparsity model above, our second sparsity prior is defined by the sparsity basis $\bs \Psi :=\bs \Psi_{\rm 2D}\otimes\bs \Psi_{\rm 1D}$, where we focus on the common 2D Haar wavelet transform $\bs \Psi_{\rm 2D}$ with \emph{isotropic levels}, \ie obtained by multiplying 1D wavelets with identical scales for both the horizontal and vertical directions. However, our results extend to the case of \emph{anisotropic levels} where the identification of the scale of each 1D wavelet in each spatial direction is removed, \ie when we can write $\bs \Psi_{\rm 2D} = \bs \Psi_{\rm 1D} \otimes \bs \Psi_{\rm 1D}$~\cite{nowak1999wavelet} (see App.~\ref{sec:proof sifti}).
 
Thanks to this sparsity basis $\bs \Psi$,  we expect a small best $K$-term approximation error for the vectorization $\bs s$ of the matrix $\bs S \in \bb R^{N_\xi \times N_{\rm p}}$ in the representation $\bs X = \bs \Psi_{\rm 1D} \bs S \bs{\Psi}_{\rm 2D}^\top$, or $\bs x = \bs \Psi \bs s$. We will use this second model for SI-FTI scheme in Sec.~\ref{sec:Sec_04}, which allows for an easy characterization of this stronger sparsity prior.

\section{Coded Illumination-FTI (CI-FTI)}          
\label{sec:Sec_03}

In this section we focus on a simple modification of the FTI system. We introduce a temporal coding of the specimen illumination, \eg by acting on the global activation of the light source (see Fig.~\ref{fig:CIFTI_SIFTI_scheme}~(top)). This is potentially an easy adaptation, specially when the FTI module is combined with, \eg a confocal microscope where the light illumination can be programmed. We support this modification with the VDS scheme introduced in \cite{krahmer2014stable} (see Sec.~\ref{sec:compr-uds-vds}). We first describe the principles of a CI-FTI system before explaining how to reconstruct the observed HS volume. 

\subsection{Acquisition Strategy}
\label{sec:acquisition-strategy}

We here refer to \textit{coded illumination} as a technique that activates the light source in a portion of time slots, \ie coding temporally the light distribution. Recall that in Nyquist FTI, the light source is illuminating the biological specimen during $N_\xi$ time slots. As will be clear below, from the one-to-one correspondence between the time domain (or time slot), mirror position in the FTI system and OPD value (see Sec.~\ref{sec:FTI continuous model}), this temporal illumination coding amounts thus to sub-sample the OPD domain. 

Pursuing the context developed in Sec.~\ref{sec:FTI continuous model}, we assume that the light source delivers constant, but controllable, intensity $\bar I_0 := \int_0^{+\infty}|\tilde{E}_0(\nu)|^2\, \ud \nu$ per second and per unit area on any location of the biological specimen. If each OPD sample is associated with a time slot of $\tau_\xi$ second, then the total light exposure per unit area on each location of the biological specimen is equal to $N_\xi \tau_\xi \bar I_0$. The idea here is to reduce this exposure to $M_\xi \tau_\xi \bar I_0$ by activating the light source only over $M_\xi<N_\xi$ time slots; these being associated with a subset $\Omega^\xi := \{\beta_1,\, \cdots, \beta_{M_\xi}\}$ of $M_\xi$ (possibly non-unique) OPD samples $\beta_r \in \range{N_\xi}$ with $r \in \range{M_\xi}$.

For this purpose, we decide to follow the VDS scheme introduced in Sec.~\ref{sec:compr-uds-vds}; we generate $\Omega^\xi$ by randomly drawing its elements, \ie $\beta_r \sim_{\iid} \beta$, for $r\in\range{M_\xi}$, and $\beta \in \range{N_\xi}$ is a \rv whose pmf $p(\nell):=\bb{P}[\beta = \nell]$ is optimized in Sec.~\ref{sec:cifti_HS Reconstruction Method and Guarantee} according to the FTI sensing and the HS sparsity prior. 

According to the discrete FTI sensing model (see Sec.~\ref{sec:discrete-model}), this selection of OPD samples results in recording the collection of interferograms $\mathcal{Y}\big((\nell-N_\xi/2)\Delta_\xi; j_1,j_2\big)$ at OPD indices $\nell \in \Omega^{\xi}$. The acquisition model of CI-FTI then reads 
\begin{figure}[!t]
  \centering
  \includegraphics[width=0.8\textwidth]{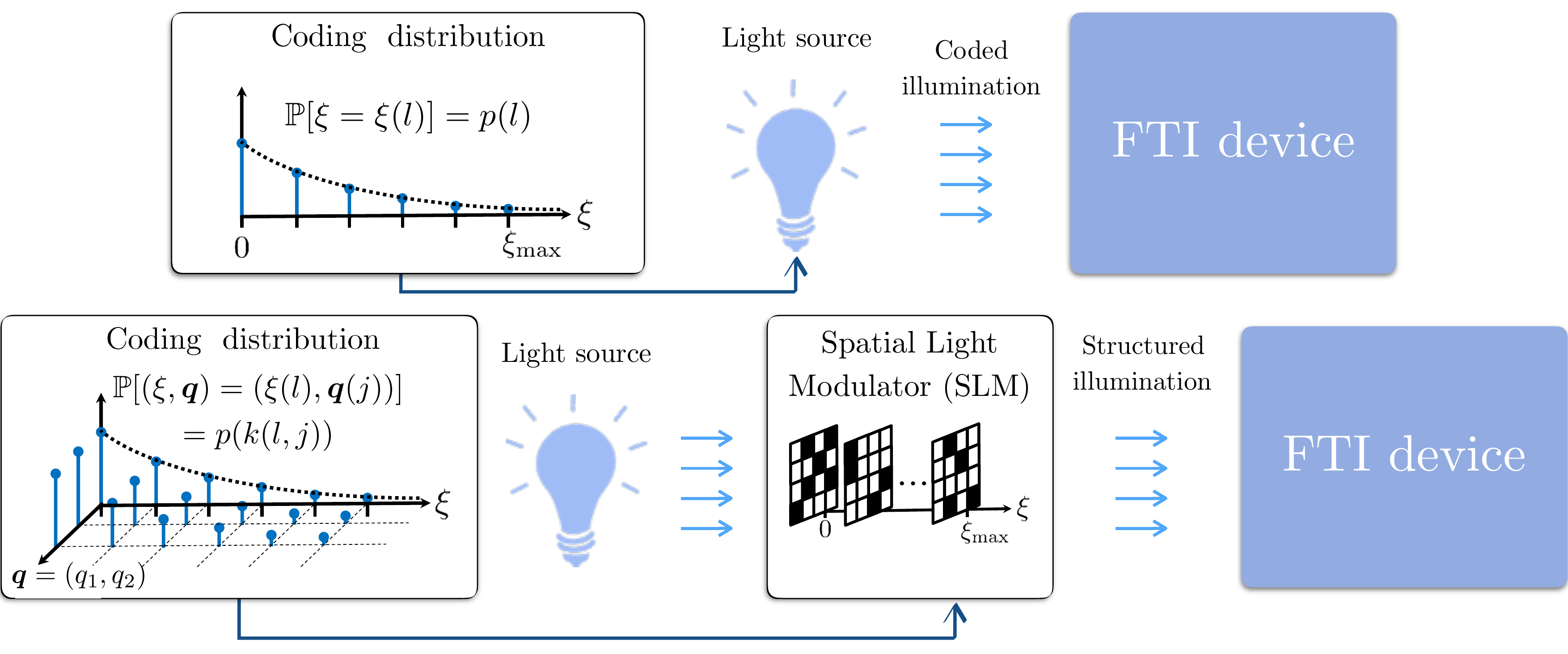}
  \caption{Illustration of (top) coded illumination-FTI and (bottom) structured illumination-FTI systems according to the coding distributions (\ie the pmfs of VDS) established in Sec.~\ref{sec:Sec_03} and Sec.~\ref{sec:Sec_04}. For the sake of simplicity only the positive part of OPD axis is shown. In these figures, $\xi(\nell)$ represents the $\nell^{\rm th}$ OPD location, \ie $\xi(l) := (l - N_\xi/2)\Delta_\xi$, and $\bs q(j)$ the $j^{\rm th}$ spatial location in the ordering of the vectorized spatial coordinates.}
  \label{fig:CIFTI_SIFTI_scheme}
\end{figure}
\begin{equation}
  \bs{Y}^{\CI} = \bs{F}^*_{\Omega^{\xi}}\bs{X}+\bs{N},~ {\rm or}~  \bs{y}^{\CI} = (
  \bs{I}_{N_{\rm p}}\otimes\bs{F}^*_{\Omega^{\xi}})\bs{x}+\bs{n} = \bs y^\Nyq_\Omega,\label{eq:cifti acquisition}
\end{equation}
where $\Omega= \bigcup_{j=1}^{N_{\rm p}} \{N_\xi(j-1)+\Omega^\xi\}$ contains the indices associated with all selected entries of $\bs{y}^{\Nyq}$, $\bs{n} = {\rm vec}(\bs{N})$, and $\bs{N} \in \bb{R}^{M_\xi \times N_{\rm p}}$ models an additive measurement~noise. 

Following the considerations of Sec.~\ref{sec:noise}, we assume 
that $\bs N$ is a random Gaussian noise with variance $\sigma^2$, \ie $N_{r, j} \sim_{\iid} \cl N(0, \sigma^2)$ for $r\in \range{M_\xi}$ and $j \in \range{N_{\rm p}}$. Moreover, we suppose that $\sigma$ is independent of both the number of measurements and the observed HS volume. The validity of this assumption will be confirmed in Sec.~\ref{subsec:Simulated experimental data} by showing that $\sigma$ only moderately grows when the intensity of the HS volume strongly increases. Consequently, we set hereafter $\sigma^2 = \sigma^2_\Nyq$, with $\sigma^2_\Nyq$ being the variance of the measurement noise on each Nyquist FTI observation, \eg estimated from an RM estimator~\cite{donoho1994ideal,taswell2000and}, or by the calibration of the FTI system.

\subsection{HS Reconstruction Method and Guarantee}\label{sec:cifti_HS Reconstruction Method and Guarantee}
Given the noisy CI-FTI measurements in~\eqref{eq:cifti acquisition}, we leverage the VDS scheme supported by Prop.~\ref{prop:variable_density_sampling} and Prop.~\ref{prop:RIP}, and propose the following convex optimization problem for recovering the $\Nhs$-voxel HS volume, \ie
\begin{equation}
  \label{eq:cifti main reconstruction}
  \hat{\bs{x}} = \underset{\bs{u}\in \bb{C}^{\Nhsexp}}{\argmin} \|\bs{\Psi}^\top\bs{u}\|_1\ \st\ \|\bs{D}^\xi(\bs{y}^{\CI}_j-\bs{F}^*_{\Omega^\xi}\bs{u}_j)\|\le \varepsilon_\CI \sqrt{M_\xi},~\forall j \in \range{N_{\rm p}},
\end{equation}
where $\Omega^\xi = \{\beta^1,\,\cdots,\beta^{M_\xi}\}$ is randomly generated as described in Sec.~\ref{sec:acquisition-strategy}, $\bs{D}^\xi = \diag(\bs d^\xi) \in \bb{R}^{M_\xi \times M_\xi}$ is a diagonal matrix such that $d^\xi_{r} = 1/(p(\beta^r))^{1/2}$, and $\varepsilon_\CI$ is a bound on the measurement noise level such that (with high probability) $\|\bs{D}^{\xi}\bs{n}_j\|\le\varepsilon_\CI\sqrt{M}$ for all $j \in \range{N_{\rm p}}$. We will characterize it momentarily.

\begin{remark}
For the sake of consistency with the result of Krahmer and Ward in Prop.~\ref{prop:variable_density_sampling}, we do not force the estimation of \eqref{eq:cifti main reconstruction} to be real-valued, even, or non-negative. Adding those extra prior information do not worsen the theoretical guarantee in \eqref{eq:cifti main reconstruction}, while, numerically, they improve the quality of the HS data recovery, as they reduce the set of feasible solutions in~\eqref{eq:cifti main reconstruction}.
\end{remark}

In the optimization \eqref{eq:cifti main reconstruction}, while a global fidelity constraint could have been set between the partial interferometric observations $\bs y^\CI$ and the candidate HS volume $\bs u$, we rather impose $N_{\rm p}$ individual fidelity constraints, one per pixel index $j\in \range{N_{\rm p}}$. Moreover, concerning the regularizer of \eqref{eq:cifti main reconstruction}, we consider only a spectral sparsity prior described in Sec.~\ref{sec:sparsity models}, \ie $\bs \Psi =\bs I_{N_{\rm p}} \otimes \bs \Psi_{\rm 1D}$. The compressibility of the acquisition is indeed mainly brought on the spectral domain, with an OPD subsampling pattern shared for all spatial locations. As a result (expressed in the next lemma), the choice of these fidelity constraints together with this specific prior allows for the decomposition of \eqref{eq:cifti main reconstruction} into small-size problems, for which we can find individual recovery guarantee.
\begin{lem}\label{lemma:cifti}
  Problem~\eqref{eq:cifti main reconstruction} can be decoupled into $N_{\rm p}$ subproblems
  \begin{equation}
    \hat{\bs{x}}_j = \underset{\bs{u}\in \bb{C}^{N_\xi}}{\argmin} \|\bs{\Psi}_{\rm 1D}^\top\bs{u}\|_1\ \st\ \|\bs{D}^\xi(\bs{y}^{\CI}_j-\bs{F}^*_{\Omega^\xi}\bs{u}_j)\|\le \varepsilon_\CI \sqrt{M_\xi},\label{eq:cifti sub reconstruction}
  \end{equation}
  with $1\le j \le N_{\rm p}$.
\end{lem}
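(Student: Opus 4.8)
The plan is to show that both the objective and the feasible set of \eqref{eq:cifti main reconstruction} separate across the pixel index $j$, so that the joint minimization over $\bs u \in \bb C^{\Nhsexp}$ reduces to $N_{\rm p}$ independent minimizations over the columns $\bs u_j$ of the reshaped iterate $\bs U \in \bb C^{N_\xi \times N_{\rm p}}$. The whole argument rests on the Kronecker structure $\bs\Psi = \bs I_{N_{\rm p}} \otimes \bs\Psi_{\rm 1D}$ of the purely spectral sparsity basis, together with the fact that, under the column-stacking vectorization $\bs u = \ve(\bs U)$ fixed in Sec.~\ref{sec:notations}, the $\ell_1$-norm of a vectorized matrix equals the sum of the $\ell_1$-norms of its columns.

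First I would rewrite the regularizer in terms of the columns $\bs u_j$. Using $\bs\Psi^\top = \bs I_{N_{\rm p}} \otimes \bs\Psi_{\rm 1D}^\top$ together with the standard identity $(\bs B^\top \otimes \bs A)\,\ve(\bs U) = \ve(\bs A\,\bs U\,\bs B)$ applied with $\bs A = \bs\Psi_{\rm 1D}^\top$ and $\bs B = \bs I_{N_{\rm p}}$, one obtains $\bs\Psi^\top \bs u = \ve(\bs\Psi_{\rm 1D}^\top \bs U)$, whose $j$-th column is $\bs\Psi_{\rm 1D}^\top \bs u_j$. Hence
\begin{equation*}
  \|\bs\Psi^\top \bs u\|_1 \ =\ \|\ve(\bs\Psi_{\rm 1D}^\top \bs U)\|_1 \ =\ \sum_{j=1}^{N_{\rm p}} \|\bs\Psi_{\rm 1D}^\top \bs u_j\|_1,
\end{equation*}
so the objective is a sum of per-pixel terms, each depending only on $\bs u_j$.

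Next I would observe that the $N_{\rm p}$ fidelity constraints $\|\bs D^\xi(\bs y^\CI_j - \bs F^*_{\Omega^\xi}\bs u_j)\| \le \varepsilon_\CI \sqrt{M_\xi}$ are, by construction in \eqref{eq:cifti main reconstruction}, already imposed one per pixel, and the $j$-th constraint involves only the column $\bs u_j$ (the sampling pattern $\Omega^\xi$ and the weights $\bs D^\xi$ being shared across all pixels). Consequently the feasible set is a Cartesian product $\prod_{j} \cl C_j$ of per-pixel feasible sets $\cl C_j := \{\bs u_j : \|\bs D^\xi(\bs y^\CI_j - \bs F^*_{\Omega^\xi}\bs u_j)\| \le \varepsilon_\CI \sqrt{M_\xi}\}$.

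Finally, minimizing a sum of independent terms over a Cartesian product of feasible sets is equivalent to minimizing each term over its own set: the infimum of the sum equals the sum of the infima, and a global minimizer of \eqref{eq:cifti main reconstruction} is recovered by concatenating any collection of per-pixel minimizers $\hat{\bs x}_j$ solving \eqref{eq:cifti sub reconstruction}. This yields the claimed decoupling. I do not expect a genuine obstacle here; the only points requiring care are the correct bookkeeping of the transpose and Kronecker conventions in the vectorization identity, and making explicit the otherwise routine fact that a separable objective over a product feasible set decomposes exactly, which I would state rather than leave implicit.
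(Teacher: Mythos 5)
Your proposal is correct and follows essentially the same route as the paper: the paper's proof likewise invokes the separability of the $\ell_1$-prior, $\|\bs{\Psi}^\top\bs{u}\|_1 = \sum_{j}\|\bs{\Psi}_{\rm 1D}^\top\bs{u}_j\|_1$, and then appeals to the standard fact (cited from Rockafellar) that a separable objective over a product feasible set decomposes. You merely make explicit the Kronecker/vectorization bookkeeping and the product-set argument that the paper leaves implicit.
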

\begin{proof}
  From the separability of the $\ell_1$-prior $\|\bs{\Psi}^\top\bs{u}\|_1 = \sum_{j=1}^{N_{\rm p}}\|\bs{\Psi}_{\rm 1D}^\top\bs{u}_j\|_1$ and according to the argument in~\cite[pp. 337]{rockafellar2015convex}, the proof is straightforward.
\end{proof}

From $N_{\rm p}$ sub-problems in \eqref{eq:cifti sub reconstruction}, we can develop a recovery guarantee for the reconstruction of any HS volume $\bs X$ from \eqref{eq:cifti main reconstruction} according to the specified sparsity basis. According to Prop.~\ref{prop:variable_density_sampling}, given the vector of bounds $\bs \kappa$ on the local coherence between $\bs \Psi_{1D}$ and $\bs F$, \ie with $\mu^{\rm loc}_\nell(\bs F, \bs \Psi_{\rm 1D}) \leq \kappa_\nell$, by selecting $M_\xi \gtrsim \delta^{-2}\|\bs \kappa\|^2 K_\xi \log^3(K_\xi) \log(N_\xi)$ OPD indices with respect to the pmf defined in~\eqref{eq:vds pmf}, the RIP of order $K_\xi$ holds for the matrix $(M_\xi)^{-1/2}\bs{D}^\xi\bs{F}_{\Omega^{\xi}}\bs{\Psi}_{\rm 1D}$ with probability exceeding $1-N_\xi^{-c\log^3(K_\xi)}$. 
Since Prop.~\ref{prop:RIP} and Prop.~\ref{prop:variable_density_sampling} provide uniform guarantee in the sense that the recovery is ensured for all possible signals, the reconstruction of each column of the HS volume from \eqref{eq:cifti sub reconstruction} satisfies 
\begin{equation*}
  \ts    \|\bs{x}_j-\hat{\bs{x}}_j\| \le \frac{2\sigma_{K_\xi}(\bs{\Psi}_{\rm 1D}^\top\bs{x}_j)_1}{\sqrt{K_\xi}}+\varepsilon_\CI,\quad \forall j\in\range{N_{\rm p}},
\end{equation*}
with probability exceeding $1-N_\xi^{-c\log^3(K_\xi)}$. Consequently, since $(a+b)^2 \le 2(a^2+b^2)\le 2(a+b)^2$ for all $a, b > 0$, we can bound the estimation error of the whole HS image as follows
\begin{equation}
  \ts    \|\bs{x}-\hat{\bs{x}}\| \le \frac{2\sqrt{2}}{\sqrt{K_\xi}}\left(\sum_{j=1}^{N_{\rm p}}\left(\sigma_{K_\xi}(\bs{\Psi}_{\rm 1D}^\top\bs{x}_j)_1\right)^{2}\right)^{1/2}+\sqrt{2N_{\rm p}}\varepsilon_\CI.\label{eq:cifti error}
\end{equation}

In order to adjust the noise level $\varepsilon_\CI$ as a function of known/estimable parameters, \eg $M_\xi$, $N_\xi$ and $\sigma_\Nyq$, we need to characterize the pmf responsible for the selection of the OPD indices.

\begin{proposition}
\label{prop:cifti local coherence}
In the framework of Prop.~\ref{prop:variable_density_sampling}, set $K = K_\xi$, $N = N_\xi$, $M = M_\xi$, $\bs{\Phi} = \bs{F}$ and $\bs \Psi = \bs{\Psi}_{\rm 1D}$. From this choice, we have 
$$
\ts \mu_\nell(\bs F, \bs{\Psi}_{\rm 1D})\ \leq\ \kappa_\nell := \sqrt 2 \min\big\{1, |\nell - \frac{N_\xi}{2}|^{-1/2}\big\},\quad \nell \in \range{N_\xi},
$$
with $\|\bs \kappa\|^2 \leq 8 + 4\log(\frac{N_\xi}{2}) \lesssim \log(N_\xi)$. Moreover, the corresponding pmf in \eqref{eq:vds pmf} is given by 
  \begin{equation}
\label{eq:pmf-ci-FTI}
    \ts p(\nell) =C_{N_\xi}\,\min\{ 1, |\nell-\frac{N_\xi}{2}|^{-1}\}, \quad \nell \in \range{N_\xi},
  \end{equation}
  where the normalization constants $C_{N_\xi}$ respects $2\log(N_{\xi}/2) < C_{N_\xi}^{-1} < 4+2\log(N_{\xi}/2)$.
\end{proposition}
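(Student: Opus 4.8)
The plan is to reduce the whole statement to a single estimate: the magnitude of the discrete Fourier transform of a Haar atom evaluated at a prescribed frequency. Since $\bs A = \bs F^*\bs\Psi_{\rm 1D}$, the local coherence \eqref{eq:loc-coh} at the OPD index $\nell$ is $\mu_\nell(\bs F,\bs\Psi_{\rm 1D}) = \max_j |\langle \bs f_\nell, \bs\psi_j\rangle|$, the largest inner product between the $\nell$-th Fourier atom $\bs f_\nell$ and any column $\bs\psi_j$ of the Haar basis (the scaling vector, or a wavelet at some scale and position). Because the OPD origin sits on the index $N_\xi/2$ (the symmetric construction of Sec.~\ref{sec:discrete-model}), the frequency effectively carried by $\bs f_\nell$ is $m := \nell - \tfrac{N_\xi}{2}$, with $|m|\le N_\xi/2$, which is what centers $\kappa_\nell$ on $N_\xi/2$. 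First I would dispose of the scaling (constant) vector: in normalized form it is $\tfrac{1}{\sqrt{N_\xi}}\bs 1$, and $\langle \bs f_\nell, \tfrac{1}{\sqrt{N_\xi}}\bs 1\rangle$ equals $1$ when $m=0$ and vanishes otherwise, which accounts for the unit cap (the $\min\{1,\cdot\}$) attained at $\nell=N_\xi/2$.

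The core step is the wavelet estimate, mirroring the Fourier--Haar computation in \cite{krahmer2014stable}. A Haar wavelet of dyadic support length $L\in\{2,4,\dots,N_\xi\}$ is, up to a translation contributing only a unimodular phase, equal to $+1/\sqrt L$ on its first $L/2$ samples and $-1/\sqrt L$ on the next $L/2$. Summing the two geometric series $\sum_k e^{-2\pi i m k/N_\xi}$ over each half gives, in modulus,
\[
  |\langle \bs f_\nell, \bs\psi_j\rangle| \;=\; \frac{2\sin^2\!\big(\tfrac{\pi m L}{2N_\xi}\big)}{\sqrt{N_\xi L}\,\big|\sin(\tfrac{\pi m}{N_\xi})\big|}.
\]
To extract the clean constant I would change variables to $w := \tfrac{\pi m L}{2N_\xi}$ and $v := \tfrac{\pi m}{N_\xi}\in(0,\tfrac{\pi}{2}]$, which recasts the normalized quantity as
\[
  \sqrt{|m|}\,|\langle \bs f_\nell, \bs\psi_j\rangle| \;=\; \tfrac{\sqrt 2}{\sqrt\pi}\,\frac{\sin^2 w}{\sqrt w}\,\frac{v}{\sin v}.
\]
Bounding the two independent factors separately --- $\tfrac{v}{\sin v}\le \tfrac{\pi}{2}$ on $(0,\tfrac{\pi}{2}]$, and $\sup_{w>0}\tfrac{\sin^2 w}{\sqrt w}$ a fixed finite constant below $2/\sqrt\pi$ --- yields $\sqrt{|m|}\,|\langle \bs f_\nell,\bs\psi_j\rangle|\le\sqrt 2$, hence $\mu_\nell\le\sqrt 2\,|m|^{-1/2}$ for $m\neq0$. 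Together with the scaling-vector contribution this gives the claimed $\kappa_\nell=\sqrt2\min\{1,|\nell-\tfrac{N_\xi}{2}|^{-1/2}\}$. I expect this trigonometric optimization, along with careful bookkeeping of the DFT convention and the $N_\xi/2$ shift, to be the main obstacle: the decay exponent $-\tfrac12$ is forced by the resonance $L\sim N_\xi/|m|$, whereas the precise (slightly loose but clean) constant $\sqrt 2$ comes from the product bound above.

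The remaining two claims are routine harmonic-sum estimates. Writing $m=\nell-N_\xi/2$ and using $\kappa_\nell^2 = 2\min\{1,|m|^{-1}\}$, we have $\|\bs\kappa\|^2 = 2S$ with $S:=\sum_{\nell}\min\{1,|m|^{-1}\} = 1+\sum_{m=1}^{N_\xi/2}\tfrac1m+\sum_{m=1}^{N_\xi/2-1}\tfrac1m$, since $|m|^{-1}\le 1$ exactly for $|m|\ge1$. The bound $\sum_{m=1}^{n}\tfrac1m\le 1+\log n$ then gives $S\le 3+2\log(\tfrac{N_\xi}{2})$, so $\|\bs\kappa\|^2\le 6+4\log(\tfrac{N_\xi}{2})\le 8+4\log(\tfrac{N_\xi}{2})$. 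For the pmf, \eqref{eq:vds pmf} gives $p(\nell)=\kappa_\nell^2/\|\bs\kappa\|^2\propto\min\{1,|m|^{-1}\}$, i.e.\ $p(\nell)=C_{N_\xi}\min\{1,|m|^{-1}\}$ with $C_{N_\xi}^{-1}=S$; the stated two-sided bound $2\log(\tfrac{N_\xi}{2})<C_{N_\xi}^{-1}<4+2\log(\tfrac{N_\xi}{2})$ follows from the same estimates, the upper bound being $S\le 3+2\log(\tfrac{N_\xi}{2})$ and the lower bound coming from $\sum_{m=1}^{n}\tfrac1m>\log n$, since the extra $+1$ in $S$ dominates $\log\tfrac{N_\xi/2}{N_\xi/2-1}$.
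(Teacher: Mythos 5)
Your proposal is correct and follows essentially the same route as the paper's Appendix~C: dispose of the scaling vector, evaluate the Fourier--Haar inner product as two geometric sums over the wavelet's half-supports to obtain the closed form $\frac{2\sin^2(\pi m L/(2N_\xi))}{\sqrt{N_\xi L}\,|\sin(\pi m/N_\xi)|}$, bound it by $\sqrt{2}\,|m|^{-1/2}$, and finish with the same harmonic-sum estimates for $\|\bs \kappa\|^2$ and $C_{N_\xi}$. The only (minor) divergence is the final trigonometric step: the paper splits into the cases $\pi|m|2^{-n}\gtrless \pi/2$ using $|\sin \pi x|\ge 2|x|$ and $\cos|x|\ge 1-\frac{2}{\pi}|x|$, whereas you decouple the two factors via the substitution $(w,v)$ and bound $v/\sin v\le \pi/2$ and $\sup_{w>0}\sin^2 w/\sqrt{w}\le 1<2/\sqrt{\pi}$ separately --- both yield the same constant $\sqrt{2}$, so this is a stylistic rather than substantive difference.
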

For the proof see App.~\ref{sec:proof cifti}. Note that we computed a bound (not the exact values) for the local coherence. Even though this tight bound is not optimal, it defines a meaningful sampling pmf \eqref{eq:pmf-ci-FTI} and sample complexity bound \eqref{eq:cifti m}. We will see in Sec.~\ref{sec:Sec_06} that the performance of the proposed pmf is very close to the performance of the optimal pmf computed numerically. The same argument holds for Prop.~\ref{prop:sifti local coherence} related to the SI-FTI system.

\begin{remark}
Following this proposition, if the random quantities 
$\Omega = \Omega^\xi$ and $\bs D = \bs D^\xi$ defined above are specified by the pmf of Prop.~\ref{prop:cifti local coherence}, we can estimate the level of noise in CI-FTI at all pixels $j \in \range{N_{\rm p}}$. 
Since the pmf $p$ in \eqref{eq:pmf-ci-FTI} corresponds to a VDS scheme of exponent $\alpha=1$, offset $l_0 = N_\xi/2$ and constant $c_\eta = C_{N_\xi}$ in Rem.~\ref{remark:simple_bound_for_rho}, we find $\rho = C_{N_\xi}^{-1} (N_\xi - (N_\xi/2))/ N_\xi \leq 2+\log(N_{\xi}/2)$. Therefore, setting there $\sigma = \sigma_\Nyq$ and $\eta(l) = p(l)$ in 
Cor.~\ref{cor:noise-level-estim-Gauss-noise} with that bound for $\rho$, and using a union bound over all pixels, we get
\begin{equation}
\ts\frac{1}{\sqrt M_\xi} \|\bs{D}^\xi(\bs{y}^{\CI}_j-\bs{F}^*_{\Omega^\xi}\bs{x}_j)\|\ \leq\ \varepsilon_\CI(s) := \varepsilon_{\sigma_\Nyq,s}(N_\xi, M_\xi, 2+\log(N_\xi/2)),~~\forall j \in \range{N_{\rm p}},\label{eq:cifti_final_noise_estimate}
\end{equation}
with probability exceeding $1-3N_{\rm p} e^{-s/2}$ and $\varepsilon_{\sigma,s}$ defined in \eqref{eq:noise-level-estim-Gauss-noise-ext-noise}.
\end{remark}

We are now ready to summarize all the analysis in this section and to provide the main result for CI-FTI in the following theorem.

\begin{thm}[Coded illumination-FTI]\label{thm:cifti}
  Given $s>0$, fix integers $K_\xi$, $N_\xi$, $N_{\rm p}$, $M_\xi$ such that $K_\xi \gtrsim \log(N_\xi)$ and 
  \begin{equation}
    M_\xi \gtrsim K_\xi \log^3(K_\xi) \log^2(N_\xi).\label{eq:cifti m}
  \end{equation}
Generate $M_\xi$ (possibly non-unique) OPD indices $\Omega^{\xi}=\{\beta^1,\,\cdots,\beta^{M_\xi}\} \subset \range{N_\xi}$ such that $\beta^r \sim_\iid \beta$ for $r \in \range{M_\xi}$, with $\beta$ a \rv with the pmf \eqref{eq:pmf-ci-FTI}. Then, given the corresponding noisy CI-FTI measurements $\bs y^{\CI}$ in~\eqref{eq:cifti acquisition}, the HS volume $\bs x$ can be approximated by solving~\eqref{eq:cifti main reconstruction} with the bound $\varepsilon_\CI(s)$ in \eqref{eq:cifti_final_noise_estimate}, up to an error 
  \begin{equation}
    \ts    \|\bs{x}-\hat{\bs{x}}\| \le \frac{2\sqrt{2}}{\sqrt{K_\xi}}\left(\sum_{j=1}^{N_{\rm p}}\left(\sigma_{K_\xi}(\bs{\Psi}_{\rm 1D}^\top\bs{x}_j)_1\right)^{2}\right)^{1/2}+\sqrt{2N_{\rm p}}\varepsilon_\CI(s),\label{eq:cifti error final}
  \end{equation}
  and with probability exceeding $1-N_\xi^{-c\log^3(K_\xi)}-3N_{\rm p}e^{-s/2}$.
\end{thm}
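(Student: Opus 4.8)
The plan is to assemble the ingredients already established in Sec.~\ref{sec:Sec_02} and Sec.~\ref{sec:Sec_03} into a single statement, the only genuinely new bookkeeping being the way the various random events combine. First I would invoke Lemma~\ref{lemma:cifti} to replace the coupled program \eqref{eq:cifti main reconstruction} by the $N_{\rm p}$ decoupled subproblems \eqref{eq:cifti sub reconstruction}, so that each column $\bs x_j$ of the HS volume can be treated by the one-dimensional recovery theory of Prop.~\ref{prop:RIP}.

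Next I would establish the restricted isometry property for the \emph{single} preconditioned sensing matrix $\tfrac{1}{\sqrt{M_\xi}}\bs D^\xi\bs F^*_{\Omega^\xi}\bs\Psi_{\rm 1D}$. By Prop.~\ref{prop:cifti local coherence} the local coherence between $\bs F$ and $\bs\Psi_{\rm 1D}$ is bounded by $\kappa_\nell$ with $\|\bs\kappa\|^2 \lesssim \log(N_\xi)$, and the drawing pmf is \eqref{eq:pmf-ci-FTI}. Feeding $\|\bs\kappa\|^2 \lesssim \log(N_\xi)$ together with the order $5K_\xi$ (required by the condition $\delta_{5K_\xi} < \tfrac{1}{3}$ of Prop.~\ref{prop:RIP}) into the sample-complexity bound \eqref{eq:rip_for_onb_vds} of Prop.~\ref{prop:variable_density_sampling}, and absorbing the factor $5$ and the constant $\delta$ into the implicit constant, yields exactly the requirement \eqref{eq:cifti m}. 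Prop.~\ref{prop:variable_density_sampling} then guarantees $\delta_{5K_\xi} < \tfrac{1}{3}$ with probability at least $1 - N_\xi^{-c\log^3(K_\xi)}$. The crucial structural point here is that the CI-FTI design uses an identical OPD subsampling pattern $\Omega^\xi$ and weighting $\bs D^\xi$ for every pixel, so this RIP is a single random event that, once it holds, applies deterministically to all $N_{\rm p}$ subproblems; hence this probability is \emph{not} inflated by a factor $N_{\rm p}$.

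Conditioned on the RIP event, I would apply Prop.~\ref{prop:RIP} column by column to get $\|\bs x_j - \hat{\bs x}_j\| \le 2\sigma_{K_\xi}(\bs\Psi_{\rm 1D}^\top\bs x_j)_1/\sqrt{K_\xi} + \varepsilon_\CI$, provided the weighted noise obeys $\|\tfrac{1}{\sqrt{M_\xi}}\bs D^\xi\bs n_j\| \le \varepsilon_\CI$. The admissible level $\varepsilon_\CI(s)$ is precisely the one derived in the Remark following Prop.~\ref{prop:cifti local coherence}: since \eqref{eq:pmf-ci-FTI} is a VDS of exponent $\alpha=1$, offset $l_0 = N_\xi/2$ and constant $c_\eta = C_{N_\xi}$, Remark~\ref{remark:simple_bound_for_rho} gives $\rho \le 2+\log(N_\xi/2)$, and Cor.~\ref{cor:noise-level-estim-Gauss-noise} with $\sigma = \sigma_\Nyq$ controls each pixel's weighted noise with failure probability at most $3e^{-s/2}$. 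A union bound over the $N_{\rm p}$ independent Gaussian realizations produces the joint bound \eqref{eq:cifti_final_noise_estimate} with failure probability at most $3N_{\rm p}e^{-s/2}$. Finally I would aggregate the per-pixel errors into the global estimate \eqref{eq:cifti error final} via the elementary inequality $(a+b)^2 \le 2(a^2+b^2)$ summed over $j$, exactly as in \eqref{eq:cifti error}, and close with a last union bound combining the RIP-failure and noise-failure events to reach the claimed probability $1 - N_\xi^{-c\log^3(K_\xi)} - 3N_{\rm p}e^{-s/2}$.

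The proof is thus mostly an orchestration of earlier results, and the only point demanding real care is the separate accounting of the two sources of randomness. The RIP must be charged only once because $\Omega^\xi$ is shared across pixels, whereas the noise bound must be union-bounded over pixels because the Gaussian corruptions are independent across spatial locations. Keeping these two accountings distinct — rather than naively union-bounding the RIP over $N_{\rm p}$ as well — is exactly what yields the clean probability statement, and is the subtlety I would be most vigilant about.
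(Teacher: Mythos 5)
Your proposal is correct and follows essentially the same route as the paper, whose proof is precisely the combination of Prop.~\ref{prop:variable_density_sampling}, Prop.~\ref{prop:RIP}, Lem.~\ref{lemma:cifti}, and Prop.~\ref{prop:cifti local coherence} with the aggregation step \eqref{eq:cifti error} and the noise bound \eqref{eq:cifti_final_noise_estimate}. Your explicit accounting of the two failure events --- charging the RIP once because $\Omega^\xi$ is shared across pixels, while union-bounding the Gaussian noise control over the $N_{\rm p}$ pixels --- is exactly what the paper does implicitly, and is stated here more carefully than in the original.
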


\begin{proof}
  A combination of Prop.~\ref{prop:variable_density_sampling}, Prop.~\ref{prop:RIP}, Lem.~\ref{lemma:cifti}, and Prop.~\ref{prop:cifti local coherence} with \eqref{eq:cifti error} completes the proof.
\end{proof}
In words, this theorem states that the pmf in (5.5) associated with a near-optimal sampling strategy, follows a two-sided power-law decay centered at zero OPD point $\xi=0$.

As discussed in Sec.~\ref{sec:fti}, zero OPD point can be determined prior to FTI acquisition process and be used for developing the above-mentioned sampling strategy. Moreover, by leveraging this VDS strategy, the amount of required light exposure for successful HS recovery behaves of the order of $K_\xi$, \ie for typical HS volumes $K_\xi \ll N_\xi$. However, following the calculation of $\bs \kappa$ in App.~\ref{sec:proof cifti}, it is seen that the use of a UDS strategy gives $M_\xi \gtrsim N_\xi K_\xi \log^3(K_\xi) \log(N_\xi)$, meaning that the reduction in light exposure is not possible in this context.

\section{Structured Illumination-FTI (SI-FTI)}             
\label{sec:Sec_04}

Structured Illumination-FTI is an alternative approach to further reduce the light exposure on a biological specimen compared to CI-FTI. In short, the light distribution is here coded (or \emph{structured}) in both spatial and OPD (or time) domains. We explain hereafter the SI-FTI acquisition procedure, the associated HS volume estimation problem, and we deduce theoretical guarantees on the quality of the reconstructed volume.

\subsection{Acquisition Strategy}
\label{sec:acquisition-strategy-subsec}

SI-FTI allows for the selection of different OPD samples at each specimen location, as opposed to CI-FTI where the selected OPD indices are shared for all locations. We assume in this work that this is achieved by spatially structuring (coding) the illumination of the system, \ie thanks to a spatial light modulator (SLM)~\cite{gehm2007single,duarte2008single} in between of the light source and the biological specimen (see Fig.~\ref{fig:CIFTI_SIFTI_scheme} (bottom)). 

\begin{remark} For the sake of simplicity, we suppose that the SLM and the imaging sensor have identical resolutions (\ie $N_{\rm p}$ pixels for both) and that there are perfect alignment and scaling between the discrete illumination pattern and the pixel grid of the camera, \eg using an appropriate optical magnification system (not represented in Fig.~\ref{fig:FTI_scheme}). Therefore, one coded location on a thin, still, 2D biological specimen can be identified with one pixel location of the camera. In other words, masking one location of the biological specimen over an area $\delta_S$ fixed by the SLM pixel pitch corresponds to blocking the light received by a single pixel of the camera. Additionally, we assume the SLM does not reduce the light intensity when its pixels are activated (\ie the system has 100\,\% light throughput).     
\end{remark}

To understand the potential benefit of SI-FTI let us observe that a Nyquist-FTI scheme can be recovered from such a system simply by activating all SLM pixels for all OPD samples. In this case, if we assume again that the light source delivers constant intensity $\bar{I}_0$ per second and per unit area on any location of the specimen, the total light exposure at every OPD point on the whole biological specimen is constant and equal to $N_{\rm p} \delta_S \tau_\xi\bar{I}_0$, where $\tau_{\xi}$ corresponds to the duration of each OPD sample and $\delta_S$ is the SLM pixel area. Using structured illumination in SI-FTI, if $\tilde{M}_\nell < N_{\rm p}$ spatial locations are exposed on the specimen at the $\nell^{\rm th}$ OPD sample with $\nell \in \range{N_\xi}$, this exposure can be reduced to $\tilde{M}_\nell\delta_S\tau_\xi\bar{I}_0$ for this OPD sample. Therefore, the total light exposure undergone by the specimen during a Nyquist FTI acquisition, \ie $\Nhs \delta_S\tau_\xi\bar{I}_0$ with $\Nhs =N_\xi N_{\rm p}$, is decreased to $M\delta_S\tau_\xi\bar{I}_0$ in SI-FTI, where $M=\sum_{\nell=1}^{N_\xi}\tilde{M}_\nell$. The question is of course to be able to reconstruct the HS volume from such a partial FTI observations. 

\medskip
Let us now turn to SI-FTI sensing model. We follow the general random VDS scheme of Sec.~\ref{sec:compr-uds-vds}, with special care to integrate the spatiospectral geometry of SI-FTI. We denote by $p(j,\nell):=\bb{P}[(\beta_{\rm p},\beta_\xi)=(j,\nell)]$ a bivariate pmf determining the random activation of the $j^{\rm th}$ spatial location at the $\nell^{\rm th}$ OPD point, \ie the pmf of a bivariate \rv $\bs \beta = (\beta_{\rm p}, \beta_\xi)$ for $\beta_{\rm p} \in \range{N_{\rm p}}$ and $\beta_\xi \in \range{N_\xi}$. In this context, we can generate a random set $\bar \Omega$ with $M$ (possibly non-unique) elements from $\bar \Omega = \{\bs \beta^1, \cdots \bs \beta^M\}$ with $\bs \beta^r \sim_{\rm iid} \bs \beta$ and $r\in\range{M}$. 

The sensing model then reads
$$
\ts y_r = \bs F_{\{\beta^r_\xi\}}^* \bs x_{\beta^r_{\rm p}} + n_r, \quad \forall r \in \range{M} ,
$$  
with $n_r$ modeling an additive measurement noise. As for CI-FTI, we assume this noise Gaussian, \ie $n_r \sim_{\rm iid} \cl N(0, \sigma^2_\Nyq)$ with variance $\sigma^2_\Nyq$ that can be estimated, \eg from the Nyquist measurements or by system calibration.   
\medskip

To reach more compact notation, we adopt a purely 1D random sensing model.  We consider the set $\Omega \subset \range{\Nhs}$ generated from $M$ (scalar) \rvs $\beta^r \sim_\iid \beta$ (with $r \in \range{M}$), where the \rv $\beta \in \Nhs$ is defined from the pmf 
\begin{equation}
  \label{eq:Omega-bivar-pmf}
\ts p(k) := \bb P(\beta = k) = \bb P( N_\xi (\beta_{\rm p} - 1) + \beta_\xi = k),\quad \forall k \in \range{\Nhs},  
\end{equation}
with $\bs \beta = (\beta_{\rm p}, \beta_\xi)$ the bivariate \rv defined above from the pmf $p(j,l)$. In this case, for each $j \in \range{N_{\rm p}}$,
$$
\Omega^j\, :=\, \big(\Omega \cap [N_\xi (j - 1) +1 , N_\xi j]\big)\, -\, N_\xi (j - 1),
$$
is the (possibly empty\footnote{With the convention that $\bs u_{\emptyset} = \emptyset$ for any vector $\bs u$.}) set of OPD samples selected at the $j^{\rm th}$ pixel. 

SI-FTI then amounts to
\begin{equation}
\ts  \bs{y}_j = \bs{F}^*_{\Omega^j}\bs{x}_j+\bs{n}_j,\ \forall j \in \range{N_{\rm p}}\quad \Leftrightarrow\quad \bs{y}^{\SI} = [\bs{y}_1^\top,\cdots,\bs{y}_{N_{\rm p}}^\top]^\top = \bs{\Phi}^*_{\Omega} \bs{x}+\bs{n}, \label{eq:sifti acquisition}
\end{equation}
where $\bs{\Phi} := \bs{I}_{N_{\rm p}}\otimes\bs{F}$ and $\bs{n}:=(\bs{n}_1^\top, \cdots, \bs{n}_{N_{\rm p}}^\top)^\top \in \bb{R}^{M}$. 

\medskip

Note that we can go back and forth between the 1D and the 2D index representations $k \in \range{\Nhs}$ and $(j,\nell) \in \range{N_{\rm p}}\times \range{N_\xi}$, respectively, using the relations
\begin{equation}
  \label{eq:transl-1D-2D-ind-repr}
\ts k(j,\nell) = N_\xi (j - 1) + \nell\ \Leftrightarrow\ \nell(k) = ((k-1)\!\!\!\mod N_\xi)+1,\ j(k) = (k - b_\xi)/N_\xi.
\end{equation}

The pmf $p$ defining SI-FTI sensing is considered hereafter as a degree of freedom that we are going to relate to the VDS scheme formulated in Prop.~\ref{prop:variable_density_sampling}. This will allow us to reach an optimized structured illumination strategy, as presented in Thm~\ref{thm:sifti}. 

\subsection{HS Reconstruction Method and Guarantee}\label{sec:sifti reconstruction}
Given the noisy SI-FTI measurements as in~\eqref{eq:sifti acquisition}, an HS volume $\bs{x}$ with $\Nhs$ voxels can be reconstructed via the convex optimization problem
\begin{equation}
  \hat{\bs{x}} = \underset{\bs{u}\in \bb{C}^{\Nhsexp}}{\argmin} \|\bs{\Psi}^\top\bs{u}\|_1\ \st\ \|\bs{D}(\bs{y}^{\SI}-\bs{\Phi}^*_\Omega\bs{u})\|\le \varepsilon_\SI \sqrt{M},\label{eq:sifti main reconstruction}
\end{equation}
where $\Omega = \{\beta^1, \cdots, \beta^M\} \subset \range{\Nhs}$ is randomly generated according to the pmf of \eqref{eq:Omega-bivar-pmf}, $\varepsilon_\SI$ must be such that $\|\bs{D}\bs{n}\|\le\varepsilon_\SI\sqrt{M}$ with high probability, and $\bs{D} = \diag(\bs d) \in \bb R^{M\times M}$ with $d_{r} = 1/(p(\beta^r))^{1/2}$ for $r \in \range{M}$. 

For generality of our model, we regularize Problem~\eqref{eq:sifti main reconstruction} with the joint spatiospectral HS sparsity model described in Sec.~\ref{sec:sparsity models}, \ie  $\bs{\Psi} :=  \bs{\Psi}_{\rm 2D} \otimes\bs{\Psi}_{\rm 1D}$. Contrary to the CI-FTI optimization Problem~\eqref{eq:cifti main reconstruction}, Problem~\eqref{eq:sifti main reconstruction} cannot be decoupled into sub-problems since $\|\bs{\Psi}^\top\bs{u}\|_1$ is not separable in $\{\bs u_j: j\in\range{N_{\rm p}}\}$ and the random set $\Omega$ detailed above is not separable in the spatial and OPD domains. Therefore, we tackle the reconstruction problem of the full HS volume at once. 

Having set the sparsity basis, we can now adjust the pmf \eqref{eq:transl-1D-2D-ind-repr} determining $\Omega$ from Prop.~\ref{prop:variable_density_sampling}. According to this proposition, the preconditioned matrix $\frac{1}{\sqrt{M}}\bs{D}\bs{\Phi}^*_\Omega\bs{\Psi}$ respects the RIP of order $K$ with probability exceeding $1-\Nhs^{-c\log^3(K)}$ if 
$$
\ts
M \gtrsim \delta^{-2} \|\bs \kappa\|^2 K \log^3(K)\log(\Nhs)
$$
SI-FTI measurements are recorded with respect to the pmf 
$p(k) := {\kappa_k^2}/{\|\bs \kappa\|^2}$ ($k\in\range{\Nhs}$), where the vector $\bs \kappa \in \bb R_+^{\Nhs}$ is a bound for the local coherence $\mu_k(\bs \Phi, \bs \Psi)$ with $k \in \range{\Nhs}$, $\bs{\Phi}= \bs{I}_{N_{\rm p}} \otimes \bs{F}$ and $\bs{\Psi}=\bs{\Psi}_{\rm 2D} \otimes \bs{\Psi}_{\rm 1D}$. The next proposition (proved in App.~\ref{sec:proof sifti}) bounds this local coherence, and thus determines the pmf $p$.
\begin{proposition}\label{prop:sifti local coherence}
In the context of Prop.~\ref{prop:variable_density_sampling}, set $N = \Nhs = N_\xi N_{\rm p}$, $\bs{\Phi}= \bs{I}_{N_{\rm p}}\otimes\bs{F}$, and $\bs \Psi = \bs{\Psi}_{\rm 2D} \otimes \bs{\Psi}_{\rm 1D}$. We have then 
\begin{equation}
\label{eq:coh-bound-SI}
\ts \mu_k(\bs \Phi, \bs \Psi) \leq \kappa_k := \frac{\sqrt 2}{2} \min\big\{1, \frac{1}{\sqrt{|\nell - (N_\xi/2)|}}\big\},\ k \in \range{\Nhs}, 
\end{equation}
with $\|\bs \kappa\|^2 \leq N_{\rm p} (2 + \log(N_\xi/2)) \lesssim N_{\rm p} \log N_\xi$ and the relation $k = k(j,\nell)$ defined in \eqref{eq:transl-1D-2D-ind-repr}. In this case, Prop.~\ref{prop:variable_density_sampling} holds for 
  \begin{equation}
\label{eq:p-bound-SI}
\ts p(k) =\frac{C_{N_\xi}}{N_{\rm p}} \min \{1,|\nell-(N_\xi/2)|^{-1}\},
  \end{equation}
where $k = k(j,\nell) \in \range{\Nhs}$, and the normalization constant $C_{N_\xi}$ respects  $2\log(N_{\xi}/2) < C_{N_\xi}^{-1} < 4+2\log(N_{\xi}/2)$.
\end{proposition}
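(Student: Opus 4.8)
The plan is to exploit the Kronecker structure of both the sensing and the sparsity operators so that the $\Nhs$-dimensional local coherence factorizes into a purely spatial and a purely spectral contribution, the latter being exactly the quantity already bounded for CI-FTI in Prop.~\ref{prop:cifti local coherence}. First I would apply the mixed-product property of the Kronecker product to write
\[
\ts \bs\Phi^*\bs\Psi = (\bs I_{N_{\rm p}}\otimes\bs F)^*(\bs\Psi_{\rm 2D}\otimes\bs\Psi_{\rm 1D}) = \bs\Psi_{\rm 2D}\otimes(\bs F^*\bs\Psi_{\rm 1D}).
\]
Writing the row index as $k = k(j,\nell)$ through \eqref{eq:transl-1D-2D-ind-repr} and the column index as the pair $(j',\nell')$, the corresponding entry of this matrix equals $(\bs\Psi_{\rm 2D})_{j,j'}\,(\bs F^*\bs\Psi_{\rm 1D})_{\nell,\nell'}$. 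Since both factors are nonnegative and the maximizations over $j'$ and over $\nell'$ are independent, the local coherence \eqref{eq:loc-coh} at row $k$ separates as
\[
\ts \mu_k(\bs\Phi,\bs\Psi) = \Big(\max_{j'}|(\bs\Psi_{\rm 2D})_{j,j'}|\Big)\,\Big(\max_{\nell'}|(\bs F^*\bs\Psi_{\rm 1D})_{\nell,\nell'}|\Big) = \mu_j(\bs I_{N_{\rm p}},\bs\Psi_{\rm 2D})\,\mu_\nell(\bs F,\bs\Psi_{\rm 1D}).
\]

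Next I would bound the two factors separately. For the spectral factor I directly invoke Prop.~\ref{prop:cifti local coherence}, which gives $\mu_\nell(\bs F,\bs\Psi_{\rm 1D})\le\sqrt2\min\{1,|\nell-N_\xi/2|^{-1/2}\}$. For the spatial factor I claim $\max_{j'}|(\bs\Psi_{\rm 2D})_{j,j'}|\le\tinv{2}$ uniformly in $j$, \ie the largest-magnitude entry of the orthonormal $2$D Haar matrix is $\tinv{2}$. In the anisotropic case $\bs\Psi_{\rm 2D}=\bs\Psi_{\rm 1D}\otimes\bs\Psi_{\rm 1D}$ this follows from one more Kronecker factorization, using that each row of the $1$D Haar matrix has maximal entry $1/\sqrt2$ (attained by the finest-scale wavelet covering that sample), so that the spatial factor is at most $\tfrac{1}{\sqrt2}\cdot\tfrac{1}{\sqrt2}$; in the isotropic construction I would verify the same value $\tinv{2}$ directly, the extremal entries again arising from the finest-scale detail wavelets $\phi\otimes\psi$, $\psi\otimes\phi$, $\psi\otimes\psi$, whose supports carry $\pm\tinv{2}$. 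Multiplying the two bounds yields $\kappa_k=\tfrac{\sqrt2}{2}\min\{1,|\nell-N_\xi/2|^{-1/2}\}$, which is \eqref{eq:coh-bound-SI}.

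Finally, for the aggregate quantities I would square $\kappa_k$, obtaining $\kappa_k^2=\tinv{2}\min\{1,|\nell-N_\xi/2|^{-1}\}$, and sum over $k=k(j,\nell)$. Because the summand does not depend on $j$, the sum factors into $N_{\rm p}$ copies of the one-dimensional sum $\sum_{\nell}\min\{1,|\nell-N_\xi/2|^{-1}\}$ already estimated in App.~\ref{sec:proof cifti}, where it is shown to lie in $\big(2\log(N_\xi/2),\,4+2\log(N_\xi/2)\big)$. This gives $\|\bs\kappa\|^2\le N_{\rm p}\big(2+\log(N_\xi/2)\big)$, and the normalization $p(k)=\kappa_k^2/\|\bs\kappa\|^2$ reproduces \eqref{eq:p-bound-SI} with exactly the CI-FTI constant $C_{N_\xi}$, since the factor $\tinv{2}$ and the $N_{\rm p}$ cancel between numerator and denominator. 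The only genuinely delicate point is the spatial estimate $\max_{j'}|(\bs\Psi_{\rm 2D})_{j,j'}|\le\tinv{2}$: it is immediate in the separable (anisotropic) construction, but for isotropic levels it requires checking that no coarser-scale basis function produces an entry exceeding $\tinv{2}$, which I would settle through the scale-by-scale $\ell_2$-normalization of the Haar system. Everything else is bookkeeping of the index map \eqref{eq:transl-1D-2D-ind-repr} and reuse of the CI-FTI computation.
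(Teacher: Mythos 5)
Your proposal is correct and follows essentially the same route as the paper's App.~D: the Kronecker factorization $\bs\Phi^*\bs\Psi = \bs\Psi_{\rm 2D}\otimes(\bs F^*\bs\Psi_{\rm 1D})$, the separation of the local coherence into a spatial factor (bounded by $\tinv{2}$ for both the anisotropic and isotropic 2D Haar constructions, the latter via the entries $2^{n-\bar n}$ maximized at the finest scale) and the spectral factor already bounded in Prop.~\ref{prop:cifti local coherence}, followed by the same aggregation reusing the constant $C_{N_\xi}$. The only cosmetic slip is the phrase ``both factors are nonnegative'' --- the entries are complex/signed, but the separation of the maxima holds anyway because $|ab|=|a|\,|b|$ --- and your deferred check for the isotropic case is exactly the scale-by-scale verification the paper carries out.
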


If $\frac{1}{\sqrt{M}}\bs{D}\bs{\Phi}^*_\Omega\bs{\Psi}$ respects the RIP with the pmf specified in the previous proposition, Prop.~\ref{prop:RIP} shows that the estimation error achieved by~\eqref{eq:sifti main reconstruction} can be bounded as
\begin{equation}
  \ts  \|\bs{x}-\hat{\bs{x}}\| \le \frac{2\sigma_{K}(\bs{\Psi}^\top\bs{x})_1}{\sqrt{K}}+\varepsilon_\SI.\label{eq:sifti error}
\end{equation}

Moreover, following Prop.~\ref{prop:sifti local coherence}, Cor.~\ref{cor:noise-level-estim-Gauss-noise} and Rem.~\ref{remark:simple_bound_for_rho}, we can estimate the level $\varepsilon_\SI$ of a Gaussian noise in the SI-FTI model \eqref{eq:sifti acquisition}. 
Indeed, we can compute the bound \eqref{eq:pmf-moment} associated with the pmf \eqref{eq:p-bound-SI} since, for any integer $q\geq 1$, we easily prove that
$$
\ts \tinv{\Nhs} \big(\bb E_\beta\,p(\beta)^{-q}\big)^{1/q} = \tinv{N_\xi} \big (\bb E_\gamma \,p'(\gamma)^{-q}\big)^{1/q},
$$
where $\beta \in \range{\Nhs}$ is a \rv with the pmf $p$ of \eqref{eq:p-bound-SI}, and $\gamma \in \range{N_\xi}$ is a \rv with pmf $p'(l) := C_{N_\xi} \min\{1,|\nell-(N_\xi/2)|^{-1}\}$. From Rem.~\ref{remark:simple_bound_for_rho}, $p'$ is thus a VDS scheme with exponent $\alpha=1$ and offset $\nell_0 = N_\xi/2$. We can thus set $\rho = 2+\log(N_\xi/2) > C_{N_\xi}^{-1}/2 = C_{N_\xi}^{-1} (N_\xi - \nell_0)/N_\xi$, so that 
\begin{equation}
\ts \frac{1}{M} \|\bs{D}\bs n\|^2 = \frac{1}{M} \|\bs{D}(\bs{y}^{\SI}-\bs{\Phi}^*_\Omega\bs{x})\|^2 \leq \varepsilon_\SI^2(s) := \varepsilon^2_{\sigma_\Nyq,s}(\Nhs, M, 2+\log(N_\xi/2)), \label{eq:sifti_final_noise_estimate}
\end{equation}
with probability exceeding $1-3e^{-s/2}$ and $\varepsilon_{\sigma,s}$ defined in \eqref{eq:noise-level-estim-Gauss-noise-ext-noise}.
\medskip

We are now ready to summarize the complete analysis of SI-FTI in the following theorem.
\begin{thm}[Structured Illumination-FTI]\label{thm:sifti}
  Given $s >0$, fix integers $K$, $\Nhs = N_\xi N_{\rm p}$ such that $K \gtrsim \log(\Nhs)$ and 
  \begin{equation}
    M \gtrsim N_{\rm p} K \log^3(K) \log(N_\xi)\log(\Nhs).\label{eq:sifti m}
  \end{equation}
Generate $M$ random (non-unique) indices associated with a (1D) index set $\Omega = \{\beta^1,\cdots,\beta^M\}$ such that $\beta^r \sim_{\iid} \beta$ for $r \in \range{M}$, with $\beta$ a \rv with the pmf \eqref{eq:p-bound-SI}. Then, given the noisy SI-FTI measurements $\bs y^{\SI}$ in \eqref{eq:sifti acquisition}, the HS volume $\bs x$ can be approximated by solving~\eqref{eq:sifti main reconstruction} with the bound $\varepsilon_\SI(s)$ in \eqref{eq:sifti_final_noise_estimate}, up to an error 
  \begin{equation}
    \ts  \|\bs{x}-\hat{\bs{x}}\| \le \frac{2\sigma_{K}(\bs{\Psi}^\top\bs{x})_1}{\sqrt{K}}+\varepsilon_\SI(s),\label{eq:sifti final error}
  \end{equation}
and with probability exceeding $1-\Nhs^{-c\log^3(K)}-3e^{-s/2}$.
\end{thm}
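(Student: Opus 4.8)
The plan is to assemble the theorem from the building blocks already in place, following the same logic as the CI-FTI proof but now working with the full $\Nhs$-dimensional system, since the joint pmf \eqref{eq:p-bound-SI} couples the spatial and OPD domains and therefore forbids any per-pixel decoupling. The starting observation is that the SI-FTI model \eqref{eq:sifti acquisition} is literally a variable density sampling of a bounded orthonormal system in dimension $N = \Nhs$, with sensing basis $\bs \Phi = \bs I_{N_{\rm p}} \otimes \bs F$, sparsity basis $\bs \Psi = \bs \Psi_{\rm 2D} \otimes \bs \Psi_{\rm 1D}$, and $M$ indices drawn \iid from the pmf $p$ of \eqref{eq:p-bound-SI}. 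This recasting is exactly the setting of Prop.~\ref{prop:variable_density_sampling}, so I can invoke it directly.

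First I would establish the RIP. Applying Prop.~\ref{prop:variable_density_sampling} at order $5K$ (rather than $K$) with a fixed $\delta < \tfrac13$, and inserting the local coherence bound $\|\bs \kappa\|^2 \le N_{\rm p}(2 + \log(N_\xi/2)) \lesssim N_{\rm p} \log N_\xi$ from Prop.~\ref{prop:sifti local coherence} into the sample complexity \eqref{eq:rip_for_onb_vds}, I obtain $M \gtrsim \|\bs \kappa\|^2 (5K) \log^3(5K) \log(\Nhs) \asymp N_{\rm p} K \log^3(K) \log(N_\xi) \log(\Nhs)$, which is precisely the requirement \eqref{eq:sifti m} (the factor $5$ and $\delta^{-2}$ being absorbed into the hidden constant, and $\log^3(5K) \asymp \log^3(K)$). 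Prop.~\ref{prop:variable_density_sampling} then guarantees that the preconditioned matrix $\tfrac{1}{\sqrt M} \bs D \bs \Phi^*_\Omega \bs \Psi$ satisfies $\delta_{5K} < \tfrac13$ with probability at least $1 - \Nhs^{-c\log^3(K)}$.

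Next I would certify feasibility of the ground truth under the chosen noise bound. The noise estimate \eqref{eq:sifti_final_noise_estimate}, built from Cor.~\ref{cor:noise-level-estim-Gauss-noise} together with Rem.~\ref{remark:simple_bound_for_rho} and the identification $\rho = 2 + \log(N_\xi/2)$ for the exponent-$1$, offset-$N_\xi/2$ profile of $p$, shows that $\tfrac{1}{\sqrt M}\|\bs D \bs n\| = \tfrac{1}{\sqrt M}\|\bs D(\bs y^{\SI} - \bs \Phi^*_\Omega \bs x)\| \le \varepsilon_\SI(s)$ except with probability $3 e^{-s/2}$. On this event the true $\bs x$ satisfies the constraint of \eqref{eq:sifti main reconstruction} with $\varepsilon = \varepsilon_\SI(s)$, so the feasible set is nonempty and contains $\bs x$.

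Conditioning on both events, the hypotheses of Prop.~\ref{prop:RIP} are met (RIP with $\delta_{5K} < \tfrac13$ and admissible noise level $\varepsilon_\SI(s)$), so its conclusion yields $\|\bs x - \hat{\bs x}\| \le \tfrac{2\sigma_K(\bs \Psi^\top \bs x)_1}{\sqrt K} + \varepsilon_\SI(s)$, which is exactly \eqref{eq:sifti final error} (using $\bs \Psi^\top = \bs \Psi^*$ as the Haar bases are real). A union bound over the two failure events produces the stated success probability $1 - \Nhs^{-c\log^3(K)} - 3 e^{-s/2}$. The genuinely substantive work — the local coherence computation of Prop.~\ref{prop:sifti local coherence} (App.~\ref{sec:proof sifti}) and the weighted-noise concentration of Cor.~\ref{cor:noise-level-estim-Gauss-noise} and Thm.~\ref{prop:gen-bound-noise-vds} — is all discharged beforehand, so the main obstacle at this level is purely bookkeeping: correctly propagating the order-$5K$ RIP requirement into the clean form \eqref{eq:sifti m} and combining the two probabilistic events consistently.
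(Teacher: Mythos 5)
Your proof is correct and follows exactly the paper's route: Prop.~\ref{prop:variable_density_sampling} with the local coherence bound of Prop.~\ref{prop:sifti local coherence} for the RIP, the noise estimate \eqref{eq:sifti_final_noise_estimate} for feasibility, Prop.~\ref{prop:RIP} for the error bound, and a union bound over the two failure events. The only difference is that you spell out the bookkeeping (order-$5K$ RIP, absorption of constants into \eqref{eq:sifti m}) that the paper leaves implicit.
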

\begin{proof}
  A combination of Prop.~\ref{prop:variable_density_sampling}, Prop.~\ref{prop:RIP}, and Prop.~\ref{prop:sifti local coherence} with \eqref{eq:sifti_final_noise_estimate}, \eqref{eq:sifti error} completes the proof.
\end{proof}

Note that according to~\eqref{eq:p-bound-SI}, the spatial locations are selected (and thus exposed) uniformly at random; while for a fixed spatial location the probability of exposing that location obeys a two-sided power-law decay distribution centered at OPD origin $N_\xi/2$. In addition, since we can ensure $N_{\rm p} K\ll \Nhs = N_{\rm p} N_\xi$ with a low best $K$-term approximation error $\sigma_{K}(\bs{\Psi}^\top\bs{x})_1/\sqrt{K}$ under the hypotheses made on the observed HS volume, the light exposure on the biological specimen can be significantly reduced thanks to the VDS strategy~\eqref{eq:sifti acquisition}; a UDS strategy would require $M \gtrsim \Nhs K \log^3(K) \log(\Nhs)$, which is equivalent to overexposing the specimen.

\section{Constrained-Exposure Coding}    
\label{sec:Sec_05}

As aforementioned, in fluorescence spectroscopy, the type of the injected fluorescent dyes (and therefore their tolerance to light exposure) is known. In this section, assuming that photo-bleaching\footnote{We suppose here that the intensity of the light illuminated by a fluorescent dye is constant during CI/SI-FTI experiments. As a more realistic model, however, the intensity of the fluorescent light would exponentially decrease as a function of the received intensity.} for a fluorescent dye depends only on the total light exposure it has been subject to and supposing that the maximum light exposure~$I_{\max}$ that a fluorescent dye can tolerate is known, we adapt the proposed CI/SI-FTI schemes by ensuring that the total light exposure~$I_{\rm tot}$ on each spatial location of a biological specimen is constant and smaller than~$I_{\max}$, whatever the number of compressive observations.

From the description of Sec.~\ref{sec:FTI continuous model}, in the case of the Nyquist-FTI system, if the total acquisition time is $T>0$, we have $I_{\rm tot} = T \bar I_0 = T \int_0^{+\infty} |\tilde E_0(\nu)|^2\,\ud \nu$. Since $N_\xi$ OPD samples are recorded, each location thus receives an intensity of $I_{\rm opd} = I_{\rm tot}/N_\xi$ per OPD sample. If we fix the value $I_{\rm tot}$, we show below that for the compressive CI/SI-FTI schemes, the light source intensity, and thus the intensity per OPD sample, can be increased. This leads to an improved Measurement-to-Noise Ratio (MNR), \ie 
  \begin{equation}
    \label{eq:MNR-def}
\ts  \textsl{MNR} := 10\log_{10} \frac{\|\bs y\|^2}{\|\bs n\|^2},  
  \end{equation}if the measurement noise is assumed to not depend on the light intensity (\ie under high-photon counting assumption, see Sec.~\ref{sec:noise}).

\subsection{Constrained-exposure CI-FTI}

We want to adjust the intensity $I'_{\rm opd}$ received per second and per unit area by each location of the biological specimen (when illuminated), assuming that the total light exposure of the specimen per unit area is fixed (constrained). 

In CI-FTI, $I_{\rm tot} = M_\xi \tau_\xi I'_{\rm opd}$, with $\tau_{\xi}$ is the constant duration of each time slot fixed by the acquisition. In constrained-exposure CI-FTI, we keep $I_{\rm tot}$ constant so that it matches the total light exposure of the Nyquist FTI where $M_\xi = N_\xi$, \ie $I_{\rm tot} = N_\xi \tau_\xi I_{\rm opd}$. This thus imposes the relation $I'_{\rm opd} = (N_\xi/M_\xi) I_{\rm opd}$. 

The light source intensity can thus be amplified by a factor of $N_\xi/M_\xi$ while still preventing photo-bleaching if $I_{\rm tot} < I_{\max}$; from the assumption of linear intensity scaling made in \eqref{eq:linearity-transfer-intensity}, the intensity of the light outgoing from the biological specimen then undergoes the same amplification. As explained in Sec.~\ref{sec:acquisition-strategy}, we also assume that the additive measurement noise is independent of this increase of intensity. This assumption is experimentally well-verified (see Sec.~\ref{subsec:Simulated experimental data}).

Therefore, under the constant light exposure constraint, the acquisition model of CI-FTI in~\eqref{eq:cifti acquisition} reads
\begin{equation}
\label{eq:cifti ci acquisition}
  \ts \bs{Y}'^{\CI} = \frac{N_\xi}{M_\xi}\bs{F}^*_{\Omega^{\xi}}\bs{X}+ \bs{N}\ =\ \frac{N_\xi}{M_\xi} (\bs{F}^*_{\Omega^{\xi}}\bs{X}+\frac{M_\xi}{N_\xi} \bs N)\ =: \frac{N_\xi}{M_\xi} \tilde{\bs{Y}}^{\CI}.
\end{equation} 
The last equality in \eqref{eq:cifti ci acquisition} thus shows that $\bs{Y}'^{\CI} = N_\xi/M_\xi \tilde{\bs{Y}}^{\CI}$, 
with $\tilde{\bs{Y}}^{\CI}$ being the measurements that would be acquired from~\eqref{eq:cifti acquisition} by attenuating the noise $\bs N$ by a factor of $M_\xi/N_\xi$. 

Therefore, in this constrained-exposure context, we can recover $\bs X$ from \eqref{eq:cifti main reconstruction} by computing $\tilde{\bs{Y}}^{\CI} = (\tilde{\bs y}_1, \cdots, \tilde{\bs y}_{N_{\rm p}}) = (M_\xi/N_\xi) \bs{Y}'^{\CI} $ and replacing $\sigma_\Nyq^2 \leftarrow (M_\xi/N_\xi) \sigma_\Nyq^2$ in the evaluation of $\varepsilon_\CI(s)$ in \eqref{eq:cifti_final_noise_estimate}. In other words, as expected, by fixing the total light exposure, the Measurement-to-Noise Ratio \eqref{eq:MNR-def} is boosted when the number of measurements decreases. This effect is verified numerically in Sec.~\ref{sec:Sec_06}.

Under the conditions of Thm~\ref{thm:cifti}, compared to \eqref{eq:cifti error final}, we also get an improved error bound 
\begin{equation}
    \ts    \|\bs{x}-\hat{\bs{x}}\| \le \frac{2\sqrt{2}}{\sqrt{K_\xi}} \left(\sum_{j=1}^{N_{\rm p}}\left(\sigma_{K_\xi}(\bs{\Psi}_{\rm 1D}^\top\bs{x}_j)_1\right)^{2}\right)^{1/2}+\frac{\sqrt{2N_{\rm p}}M_\xi}{N_\xi}\varepsilon_\CI(s).\label{eq:cifti ci error}
\end{equation}

\subsection{Constrained-exposure SI-FTI}

Compared to CI-FTI, the total light exposure received on each specimen location varies spatially. From the randomness of the structured illumination pattern described in Sec.~\ref{sec:Sec_04}, the $j^{\rm th}$ specimen location is indeed highlighted during $M_j = |\Omega_j|$ time slots, \ie a \rv determined by the pmf $p(k(j,\nell))$ in \eqref{eq:p-bound-SI}.  Despite this variability, we can compute a tight (worst case) upper bound on all $\{M_j: j \in \range{N_{\rm p}}\}$ that holds with arbitrarily high probability. This bound can then be used to constrain the light exposure, \ie to ensure that no specimen location will be over-exposed. 

At each independent random draw of the \rv $\beta$ with pmf $p$, whose $M$ draws populate $\Omega$ in Thm.~\ref{thm:sifti}, the probability of illuminating the $j^\textsl{th}$ specimen location, irrespective of the OPD index, is the marginal pmf $\sum_{\nell \in \range{N_\xi}} p(k(j,l)) = 1/N_{\rm p}$; $M_j$ is thus a Binomial \rv with $M$ trials and success probability $1/N_{\rm p}$. Consequently, $\bb E M_j = M/N_{\rm p}$ and from Bernstein inequality, we have\footnote{Remark that the event $M_j < \frac{M}{N_{\rm p}} - t$, which is useless here, holds with the same probability bound.}
$$
\ts \bb P[ M_j > \frac{M}{N_{\rm p}} + t] \leq \exp(- \frac{1}{2} t^2 (\frac{M}{N_{\rm p}} +\frac{t}{3})^{-1}),
$$
for all $t > 0$. 

Moreover, by union bound and applying the rescaling $t \leftarrow \frac{M}{N_{\rm p}} t$, this concentration is uniform for all spatial locations, \ie
\begin{equation}
\ts \bb P[ M_j > \frac{M}{N_{\rm p}} (1 +t ) ] \leq N_{\rm p}\exp(-\frac{3}{2}\,t^2 \frac{M}{N_{\rm p}} (3 + t)^{-1} ),\quad\textsl{for all}~ j \in \range{N_{\rm p}}. \label{eq:concentration inequality for SIFTI-inter}  
\end{equation}

Note that this holds true despite the dependence of the \rvs $\{M_j: j \in \range{N_{\rm p}}\}$ induced from the relation $\sum_j M_j = M$. Finally, with the change of variable $\zeta = N_{\rm p}\exp(-\frac{3}{2}\,t^2 \frac{M}{N_{\rm p}} (3 + t)^{-1} )$, \ie $t = \tinv{2} (t_0 + \sqrt{t_0^2 + 12 t_0})$ with $t_0 := \frac{2N_{\rm p}}{3M} \log(\frac{N_{\rm p}}{\zeta})$, \eqref{eq:concentration inequality for SIFTI-inter} involves 
\begin{equation}
\ts \bb P[ M_j > \frac{M}{N_{\rm p}} (1 + \frac{t_0 + \sqrt{t_0^2 + 12 t_0}}{2}) ] \leq \zeta,\quad\textsl{for all}~ j \in \range{N_{\rm p}}. \label{eq:concentration inequality for SIFTI}
\end{equation}

Therefore, despite the spatial variability of $M_j$, we can set a low failure probability~$\zeta$ and adjust the light exposure by relying on the fact that 
\begin{equation}
  \label{eq:event-Mj-bound}
\ts M_j < \bar M(\zeta) := \frac{M}{N_{\rm p}} (1 + \frac{t_0 + \sqrt{t_0^2 + 12 t_0}}{2}),  
\end{equation}
for all $j \in \range{N_{\rm p}}$ with probability exceeding $1-\zeta$.

Let $I'_{\rm opd}$ denotes the light intensity received per second and per unit area by each location of the biological specimen when illuminated. The light exposure per unit area on each location $j$ does not exceed $I_{\rm tot} = \bar M I'_{\rm opd} \tau_{\xi}$, and photo-bleaching does not occur if this quantity is smaller than $I_{\max}$. Matching the constrained $I_{\rm tot}$ with the light exposure $N_{\xi} \tau_{\xi} I_{\rm opd}$ of a Nyquist FTI scenario (\ie full OPD sampling), we thus get the light amplification rule
$$
\ts I'_{\rm opd}  = (N_{\xi} / \bar M ) I_{\rm opd} = \frac{\Nhs}{M} (1 + \frac{t_0 + \sqrt{t_0^2 + 12 t_0}}{2})^{-1} I_{\rm opd}.
$$
Thereby, we can increase the intensity of the light source by a factor of $N_{\xi} / \bar M$, for a fixed parameter $\zeta$ independent from other parameters.

From the linear intensity scaling assumption made in \eqref{eq:linearity-transfer-intensity}, and assuming that the measurement noise is independent of this scaling, the acquisition model of constrained-exposure SI-FTI, with respect to~\eqref{eq:sifti acquisition}, then reads
\begin{equation}
  \ts \bs{y}'^{\SI} = (N_{\xi} / \bar M)\,\bs\Phi^*_\Omega\bs{x}+\bs{n} = (N_{\xi} / \bar M ) \big(\bs\Phi^*_\Omega\bs{x}+ (\bar M / N_\xi) \bs{n}\big) := (N_{\xi} / \bar M ) \tilde{\bs{y}}^{\SI}.\label{eq:sifti ci acquisition}
\end{equation}
Similarly to the discussion for constrained-exposure CI-FTI,~\eqref{eq:sifti ci acquisition} means that $\bs{y}'^{\SI} = (N_{\xi} / \bar M )\,\tilde{\bs{y}}^{\SI}$, with $\tilde{\bs{y}}^{\SI}$ being the measurements that would be acquired from~\eqref{eq:sifti acquisition} by attenuating the noise $\bs n$ by a factor of $(\bar M / N_\xi) = (M/\Nhs) (1 + \frac{t_0 + \sqrt{t_0^2 + 12 t_0}}{2})$. 

Finally, by union bound over the events ensuring \eqref{eq:event-Mj-bound} and the statement of Thm~\ref{thm:sifti}, with probability exceeding $1-\Nhs^{-c\log^3(K)}-3e^{-s/2}-\zeta$,
the recovery guarantee for the reconstruction of $\bs x$ from $\tilde{\bs{y}}^{\SI}$ in constrained-exposure SI-FTI model is
\begin{equation}
  \ts  \|\bs{x}-\hat{\bs{x}}\| \le \frac{2\sigma_{K}(\bs{\Psi}^\top\bs{x})_1}{\sqrt{K}}+ \frac{M}{\Nhs} (1 + \frac{t_0 + \sqrt{t_0^2 + 12 t_0}}{2})\,\varepsilon_\SI(s),\label{eq:sifti ci error}
\end{equation}
where $t_0 = t_0(\zeta)$ is defined above, and replacing $\sigma_\Nyq^2 \leftarrow (\bar M / N_\xi) \sigma_\Nyq^2$ in the evaluation of $\varepsilon_\SI(s)$ in \eqref{eq:sifti_final_noise_estimate}. 
\medskip

Note that, the second term of the right-hand side of~\eqref{eq:sifti ci error} in SI-FTI, as well as the second term of the right-hand side of~\eqref{eq:cifti ci error} in CI-FTI, increase linearly with the number of compressive FTI measurements. Thus, given the constrained-exposure budget, recording full FTI measurements does not yield the best recovery quality; the optimum number of measurements is the outcome of a trade-off between the best $K$-term approximation of the HS volume and the noise level.

\section{Numerical Results}                      
\label{sec:Sec_06} 

We conduct three simulations to verify the performance of the proposed compressive FTI methods. In the first scenario we examine the optimality of the proposed VDS strategies for CI-FTI and SI-FTI in~\eqref{eq:pmf-ci-FTI} and~\eqref{eq:p-bound-SI} for any arbitrary sparse HS data by tracing the phase transition curves of successful recovery in a noiseless setting. The second scenario consists in following up the performance of the proposed FTI frameworks with constrained/unconstrained-exposure budget on a simulated biological HS volumes. In the third setup, constrained-exposure CI-FTI is simulated from the real FTI measurements.
For all the experiments we report the Reconstruction SNR (RSNR), \ie 
$$
\ts {\rm RSNR} := -10 \log_{10} \bb{E}_{\rm e} {\|\bs{x}-\hat{\bs{x}}\|^2}/{\|\bs{x}\|^2},
$$ 
where $\bb{E}_{\rm e}$ denotes the empirical mean over several trials of the sensing context (as specified in the text). The HS volume reconstructions \eqref{eq:cifti main reconstruction} and \eqref{eq:sifti main reconstruction} are numerically performed with the SPGL1 solver~\cite{spgl1:2007,vandenBerg:2008}. 

Concerning the value $\varepsilon_\CI$ and $\varepsilon_\SI$ of the noise power in the optimization problems~\eqref{eq:cifti main reconstruction} and~\eqref{eq:sifti main reconstruction}, we observed numerically that the estimations in \eqref{eq:cifti_final_noise_estimate} and \eqref{eq:sifti_final_noise_estimate} are not tight enough at small number of measurements. While tightening those bounds constitutes an interesting future work, we have rather numerically estimated $\varepsilon_\CI$ and $\varepsilon_\SI$ in our simulations. For this, we have computed the empirical $95^\textsl{th}$ percentile curve of the weighted noise power over 100 Monte-Carlo realizations of both a Gaussian random noise (with unit variance) and the index set $\Omega$ (for the selected sensing scenario, CI- or SI-FTI) over the considered interval of measurement number. This experimental curve has then be multiplied by the noise standard deviation, known (for synthetic examples) or estimated (for experimental data).

%%%%%%%%%%%%%%%%%%%%%%%%%%%%%%%%%%%%%%%%%%%%%%%%%%%%%%%%%%%%%%%%%%%%%%%%%%%%%%%%%%%%%%% 
\subsection{Impact of different VDS strategies in compressive FTI}
\label{subsec:Noiseless synthetic HS volume}
%%%%%%%%%%%%%%%%%%%%% Simulation 1 %%%%%%%%%%%%%%%%%%%%%%%%%%%%%

We here challenge the VDS densities defined in Thm.~\ref{thm:cifti} and Thm.~\ref{thm:sifti} by comparing their performances (in terms of reconstruction error) with those reached by other density functions with different decaying power laws in the OPD domain. We also show that the prescribed schemes achieve near-optimal results with respect to a pmf set to the value of the local coherence $\bs \mu^\textsl{loc}$ between the sparsity and the sensing bases. 

In details, following Sec.~\ref{sec:acquisition-strategy} and Sec.~\ref{sec:acquisition-strategy-subsec}, the compressive FTI scenarios are associated with different subsampling of measurement indices, as recorded in the measurement index set $\Omega$. In particular, the (possibly non-unique) random elements of $\Omega$ are \iid according to the following pmfs generalizing \eqref{eq:pmf-ci-FTI} and~\eqref{eq:p-bound-SI} to variable power laws:
\begin{equation}
\label{eq:vds_alpha}
\scalebox{.8}{$\begin{array}{|c|l|}
\hline
&\\[-3mm]
\ts {\rm (CI-FTI)}&\ts p^\alpha(\nell):=C_{N_\xi,\alpha} \min\big\{1,\frac{1}{|\nell-N_\xi/2|^\alpha}\big\},\\[2mm]
&\ts p^\Opt(\nell) := \big(\sum_{\nell'}\mu^\textsl{loc}_{\nell'}(\bs F^* \bs \Psi_\textsl{1D})^2\big)^{-1} (\mu^\textsl{loc}_{\nell}(\bs F^* \bs \Psi_\textsl{1D}))^2, \\[2mm]
\hline
&\\[-3mm]
\ts {\rm (SI-FTI)}& p^\alpha(k(j,\nell)) :=\frac{C_{N_\xi,\alpha}}{N_{\rm p}} \min\big\{1,\frac{1}{|\nell-N_\xi/2|^\alpha}\big\}, \\[2mm]
&\ts p^\Opt(k(j,\nell)) := \big(\sum_{k} \mu^\textsl{loc}_{k}(\bs \Phi^* \bs \Psi)^2\big)^{-1} (\mu^\textsl{loc}_{k}(\bs \Phi^* \bs \Psi))^2,\\[1mm]
\hline
\end{array}$}
\end{equation}
where $C_{N_\xi, \alpha}$ ensures that the probabilities sum to one, and the specific pmf parameterizations are defined in Sec.~\ref{sec:acquisition-strategy} and Sec.~\ref{sec:acquisition-strategy-subsec}. In~\eqref{eq:vds_alpha}, the parameter $\alpha \in \{0,1,1.5,2,8\}$ controls the decaying power of the VDS strategy; for $\alpha=0$ it reduces to a UDS strategy and for $\alpha=1$ it matches the pmfs of~\eqref{eq:pmf-ci-FTI} and~\eqref{eq:p-bound-SI}. We also consider optimal VDS strategies, \ie the two pmfs $p^{\Opt}$ in~\eqref{eq:vds_alpha}, where each pmf --- for CI and SI-FTI --- is set to the local coherence between the sensing and the sparsity bases. While this setting is numerically computable, it is, however, hardly integrable to our estimation of the noise power, and thus to our analysis of the reconstruction error of the HS volume. 

\begin{figure}[tbh]
  \centering
  \input{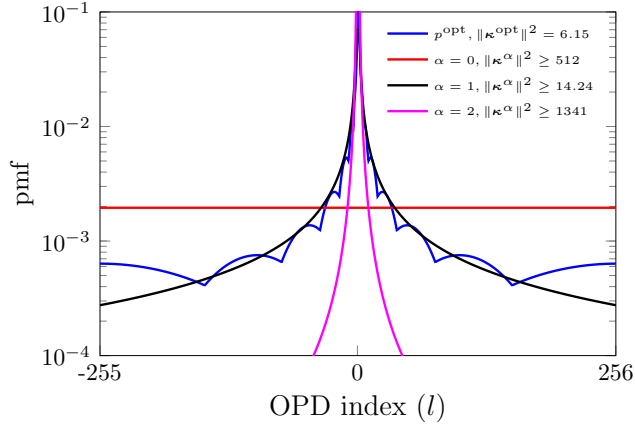}
  \caption{Representation of different CI-FTI sampling strategies for $N_\xi = 512$ associated with pmfs $p^\alpha(\nell)$ and $p^{\rm opt}$ in~\eqref{eq:vds_alpha}. The lower value of $\|\bs \kappa^\alpha\|^2$ is an indicator of sampling optimality. Smaller values amount to a tighter sample complexity bound in~\eqref{eq:rip_for_onb_vds}. The same trend is expected for the phase transition curves as well, which is confirmed in Fig.~\ref{fig:sim1_res1}.}
  \label{fig:pmf}
\end{figure}

Note that in the case of CI-FTI, from Prop.~\ref{prop:variable_density_sampling}, the pmf $p^\alpha$ in \eqref{eq:vds_alpha} is admissible --- \ie it allows for HS volume reconstruction in CI-FTI --- only if it is proportional to a vector $\bs \kappa^\alpha \in \bb R^{N_\xi}_+$ that bounds the local coherence $\mu^{\rm loc}_{\nell}(\bs F^* \bs \Psi_{\rm 1D})$  between $\bs F$ and $\bs \Psi_{\rm 1D}$. Thus, we must have $\|\bs \kappa^\alpha\|^2 p^\alpha(l) = (\kappa^\alpha_\nell)^2 \geq \mu^{\rm loc}_{\nell}(\bs F^* \bs \Psi_{\rm 1D})$ for all $\nell \in \range{N_\xi}$. This involves in particular $\|\bs \kappa^\alpha\|^2 p^\alpha(N_\xi/2) = \|\bs \kappa^\alpha\|^2 C_{N_\xi, \alpha} \geq 1$, since $\mu^{\rm loc}_{N_\xi/2}(\bs F^* \bs \Psi_{\rm 1D}) = 1$ (see App.~\ref{sec:proof sifti}), \ie we necessarily have $\|\bs \kappa^\alpha\|^2 \geq 1/C_{N_\xi, \alpha} = \sum_\nell \min\{1, \frac{1}{|\nell - N_\xi/2|^\alpha}\}$.  In the case of $p^{\rm opt}$, we can directly set $\kappa_l^{\rm opt} = \mu^{\rm loc}_{\nell}(\bs F^* \bs \Psi_{\rm 1D})$ by construction.

We provide an illustration of the corresponding pmfs in Fig.~\ref{fig:pmf} for CI-FTI.
It is clear that the pmf for $\alpha=1$ is close to the optimal pmf curve $p^{\rm opt}$. This is in agreement with the bound $\|\bs\kappa^{1}\|^2 \geq 14.24$, \ie a value closer to $\|\bs\kappa^{\rm opt}\|^2 \simeq 6.15$ than the lower bounds of $\|\bs\kappa^{\alpha}\|^2$ for the other values of $\alpha$ (see Fig.~\ref{fig:pmf}).
\medskip

The CI/SI-FTI measurements are simulated by restricting the Nyquist measurements to a subset $\Omega$ specified by the pmfs in \eqref{eq:vds_alpha} and the procedures defined in Sec.~\ref{sec:acquisition-strategy} and Sec.~\ref{sec:acquisition-strategy-subsec}. At each trial of our experiments, we generate randomly both $\Omega$ and a synthetic HS volume from $\bs{X} = \bs{\Psi}\bs{S}$, where $\bs{\Psi}=\bs{\Psi}_{{\rm 2D}}\otimes\bs{\Psi}_{\rm 1D}$ for SI-FTI, and $\bs{\Psi}=\bs{I}_{N_{\rm p}}\otimes\bs{\Psi}_{\rm 1D}$ for CI-FTI. The matrix $\bs S \in \bb R^{N_\xi \times N_{\rm p}}$ is sparse, its dimensions are $N_\xi = 512$ and $N_{\rm p} = 8^2$ (\ie $\Nhs = 2^{15}$), and its row and column sparsity levels are $K_\xi = 4$ and $K_p = 4$, respectively. The indices of the $K_\xi$ rows and $K_p$ columns are chosen uniformly at random and the non-zero random coefficients follow a normal distribution. We finally simulate noiseless Nyquist-FTI measurements of $\bs x = \ve(\bs X)$ from $\bs{y}^{\Nyq} = \bs{\Phi}^*\bs{x}$, where $\bs{\Phi}:=\bs{I}_{N_{\rm p}}\otimes\bs{F}$. 

For a fixed measurement ratio $M/\Nhs$ this procedure is repeated for 50 independent trials. We trace the phase transition curves in Fig.~\ref{fig:sim1_res1} which shows the probability of successful recovery against the number of measurements by solving \eqref{eq:cifti main reconstruction} and \eqref{eq:sifti main reconstruction} with a zero noise level (\ie $\varepsilon^\CI = \varepsilon^\SI = 0$). We count a recovery as successful if $\|\hat{\bs{x}}-\bs{x}\|^2\le 10^{-10}\|\bs{x}\|^2$. The plot confirms that the value $\alpha=1$ (\ie the proposed VDS strategy) is very close to the optimal sampling strategy $p^{\rm opt}$ and can reach 100\,\% chance of successful recovery from $M/\Nhs>0.2$ and $M/\Nhs>0.5$ for SI-FTI and CI-FTI, respectively; while for UDS strategy ($\alpha=0$) we observe that this cannot be achieved even when $M/N_{\rm hs}=1$. Hereafter, the rest of our experiments are restricted to the case $\alpha = 1$.

%%%%%%%%%%%%%%%%%%%%%%%%%%%%%%%%%%%%%%%%%%%%%%%%%%%%%%%%%%%%%%%%%%%%%%%% 
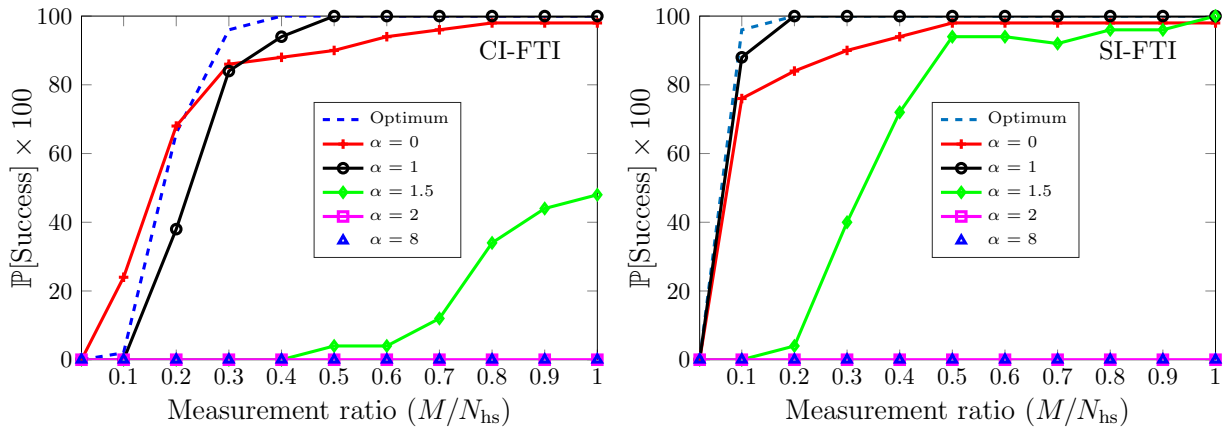
\begin{figure}[tbh]
  \begin{minipage}{0.49\linewidth}
    \centering
    % This file was created by matlab2tikz.
%
%The latest updates can be retrieved from
%  http://www.mathworks.com/matlabcentral/fileexchange/22022-matlab2tikz-matlab2tikz
%where you can also make suggestions and rate matlab2tikz.
%
\definecolor{mycolor1}{rgb}{1.00000,0.00000,1.00000}%
\begin{tikzpicture}[scale = 0.9]
\begin{axis}[%
width=3in,
height=2in,
at={(0.762in,0.486in)},
scale only axis,
xmin=0.02,
xmax=1,
xlabel={$\text{Measurement ratio}~(M/N_{\rm hs})$},
xtick={0.1,0.2,0.3,0.4,0.5,0.6,0.7,0.8,0.9,1},
xticklabels={0.1,0.2,0.3,0.4,0.5,0.6,0.7,0.8,0.9,1},
xlabel style={at = {(0.5,0.02)},font=\fontsize{12}{1}\selectfont},
ticklabel style={font=\fontsize{10}{1}\selectfont},
ymin=0,
ymax=100,
ytick={0,20,40,60,80,100},
yticklabels={0,20,40,60,80,100},
ylabel={$\bb{P}[\text{Success}] \times 100$},
ylabel style={at = {(0.05,0.5)},font=\fontsize{12}{1}},
axis background/.style={fill=white},
legend style={at={(0.45,0.75)},anchor=north west,legend cell align=left,align=left,draw=white!15!black,font=\fontsize{7}{1}\selectfont}
]
\node[text=black, draw=none] at (rel axis cs:0.85,0.9) {CI-FTI}; 
\addplot [color=blue,dashed,line width=1.3pt]
  table[row sep=crcr]{%
0.02	0\\
0.1	2\\
0.2	66\\
0.3	96\\
0.4	100\\
0.5	100\\
0.6	100\\
0.7	100\\
0.8	100\\
0.9	100\\
1	100\\
};
\addlegendentry{$\text{Optimum}$};

\addplot [color=red,solid,line width=1.3pt,mark=+,mark options={solid}]
  table[row sep=crcr]{%
0.02	0\\
0.1	24\\
0.2	68\\
0.3	86\\
0.4	88\\
0.5	90\\
0.6	94\\
0.7	96\\
0.8	98\\
0.9	98\\
1	98\\
};
\addlegendentry{$\alpha\text{ = 0}$};

\addplot [color=black,solid,line width=1.3pt,mark=o,mark options={solid}]
  table[row sep=crcr]{%
0.02	0\\
0.1	0\\
0.2	38\\
0.3	84\\
0.4	94\\
0.5	100\\
0.6	100\\
0.7	100\\
0.8	100\\
0.9	100\\
1	100\\
};
\addlegendentry{$\alpha\text{ = 1}$};

\addplot [color=green,solid,line width=1.3pt,mark=diamond,mark options={solid}]
  table[row sep=crcr]{%
0.02	0\\
0.1	0\\
0.2	0\\
0.3	0\\
0.4	0\\
0.5	4\\
0.6	4\\
0.7	12\\
0.8	34\\
0.9	44\\
1	48\\
};
\addlegendentry{$\alpha\text{ = 1.5}$};

\addplot [color=mycolor1,solid,line width=1.3pt,mark=square,mark options={solid}]
  table[row sep=crcr]{%
0.02	0\\
0.1	0\\
0.2	0\\
0.3	0\\
0.4	0\\
0.5	0\\
0.6	0\\
0.7	0\\
0.8	0\\
0.9	0\\
1	0\\
};
\addlegendentry{$\alpha\text{ = 2}$};

\addplot [color=blue,line width=1.3pt,only marks,mark=triangle,mark options={solid}]
  table[row sep=crcr]{%
0.02	0\\
0.1	0\\
0.2	0\\
0.3	0\\
0.4	0\\
0.5	0\\
0.6	0\\
0.7	0\\
0.8	0\\
0.9	0\\
1	0\\
};
\addlegendentry{$\alpha\text{ = 8}$};

\end{axis}
\end{tikzpicture}%

%%% Local Variables:
%%% mode: latex
%%% TeX-master: "../VDSCFTI"
%%% End:
  \end{minipage}
  \begin{minipage}{0.49\linewidth}
    \centering
    % This file was created by matlab2tikz.
%
%The latest updates can be retrieved from
%  http://www.mathworks.com/matlabcentral/fileexchange/22022-matlab2tikz-matlab2tikz
%where you can also make suggestions and rate matlab2tikz.
%
\definecolor{mycolor1}{rgb}{0.00000,0.44700,0.74100}%
\definecolor{mycolor2}{rgb}{1.00000,0.00000,1.00000}%
\begin{tikzpicture}[scale = 0.9]
\begin{axis}[%
width=3in,
height=2in,
at={(0.762in,0.486in)},
scale only axis,
xmin=0.02,
xmax=1,
xlabel={$\text{Measurement ratio}~(M/N_{\rm hs})$},
xtick={0.1,0.2,0.3,0.4,0.5,0.6,0.7,0.8,0.9,1},
xticklabels={0.1,0.2,0.3,0.4,0.5,0.6,0.7,0.8,0.9,1},
ticklabel style={font=\fontsize{10}{1}\selectfont},
xlabel style={at = {(0.5,0.02)},font=\fontsize{12}{1}\selectfont},
ymin=0,
ymax=100,
ytick={0,20,40,60,80,100},
yticklabels={0,20,40,60,80,100},
ylabel={$\bb{P}[\text{Success}] \times 100$},
ylabel style={at = {(0.05,0.5)},font=\fontsize{12}{1}},
axis background/.style={fill=white},
legend style={at={(0.45,0.75)},anchor=north west,legend cell align=left,align=left,draw=white!15!black,font=\fontsize{7}{1}\selectfont}
]
\node[text=black, draw=none] at (rel axis cs:0.85,0.9) {SI-FTI}; 
\addplot [color=mycolor1,dashed,line width=1.3pt]
  table[row sep=crcr]{%
0.02	0\\
0.1	96\\
0.2	100\\
0.3	100\\
0.4	100\\
0.5	100\\
0.6	100\\
0.7	100\\
0.8	100\\
0.9	100\\
1	100\\
};
\addlegendentry{$\text{Optimum}$};

\addplot [color=red,solid,line width=1.3pt,mark=+,mark options={solid}]
  table[row sep=crcr]{%
0.02	0\\
0.1	76\\
0.2	84\\
0.3	90\\
0.4	94\\
0.5	98\\
0.6	98\\
0.7	98\\
0.8	98\\
0.9	98\\
1	98\\
};
\addlegendentry{$\alpha\text{ = 0}$};

\addplot [color=black,solid,line width=1.3pt,mark=o,mark options={solid}]
  table[row sep=crcr]{%
0.02	0\\
0.1	88\\
0.2	100\\
0.3	100\\
0.4	100\\
0.5	100\\
0.6	100\\
0.7	100\\
0.8	100\\
0.9	100\\
1	100\\
};
\addlegendentry{$\alpha\text{ = 1}$};

\addplot [color=green,solid,line width=1.3pt,mark=diamond,mark options={solid}]
  table[row sep=crcr]{%
0.02	0\\
0.1	0\\
0.2	4\\
0.3	40\\
0.4	72\\
0.5	94\\
0.6	94\\
0.7	92\\
0.8	96\\
0.9	96\\
1	100\\
};
\addlegendentry{$\alpha\text{ = 1.5}$};

\addplot [color=mycolor2,solid,line width=1.3pt,mark=square,mark options={solid}]
  table[row sep=crcr]{%
0.02	0\\
0.1	0\\
0.2	0\\
0.3	0\\
0.4	0\\
0.5	0\\
0.6	0\\
0.7	0\\
0.8	0\\
0.9	0\\
1	0\\
};
\addlegendentry{$\alpha\text{ = 2}$};

\addplot [color=blue,line width=1.3pt,only marks,mark=triangle,mark options={solid}]
  table[row sep=crcr]{%
0.02	0\\
0.1	0\\
0.2	0\\
0.3	0\\
0.4	0\\
0.5	0\\
0.6	0\\
0.7	0\\
0.8	0\\
0.9	0\\
1	0\\
};
\addlegendentry{$\alpha\text{ = 8}$};

\end{axis}
\end{tikzpicture}%
%%% Local Variables:
%%% mode: latex
%%% TeX-master: "../VDSCFTI"
%%% End:
  \end{minipage}
  \caption{Phase transition comparison of proposed VDS ($\alpha =1$) against UDS ($\alpha =0$) and other VDS strategies in the frameworks of CI-FTI and SI-FTI.}
  \label{fig:sim1_res1}
\end{figure}
%%%%%%%%%%%%%%%%%%%%%%%%%%%%%%%%%%%%%%%%%%%%%%%%%%%%%%%%%%%%%%%%%%%%%%%%%%%%%%%%%%%%%%%
%%%%%%%%%%%%%%%%%%% Simulation 2: Ground truth %%%%%%%%%%%%%%%%%%%%%%
\begin{figure}[tbh]
  \centering\noindent
  \centering
  \raisebox{5mm}{% This file was created by matlab2tikz.
%
%The latest updates can be retrieved from
%  http://www.mathworks.com/matlabcentral/fileexchange/22022-matlab2tikz-matlab2tikz
%where you can also make suggestions and rate matlab2tikz.
%
\begin{tikzpicture}[scale = 0.3]
\centering
\begin{axis}[%
width=2.5in,
height=2.5in,
at={(0in,0in)},
scale only axis,
axis on top,
xmin=0.5,
xmax=64.5,
tick align=outside,
y dir=reverse,
ymin=0.5,
ymax=64.5,
hide axis
]
\addplot [forget plot] graphics [xmin=0.5,xmax=64.5,ymin=0.5,ymax=64.5] {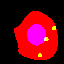};
\node[text=white, draw=none] at (rel axis cs:0.5,0.9) {\fontsize{18}{1}\selectfont Ground truth RGB}; 
\end{axis}
\end{tikzpicture}%
%%% Local Variables:
%%% mode: latex
%%% TeX-master: "../VDSCFTI"
%%% End:} 
  \raisebox{5mm}{% This file was created by matlab2tikz.
%
%The latest updates can be retrieved from
%  http://www.mathworks.com/matlabcentral/fileexchange/22022-matlab2tikz-matlab2tikz
%where you can also make suggestions and rate matlab2tikz.
%
\begin{tikzpicture}[scale = 0.3]
\centering
\begin{axis}[%
width=7.5in,
height=2.5in,
at={(0in,0in)},
scale only axis,
point meta min=1,
point meta max=384,
axis on top,
xmin=0.5,
xmax=192.5,
y dir=reverse,
ymin=0.5,
ymax=64.5,
hide axis
]
\addplot [forget plot] graphics [xmin=0.5,xmax=192.5,ymin=0.5,ymax=64.5] {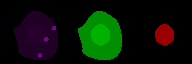};
\node[text=white, draw=none] at (rel axis cs:0.166,0.9) {\fontsize{18}{1}\selectfont $\bs{\nell = 72}$}; 
\node[text=white, draw=none] at (rel axis cs:0.5,0.9) {\fontsize{18}{1}\selectfont $\bs{\nell = 79}$}; 
\node[text=white, draw=none] at (rel axis cs:0.8333,0.9) {\fontsize{18}{1}\selectfont $\bs{\nell = 96}$}; 
\end{axis}
\end{tikzpicture}%}\hspace{3mm}
  % This file was created by matlab2tikz.
%
%The latest updates can be retrieved from
%  http://www.mathworks.com/matlabcentral/fileexchange/22022-matlab2tikz-matlab2tikz
%where you can also make suggestions and rate matlab2tikz.
%
\definecolor{mycolor1}{rgb}{0.00000,0.44700,0.74100}%
\definecolor{mycolor2}{rgb}{0.85000,0.32500,0.09800}%
\definecolor{mycolor3}{rgb}{0.92900,0.69400,0.12500}%
\begin{tikzpicture}[scale = 0.3]
\centering
\begin{axis}[
width=7.5in,
height=2.5in,
at={(0in,0in)},
scale only axis,
xmin=1,
xmax=256,
xlabel={$\text{Wavenumber index (}\nell\text{)}$},
xtick={1,64,128,196,256},
xticklabels={1,64,128,196,256},
ticklabel style={font=\fontsize{22}{1}\selectfont},
ymin=-2,
ymax=102,
ylabel={$\text{Intensity}$},
ytick={0,100},
yticklabels={0,100},
xlabel style={at = {(0.5,-0.05)},font=\fontsize{30}{1}},ylabel style={at = {(0.01,0.5)},font=\fontsize{30}{1}},
axis background/.style={fill=white},
legend style={at={(1,0.98)},anchor=north east,legend cell align=left,align=left,fill=none,draw=none,font=\fontsize{25}{10}\selectfont}
]
%\node[text=black, draw=none,anchor=north east] at (rel axis cs:0.95,0.95) {\fontsize{32}{1}\selectfont Fluorochrome spectra}; 
\addplot [color=red,solid,line width=2.0pt]
  table[row sep=crcr]{%
1	0\\
2	0\\
3	0\\
4	0\\
5	0\\
6	0\\
7	0\\
8	0\\
9	0\\
10	0\\
11	0\\
12	0\\
13	0\\
14	0\\
15	0\\
16	0\\
17	0\\
18	0\\
19	0\\
20	0\\
21	0\\
22	0\\
23	0\\
24	0\\
25	0\\
26	0\\
27	0\\
28	0\\
29	0\\
30	0\\
31	0\\
32	0\\
33	0\\
34	0\\
35	0\\
36	0\\
37	0\\
38	0\\
39	0\\
40	0\\
41	0\\
42	0\\
43	0\\
44	0\\
45	0\\
46	0\\
47	0\\
48	0\\
49	0\\
50	0\\
51	0\\
52	0\\
53	0\\
54	0\\
55	0\\
56	0.425123840097398\\
57	1.01322132846114\\
58	1.43240469374638\\
59	1.92176290736576\\
60	2.49037897780377\\
61	3.06719923160167\\
62	4.05942633785257\\
63	5.76046766692309\\
64	9.0228210011953\\
65	13.6480579311096\\
66	17.3868053430139\\
67	19.8974910653507\\
68	22.0714950020302\\
69	27.7038877110484\\
70	38.3794873665128\\
71	62.3071575685072\\
72	100\\
73	99.7216816026445\\
74	47.2526528306809\\
75	16.5896717716924\\
76	5.70358052529219\\
77	2.19604585415567\\
78	0.8161982210223\\
79	0.328603355789523\\
80	0.131726016361397\\
81	0\\
82	0\\
83	0\\
84	0\\
85	0\\
86	0\\
87	0\\
88	0\\
89	0\\
90	0\\
91	0\\
92	0\\
93	0\\
94	0\\
95	0\\
96	0\\
97	0\\
98	0\\
99	0\\
100	0\\
101	0\\
102	0\\
103	0\\
104	0\\
105	0\\
106	0\\
107	0\\
108	0\\
109	0\\
110	0\\
111	0\\
112	0\\
113	0\\
114	0\\
115	0\\
116	0\\
117	0\\
118	0\\
119	0\\
120	0\\
121	0\\
122	0\\
123	0\\
124	0\\
125	0\\
126	0\\
127	0\\
128	0\\
129	0\\
130	0\\
131	0\\
132	0\\
133	0\\
134	0\\
135	0\\
136	0\\
137	0\\
138	0\\
139	0\\
140	0\\
141	0\\
142	0\\
143	0\\
144	0\\
145	0\\
146	0\\
147	0\\
148	0\\
149	0\\
150	0\\
151	0\\
152	0\\
153	0\\
154	0\\
155	0\\
156	0\\
157	0\\
158	0\\
159	0\\
160	0\\
161	0\\
162	0\\
163	0\\
164	0\\
165	0\\
166	0\\
167	0\\
168	0\\
169	0\\
170	0\\
171	0\\
172	0\\
173	0\\
174	0\\
175	0\\
176	0\\
177	0\\
178	0\\
179	0\\
180	0\\
181	0\\
182	0\\
183	0\\
184	0\\
185	0\\
186	0\\
187	0\\
188	0\\
189	0\\
190	0\\
191	0\\
192	0\\
193	0\\
194	0\\
195	0\\
196	0\\
197	0\\
198	0\\
199	0\\
200	0\\
201	0\\
202	0\\
203	0\\
204	0\\
205	0\\
206	0\\
207	0\\
208	0\\
209	0\\
210	0\\
211	0\\
212	0\\
213	0\\
214	0\\
215	0\\
216	0\\
217	0\\
218	0\\
219	0\\
220	0\\
221	0\\
222	0\\
223	0\\
224	0\\
225	0\\
226	0\\
227	0\\
228	0\\
229	0\\
230	0\\
231	0\\
232	0\\
233	0\\
234	0\\
235	0\\
236	0\\
237	0\\
238	0\\
239	0\\
240	0\\
241	0\\
242	0\\
243	0\\
244	0\\
245	0\\
246	0\\
247	0\\
248	0\\
249	0\\
250	0\\
251	0\\
252	0\\
253	0\\
254	0\\
255	0\\
256	0\\
};
\addlegendentry{\ R-PE (R-phycoerythrin)};

\addplot [color=green,solid,line width=2.0pt]
  table[row sep=crcr]{%
1	0\\
2	0\\
3	0\\
4	0\\
5	0\\
6	0\\
7	0\\
8	0\\
9	0\\
10	0\\
11	0\\
12	0\\
13	0\\
14	0\\
15	0\\
16	0\\
17	0\\
18	0\\
19	0\\
20	0\\
21	0\\
22	0\\
23	0\\
24	0\\
25	0\\
26	0\\
27	0\\
28	0\\
29	0\\
30	0\\
31	0\\
32	0\\
33	0\\
34	0\\
35	0\\
36	0\\
37	0\\
38	0\\
39	0\\
40	0\\
41	0\\
42	0\\
43	0\\
44	0\\
45	0\\
46	0\\
47	0\\
48	0\\
49	0\\
50	0\\
51	0\\
52	0\\
53	0\\
54	0\\
55	0\\
56	0\\
57	0\\
58	0\\
59	0\\
60	0\\
61	0\\
62	0\\
63	0\\
64	3.56436398700623\\
65	4.71793563336539\\
66	6.21968508989662\\
67	8.10283744740463\\
68	10.3797515327807\\
69	13.6129735250634\\
70	17.8965026542928\\
71	23.1972632121268\\
72	29.8445506581177\\
73	36.6165640659674\\
74	43.6186552430398\\
75	49.8499252645555\\
76	57.187936006264\\
77	67.3658510808578\\
78	80.3651424563954\\
79	93.3376595568273\\
80	100\\
81	92.3602658880212\\
82	68.9427660780757\\
83	41.0519785478685\\
84	18.5163272594056\\
85	7.16101544040108\\
86	2.36212019398192\\
87	0.793372866248487\\
88	0.431054487419222\\
89	0\\
90	0\\
91	0\\
92	0\\
93	0\\
94	0\\
95	0\\
96	0\\
97	0\\
98	0\\
99	0\\
100	0\\
101	0\\
102	0\\
103	0\\
104	0\\
105	0\\
106	0\\
107	0\\
108	0\\
109	0\\
110	0\\
111	0\\
112	0\\
113	0\\
114	0\\
115	0\\
116	0\\
117	0\\
118	0\\
119	0\\
120	0\\
121	0\\
122	0\\
123	0\\
124	0\\
125	0\\
126	0\\
127	0\\
128	0\\
129	0\\
130	0\\
131	0\\
132	0\\
133	0\\
134	0\\
135	0\\
136	0\\
137	0\\
138	0\\
139	0\\
140	0\\
141	0\\
142	0\\
143	0\\
144	0\\
145	0\\
146	0\\
147	0\\
148	0\\
149	0\\
150	0\\
151	0\\
152	0\\
153	0\\
154	0\\
155	0\\
156	0\\
157	0\\
158	0\\
159	0\\
160	0\\
161	0\\
162	0\\
163	0\\
164	0\\
165	0\\
166	0\\
167	0\\
168	0\\
169	0\\
170	0\\
171	0\\
172	0\\
173	0\\
174	0\\
175	0\\
176	0\\
177	0\\
178	0\\
179	0\\
180	0\\
181	0\\
182	0\\
183	0\\
184	0\\
185	0\\
186	0\\
187	0\\
188	0\\
189	0\\
190	0\\
191	0\\
192	0\\
193	0\\
194	0\\
195	0\\
196	0\\
197	0\\
198	0\\
199	0\\
200	0\\
201	0\\
202	0\\
203	0\\
204	0\\
205	0\\
206	0\\
207	0\\
208	0\\
209	0\\
210	0\\
211	0\\
212	0\\
213	0\\
214	0\\
215	0\\
216	0\\
217	0\\
218	0\\
219	0\\
220	0\\
221	0\\
222	0\\
223	0\\
224	0\\
225	0\\
226	0\\
227	0\\
228	0\\
229	0\\
230	0\\
231	0\\
232	0\\
233	0\\
234	0\\
235	0\\
236	0\\
237	0\\
238	0\\
239	0\\
240	0\\
241	0\\
242	0\\
243	0\\
244	0\\
245	0\\
246	0\\
247	0\\
248	0\\
249	0\\
250	0\\
251	0\\
252	0\\
253	0\\
254	0\\
255	0\\
256	0\\
};
\addlegendentry{\ Acridine orange};

\addplot [color=blue,solid,line width=2.0pt]
  table[row sep=crcr]{%
1	0\\
2	0\\
3	0\\
4	0\\
5	0\\
6	0\\
7	0\\
8	0\\
9	0\\
10	0\\
11	0\\
12	0\\
13	0\\
14	0\\
15	0\\
16	0\\
17	0\\
18	0\\
19	0\\
20	0\\
21	0\\
22	0\\
23	0\\
24	0\\
25	0\\
26	0\\
27	0\\
28	0\\
29	0\\
30	0\\
31	0\\
32	0\\
33	0\\
34	0\\
35	0\\
36	0\\
37	0\\
38	0\\
39	0\\
40	0\\
41	0\\
42	0\\
43	0\\
44	0\\
45	0\\
46	0\\
47	0\\
48	0\\
49	0\\
50	0\\
51	0\\
52	0\\
53	0\\
54	0\\
55	0\\
56	0\\
57	0\\
58	0\\
59	0\\
60	0\\
61	0\\
62	0\\
63	0\\
64	0\\
65	0\\
66	0\\
67	0\\
68	0\\
69	0\\
70	4.99148605175311\\
71	5.88630024495692\\
72	6.91390729429339\\
73	8.32812217454069\\
74	9.81896462230744\\
75	11.4466130001154\\
76	13.8979698304278\\
77	16.74794054043\\
78	19.5019547454361\\
79	22.1504297211369\\
80	25.3943703603723\\
81	28.7024739125831\\
82	31.5726640936457\\
83	35.0715848058799\\
84	38.9463545563847\\
85	43.0380526072911\\
86	46.773340110578\\
87	52.7067843754531\\
88	58.8285474053792\\
89	65.2377588631634\\
90	71.8967075062441\\
91	78.8404900055439\\
92	84.924867282287\\
93	90.2651954740089\\
94	94.6641522723048\\
95	97.5664475937293\\
96	99.6443820577769\\
97	100\\
98	99.0297467420538\\
99	96.1936527081008\\
100	92.26017238648\\
101	86.9530558811076\\
102	80.7340933529192\\
103	73.6319358819381\\
104	65.9289892779263\\
105	57.6817709129762\\
106	49.4431809172146\\
107	41.8696023749508\\
108	34.287474275664\\
109	26.5160119942333\\
110	19.2999390460755\\
111	13.3532295164691\\
112	8.47106244123438\\
113	5.19021408463166\\
114	3.17646389826742\\
115	2.00078332682424\\
116	1.37581883938494\\
117	1.08398814707413\\
118	0\\
119	0\\
120	0\\
121	0\\
122	0\\
123	0\\
124	0\\
125	0\\
126	0\\
127	0\\
128	0\\
129	0\\
130	0\\
131	0\\
132	0\\
133	0\\
134	0\\
135	0\\
136	0\\
137	0\\
138	0\\
139	0\\
140	0\\
141	0\\
142	0\\
143	0\\
144	0\\
145	0\\
146	0\\
147	0\\
148	0\\
149	0\\
150	0\\
151	0\\
152	0\\
153	0\\
154	0\\
155	0\\
156	0\\
157	0\\
158	0\\
159	0\\
160	0\\
161	0\\
162	0\\
163	0\\
164	0\\
165	0\\
166	0\\
167	0\\
168	0\\
169	0\\
170	0\\
171	0\\
172	0\\
173	0\\
174	0\\
175	0\\
176	0\\
177	0\\
178	0\\
179	0\\
180	0\\
181	0\\
182	0\\
183	0\\
184	0\\
185	0\\
186	0\\
187	0\\
188	0\\
189	0\\
190	0\\
191	0\\
192	0\\
193	0\\
194	0\\
195	0\\
196	0\\
197	0\\
198	0\\
199	0\\
200	0\\
201	0\\
202	0\\
203	0\\
204	0\\
205	0\\
206	0\\
207	0\\
208	0\\
209	0\\
210	0\\
211	0\\
212	0\\
213	0\\
214	0\\
215	0\\
216	0\\
217	0\\
218	0\\
219	0\\
220	0\\
221	0\\
222	0\\
223	0\\
224	0\\
225	0\\
226	0\\
227	0\\
228	0\\
229	0\\
230	0\\
231	0\\
232	0\\
233	0\\
234	0\\
235	0\\
236	0\\
237	0\\
238	0\\
239	0\\
240	0\\
241	0\\
242	0\\
243	0\\
244	0\\
245	0\\
246	0\\
247	0\\
248	0\\
249	0\\
250	0\\
251	0\\
252	0\\
253	0\\
254	0\\
255	0\\
256	0\\
};
\addlegendentry{\ TetraSpeck blue dye};

\end{axis}
\end{tikzpicture}%
%%% Local Variables:
%%% mode: latex
%%% TeX-master: "../VDSCFTI"
%%% End:
  \caption{A synthetic biological RGB image (left); three spectral bands of the generated ground truth HS volume (middle); the known spectral signatures of three fluorochromes (right).}
  \label{fig:gt}
\end{figure}
%%%%%%%%%%%%%%%%%%% Simulation 2: result 1 %%%%%%%%%%%%%%%%%%%%%%%%%%%
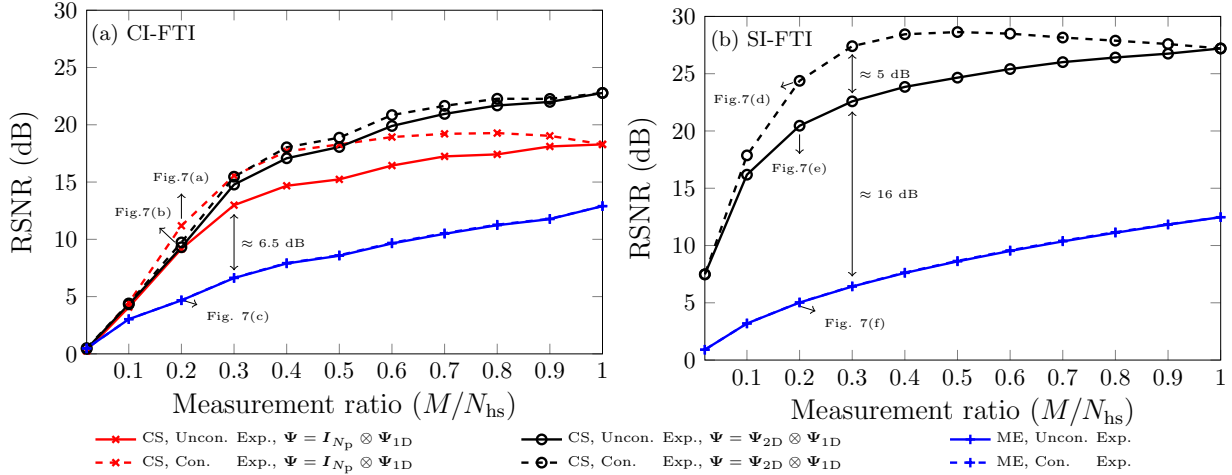
\begin{figure}[tbh]
  \begin{minipage}{0.49\linewidth}
    \centering
    \scalebox{0.9}{% This file was created by matlab2tikz.
%
%The latest updates can be retrieved from
%  http://www.mathworks.com/matlabcentral/fileexchange/22022-matlab2tikz-matlab2tikz
%where you can also make suggestions and rate matlab2tikz.
%
\begin{tikzpicture}[scale = 1]
\begin{axis}[%
width=3in,
height=2in,
at={(0.762in,0.486in)},
scale only axis,
xmin=0.02,
xmax=1,
xlabel={$\text{Measurement ratio}~(M/N_{\rm hs})$},
xtick={0.1,0.2,0.3,0.4,0.5,0.6,0.7,0.8,0.9,1},
xticklabels={0.1,0.2,0.3,0.4,0.5,0.6,0.7,0.8,0.9,1},
xlabel style={at = {(0.5,0.02)},font=\fontsize{12}{1}\selectfont},
ticklabel style={font=\fontsize{10}{1}\selectfont},
ymin=0,
ymax=30,
ytick={0,5,10,15,20,25,30},
yticklabels={0,5,10,15,20,25,30},
ylabel={RSNR (dB)},
ylabel style={at = {(0.04,0.5)},font=\fontsize{12}{1}\selectfont},
axis background/.style={fill=white},
legend columns=3,
legend style={at={(0,-0.2)},anchor=north west,legend cell align=left,align=left,fill=none,draw=none,font=\fontsize{6}{1}\selectfont,/tikz/column 2/.style={column sep=15mm},/tikz/column 4/.style={column sep=15mm}},
legend entries={{CS, Uncon. Exp., $\bs \Psi = \bs I_{N_{\rm p}} \otimes  \bs \Psi_{\rm 1D}$},
{CS, Uncon. Exp., $\bs \Psi = \bs \Psi_{\rm 2D} \otimes  \bs \Psi_{\rm 1D}$},
{ME, Uncon. Exp.},
{CS, Con. ~~~Exp., $\bs \Psi = \bs I_{N_{\rm p}} \otimes  \bs \Psi_{\rm 1D}$},
{CS, Con. ~~~Exp., $\bs \Psi = \bs \Psi_{\rm 2D} \otimes  \bs \Psi_{\rm 1D}$},
{ME, Con. ~~~Exp.}
}
]
\node[text=black, draw=none] at (rel axis cs:0.11,0.93) {\fontsize{9}{1}\selectfont (a) CI-FTI}; 
\node[align=left, text=black, draw=none,anchor=south] (a) at (180,140) {\fontsize{5}{1}\selectfont Fig.\ref{fig:sim2_res2}(a)}; 
\node (a1) at (180,110){};\draw[->] (a1) -- (a);
\node[align=left, text=black, draw=none,anchor=south east] (b) at (180,110) {\fontsize{5}{1}\selectfont Fig.\ref{fig:sim2_res2}(b)}; 
\node (b1) at (185,90){};\draw[->] (b1) -- (b);
\node[align=left, text=black, draw=none,anchor=north west] (d) at (210,45) {\fontsize{5}{1}\selectfont Fig. \ref{fig:sim2_res2}(c)};
\node (d1) at (160,50) {};\draw[->] (d1) -- (d);
%\node (c1) at (280,155) {};\node (c2) at (280,130) {};\draw[<->] (c1) -- (c2);
%\node[align=left, text=black, draw=none,anchor=west]  at (265,250) {\fontsize{5}{1}\selectfont $\ \approx\text{5 dB}$};
\node (e1) at (280,130) {};\node (e2) at (280,65) {};\draw[<->] (e1) -- (e2);
\node[align=left, text=black, draw=none,anchor=west]  at (265,97) {\fontsize{5}{1}\selectfont $\ \approx\text{6.5 dB}$};

\addlegendimage{mark=x,red,solid,mark options={solid},line width=1.0pt}
\addlegendimage{mark=o,black,solid,mark options={solid},line width=1.0pt}
\addlegendimage{mark=+,blue,solid,mark options={solid},line width=1.0pt}
\addlegendimage{mark=x,red,dashed,mark options={solid},line width=1.0pt}
\addlegendimage{mark=o,black,dashed,mark options={solid},line width=1.0pt}
\addlegendimage{mark=+,blue,dashed,mark options={solid},line width=1.0pt}

\addplot [color=red,solid,line width=1.0pt,mark=x,mark options={solid}]
  table[row sep=crcr]{%
0.02	0.291338287354896\\
0.1	4.0766693160847\\
0.2	9.17050420115845\\
0.3	12.9819871675355\\
0.4	14.6752752056534\\
0.5	15.2330992716186\\
0.6	16.4392313291146\\
0.7	17.2401911462748\\
0.8	17.4161785599611\\
0.9	18.1080122402541\\
1	18.2875312978891\\
};
%\addlegendentry{CS, Uncon. Exp., $\bs \Psi = \bs I_{N_{\rm p}} \otimes  \bs \Psi_{\rm 1D}$};
\addplot [color=red,dashed,line width=1.0pt,mark=x,mark options={solid}]
  table[row sep=crcr]{%
0.02	0.259655310462315\\
0.1	4.47079839958646\\
0.2	11.1842562390267\\
0.3	15.5632656942597\\
0.4	17.6813111327555\\
0.5	18.285\\
0.6	18.9282304287411\\
0.7	19.2093323234999\\
0.8	19.2753879377574\\
0.9	19.0365721773203\\
1	18.2875312978891\\
};
%\addlegendentry{CS, Con. Exp., $\bs \Psi = \bs I_{N_{\rm p}} \otimes  \bs \Psi_{\rm 1D}$};

\addplot [color=black,solid,line width=1.0pt,mark=o,mark options={solid}]
  table[row sep=crcr]{%
0.02	0.432724149858494\\
0.1	4.32354218844923\\
0.2	9.30690853154713\\
0.3	14.7797163722504\\
0.4	17.076967821677\\
0.5	18.0599291979456\\
0.6	19.8959461175763\\
0.7	20.9483861951601\\
0.8	21.6837098725132\\
0.9	21.9970174310401\\
1	22.7780946780921\\
};
%\addlegendentry{CS, Uncon. Exp., $\bs \Psi = \bs \Psi_{\rm 2D} \otimes  \bs \Psi_{\rm 1D}$};

\addplot [color=black,dashed,line width=1.0pt,mark=o,mark options={solid}]
  table[row sep=crcr]{%
0.02	0.504350797433591\\
0.1	4.40164038421566\\
0.2	9.72870787568183\\
0.3	15.4660863664557\\
0.4	18.0338906012123\\
0.5	18.8751616483886\\
0.6	20.84792813563\\
0.7	21.6536489223939\\
0.8	22.2725116638126\\
0.9	22.2592975759607\\
1	22.7780946780921\\
};
%\addlegendentry{CS, Con. Exp., $\bs \Psi = \bs \Psi_{\rm 2D} \otimes  \bs \Psi_{\rm 1D}$};

\addplot [color=blue,solid,line width=1.0pt,mark=+,mark options={solid}]
  table[row sep=crcr]{%
1	12.9006876129134\\
0.9	11.7623898993566\\
0.8	11.2310987214288\\
0.7	10.4803738014821\\
0.6	9.64043205294349\\
0.5	8.56700086441659\\
0.4	7.89258766359379\\
0.3	6.61952856916502\\
0.2	4.66914379942768\\
0.1	3.02674633755848\\
0.02	0.47776\\
};
%\addlegendentry{ME, Uncon. Exp.};

\addplot [color=blue,dashed,line width=1.0pt,mark=+,mark options={solid}]
  table[row sep=crcr]{%
1	12.9006876129134\\
0.9	11.7931147639484\\
0.8	11.2805197062069\\
0.7	10.5349102448967\\
0.6	9.69311441390552\\
0.5	8.6100419011505\\
0.4	7.92894123656159\\
0.3	6.64257616980061\\
0.2	4.68082395366488\\
0.1	3.03234244419695\\
0.02	0.47868\\
};
%\addlegendentry{ME, Con. Exp.};

\end{axis}
\end{tikzpicture}%}
  \end{minipage}
  \begin{minipage}{0.49\linewidth}
    \centering
    \vspace{-5mm}
    \scalebox{0.9}{% This file was created by matlab2tikz.
%
%The latest updates can be retrieved from
%  http://www.mathworks.com/matlabcentral/fileexchange/22022-matlab2tikz-matlab2tikz
%where you can also make suggestions and rate matlab2tikz.
%
\begin{tikzpicture}[scale = 1]
\begin{axis}[%
width=3in,
height=2in,
at={(0.762in,0.486in)},
scale only axis,
xmin=0.02,
xmax=1,
xlabel={$\text{Measurement ratio}~(M/N_{\rm hs})$},
xtick={0.1,0.2,0.3,0.4,0.5,0.6,0.7,0.8,0.9,1},
xticklabels={0.1,0.2,0.3,0.4,0.5,0.6,0.7,0.8,0.9,1},
xlabel style={at = {(0.5,0.02)},font=\fontsize{12}{1}\selectfont},
ticklabel style={font=\fontsize{10}{1}\selectfont},
ymin=0,
ymax=30,
ytick={0,5,10,15,20,25,30},
yticklabels={0,5,10,15,20,25,30},
ylabel={RSNR (dB)},
ylabel style={at = {(0.04,0.5)},font=\fontsize{12}{1}\selectfont},
axis background/.style={fill=white}
]
\node[text=black, draw=none] at (rel axis cs:0.11,0.93) {\fontsize{9}{1}\selectfont (b) SI-FTI}; 
\node[align=left, text=black, draw=none,anchor=north east] (a) at (144,240) {\fontsize{5}{1}\selectfont Fig.\ref{fig:sim2_res2}(d)}; 
\node (a1) at (185,245){};\draw[->] (a1) -- (a);
\node[align=left, text=black, draw=none,anchor=north] (b) at (180,180) {\fontsize{5}{1}\selectfont Fig.\ref{fig:sim2_res2}(e)}; 
\node (b1) at (180,205){};\draw[->] (b1) -- (b);
\node[align=left, text=black, draw=none,anchor=north west] (d) at (210,45) {\fontsize{5}{1}\selectfont Fig. \ref{fig:sim2_res2}(f)};
\node (d1) at (160,50) {};\draw[->] (d1) -- (d);
\node (c1) at (280,275) {};\node (c2) at (280,225) {};\draw[<->] (c1) -- (c2);
\node[align=left, text=black, draw=none,anchor=west]  at (265,250) {\fontsize{5}{1}\selectfont $\ \approx\text{5 dB}$};
\node (e1) at (280,225) {};\node (e2) at (280,65) {};\draw[<->] (e1) -- (e2);
\node[align=left, text=black, draw=none,anchor=west]  at (265,145) {\fontsize{5}{1}\selectfont $\ \approx\text{16 dB}$};
\addplot [color=black,solid,line width=1.0pt,mark=o,mark options={solid}]
  table[row sep=crcr]{%
1	27.203153341177\\
0.9	26.7500002976235\\
0.8	26.4151496310173\\
0.7	26.0110467117923\\
0.6	25.4070020474973\\
0.5	24.6544421536395\\
0.4	23.8476027119339\\
0.3	22.5850011061536\\
0.2	20.4578232400665\\
0.1	16.1845860948836\\
0.02	7.48126623536697\\
};
%\addlegendentry{CS, Uncon. Exp.};

\addplot [color=black,dashed,line width=1.0pt,mark=o,mark options={solid}]
  table[row sep=crcr]{%
1	27.203153341177\\
0.9	27.5900256416671\\
0.8	27.8851493646377\\
0.7	28.1562102551593\\
0.6	28.495314304738\\
0.5	28.6356280367611\\
0.4	28.4431178333461\\
0.3	27.4023721967207\\
0.2	24.377104639457\\
0.1	17.8682787340575\\
0.02	7.48417169771716\\
};
%\addlegendentry{CS, Con. Exp.};

\addplot [color=blue,solid,line width=1.0pt,mark=+,mark options={solid}]
  table[row sep=crcr]{%
1	12.4763628556388\\
0.9	11.8196432252404\\
0.8	11.1163259151755\\
0.7	10.3602855527667\\
0.6	9.52657246084252\\
0.5	8.6144732784481\\
0.4	7.60663511490738\\
0.3	6.41811500203849\\
0.2	5.02206967969799\\
0.1	3.19394806086136\\
0.02	0.904944395084541\\
};
%\addlegendentry{ME, Uncon. Exp.};

\addplot [color=blue,dashed,line width=1.0pt,mark=+,mark options={solid}]
  table[row sep=crcr]{%
1	12.4763628556388\\
0.9	11.851367002926\\
0.8	11.1647432479917\\
0.7	10.4139520984033\\
0.6	9.57762735028409\\
0.5	8.65790274234788\\
0.4	7.64016779939477\\
0.3	6.44101221564104\\
0.2	5.03533451739425\\
0.1	3.19940616441017\\
0.02	0.905854747117305\\
};
%\addlegendentry{ME, Con. Exp.};

\end{axis}
\end{tikzpicture}%}
  \end{minipage}
  \caption{The reconstruction performance of CI/SI-FTI systems solved by $\ell_1$ minimization (CS) and ME problem with the constrained/unconstrained light exposure budget. Note that for a fixed measurement ratio, the amount of light exposure in the unconstrained exposure scenario is less than the one in constrained exposure scenario, by a factor $M/N_{\rm hs}$.} 
  \label{fig:sim2_res1}
\end{figure}
\subsection{Reconstruction performances on a synthetic biological HS volume}
\label{subsec:Synthetic biological HS volume}
We now test the performance of the proposed compressive FTI frameworks in a more realistic context including measurement noise. We consider sensing scenarios with both constrained and unconstrained light exposure.
  
As illustrated in Fig.~\ref{fig:gt}, we simulate a biological HS volume of size $(N_\xi, N_{\rm p}) = (512,64^2)$ by mixing the coefficients of three spectral bands of a synthetic biological RGB image (selected from the benchmark images \cite{Ruusuvuori2008}) with the known spectra of three common fluorochromes. 

The Nyquist measurements are formed as $\bs y^{\Nyq} = \bs{\Phi}^*\bs{x}+\bs{n}^{\Nyq}$ where $n_{\nell,k}^\Nyq \sim_{\iid}\mathcal{N}(0,\sigma_\Nyq)$ and $\sigma_\Nyq$ is fixed such that $10\log( {\|\bs{x}\|^2}/{\|\bs{n}^\Nyq\|^2}) \approx 20$ dB. In the un\-con\-strained-exposure context, CI-FTI and SI-FTI observations are formed according to \eqref{eq:cifti acquisition} and \eqref{eq:sifti acquisition}, where the sets $\Omega^\xi$ and $\Omega$ are randomly generated from the pmfs \eqref{eq:pmf-ci-FTI} and \eqref{eq:p-bound-SI}, respectively, and the variance of each component of the associated additive Gaussian measurement noises is also set to $\sigma_\Nyq^2$. In the context of constrained-exposure simulations, the values of ground truth HS volume are multiplied by $N_\xi/ M_\xi$ and $N_\xi/ \bar M$ for CI-FTI and SI-FTI, respectively, according to the linear intensity scaling assumption explained in \eqref{eq:cifti ci acquisition} and \eqref{eq:sifti ci acquisition} (Sec.~\ref{sec:Sec_05}).

This sensing context is repeated over 10 random realizations of the noise, and the sets $\Omega^\xi$ or $\Omega$. Fig.~\ref{fig:sim2_res1} depicts the reconstruction SNR in dB as a function of the number of measurements. The HS volumes are reconstructed by solving two types of problems: \emph{(i)} the CS-based $\ell_1$ minimization problem~\eqref{eq:cifti main reconstruction} or~\eqref{eq:sifti main reconstruction}; and \emph{(ii)} the Minimal Energy (ME) problem~\cite{candes2006robust}, \ie a standard reconstruction method that amounts to applying the pseudo-inverse of the sensing operator on the noisy measurements. 

In both cases, ME reconstructions (the blue curves) do not reach CS reconstruction qualities, even though they benefit of the VDS strategy.
The poor performances of the ME solutions advocate the necessity of promoting sparsity prior in the reconstruction problem. The increased RSNR quality of SI-FTI over CI-FTI  is induced by both the 3D wavelet sparsity model and the greater diversity of the compressive sensing matrix in SI-FTI where spatial information of the HS volume is captured at each OPD index in $\range{N_\xi}$; in CI-FTI, this spatial information is observed only on the $M^\xi$ selected OPD indices. We recall that since there are repeated indices in subsampled sets $\Omega^\xi$ and $\Omega$, even for $M/N_{\rm hs}=1$, we cannot sample all the distinct elements. As a consequence, ME reconstructions does not reach Nyquist quality at $M/N_{\rm hs}=1$.

We now question whether the superior performance of SI-FTI is the consequence of the sensing diversity or of the selected 3D wavelet sparsity model. 
For this, deviating from what is guaranteed by CS literature (see Sec.~\ref{sec:cifti_HS Reconstruction Method and Guarantee}), we test the recovery of HS data from CI-FTI measurements using the recovery scheme \eqref{eq:sifti main reconstruction} regularized by the same prior as in SI-FTI, \ie with the 3D wavelet sparsity basis $\bs \Psi = \bs \Psi_{\rm 2D} \otimes \bs \Psi_{\rm 1D}$. We thus solve this optimization with the changes $\bs D \rightarrow \bs I_{N_{\rm p}} \otimes \sqrt{M_\xi}\bs D^\xi$, $M \rightarrow M_\xi N_{\rm p}$, and $\bs y^{\rm si} \rightarrow \bs y^{\rm ci}$, and assuming that the level of the noise is unaffected between the CI-FTI and the SI-FTI sensing schemes. The results are displayed in Fig.~\ref{fig:sim2_res1}-left (the black curves). We observe that this 3D wavelet sparsity model increases the RSNR, up to 4.5 dB, when $M/N_{\rm hs}$ is close to 1. This improvement, however, does not reach the RSNR of SI-FTI displayed in Fig.~\ref{fig:sim2_res1}-right. We thus conclude that the SI-FTI scheme benefit more from the diversity of its measurements than from its regularization compared to CI-FTI. In practice, HS data in CI-FTI should be reconstructed with this 3D wavelet sparsity basis to reach the highest RSNR when $M/N_{\rm hs}$ is large. Keeping this in mind, the next CI-FTI experiments are, however, run with $\bs \Psi = \bs I_{N_{\rm p}} \otimes \bs \Psi_{\rm 1D}$ for simplicity.

For constrained-exposure scenarios, we observe that --- as opposed to the common belief --- subsampling 40 percent of the FTI measurements yields better HS recovery quality than the one obtained from maximal number of FTI measurements\footnote{We note that in the case of CI-FTI regularized with 3D wavelet basis, there is still a 1~dB gain (around $M/N_{\rm hs} = 0.5$) to use the constrained-exposure scenario over the unconstrained one. However, a deeper analysis of this effect for this setting goes beyond the scope of this work.}. However, this effect vanishes for large values of $M$ and the RSNR decreases to reach the quality performances at the maximal number of measurements, which parallels with the behavior of the second error term in~\eqref{eq:cifti ci error} and~\eqref{eq:sifti ci error}. This phenomenon is due to the fact that the best $K$-term approximation error in the right-hand-side of \eqref{eq:cifti ci error} or \eqref{eq:sifti ci error} is dominated by the noise term, which can be reduced by taking less number of measurements. Besides, ME reconstruction cannot leverage constrained-exposure budget, since the measurement noise is the same as the one in unconstrained-exposure scenario. 
%%%%%%%%%%%%%%%%%%% Simulation 2: result 2 %%%%%%%%%%%%%%%%%%%%%%%%%%%
\begin{figure}[tbh]
  \centering
  \begin{minipage}{0.49\linewidth}
    \centering
    \scalebox{0.4}{% This file was created by matlab2tikz.
%
%The latest updates can be retrieved from
%  http://www.mathworks.com/matlabcentral/fileexchange/22022-matlab2tikz-matlab2tikz
%where you can also make suggestions and rate matlab2tikz.
%
\begin{tikzpicture}[scale = 1]
\centering
\begin{axis}[%
width=7.5in,
height=2.5in,
at={(0in,0in)},
scale only axis,
point meta min=1,
point meta max=384,
axis on top,
xmin=0.5,
xmax=192.5,
y dir=reverse,
ymin=0.5,
ymax=64.5,
hide axis
]
\addplot [forget plot] graphics [xmin=0.5,xmax=192.5,ymin=0.5,ymax=64.5] {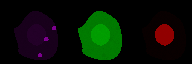};
\node[text=white, draw=none] at (rel axis cs:0.15,0.9) {\fontsize{18}{1}\selectfont \textbf{(a) CS, CI-FTI}}; 
\node[text=white, draw=none] at (rel axis cs:0.45,0.9) {\fontsize{18}{1}\selectfont \textbf{Con. Exp.}}; 
\node[text=white, draw=none] at (rel axis cs:0.85,0.9) {\fontsize{18}{1}\selectfont \textbf{RSNR = 12.1 dB}}; 
\filldraw[color = white, white, very thick] (32,320)rectangle (34,340);
\filldraw[color = white, white, very thick] (96,320)rectangle (98,340);
\filldraw[color = white, white, very thick] (160,320)rectangle (162,340);
\node[align=center, text=white, draw=none] (a) at (60,330) {\fontsize{18}{1}\selectfont \textcolor{white}{Fig. \ref{fig:sim2_res3}(a)}};
\node (a1) at (32,330) {};\draw[color = white,->,very thick] (a1) -- (a);
\node[above, align=center, text=white, draw=none] (b) at (32,320) {\fontsize{18}{1}\selectfont (32,32)};
\end{axis}
\end{tikzpicture}%}
  \end{minipage}
  \begin{minipage}{0.49\linewidth}
    \centering
    \scalebox{0.4}{% This file was created by matlab2tikz.
%
%The latest updates can be retrieved from
%  http://www.mathworks.com/matlabcentral/fileexchange/22022-matlab2tikz-matlab2tikz
%where you can also make suggestions and rate matlab2tikz.
%
\begin{tikzpicture}[scale = 1]
\centering
\begin{axis}[%
width=7.5in,
height=2.5in,
at={(0in,0in)},
scale only axis,
point meta min=1,
point meta max=384,
axis on top,
xmin=0.5,
xmax=192.5,
y dir=reverse,
ymin=0.5,
ymax=64.5,
hide axis
]
\addplot [forget plot] graphics [xmin=0.5,xmax=192.5,ymin=0.5,ymax=64.5] {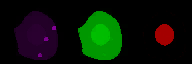};
\node[text=white, draw=none] at (rel axis cs:0.15,0.9) {\fontsize{18}{1}\selectfont \textbf{(d) CS, SI-FTI}}; 
\node[text=white, draw=none] at (rel axis cs:0.45,0.9) {\fontsize{18}{1}\selectfont \textbf{Con. Exp.}}; 
\node[text=white, draw=none] at (rel axis cs:0.85,0.9) {\fontsize{18}{1}\selectfont \textbf{RSNR = 24.8 dB}}; 
\filldraw[color = white, white, very thick] (32,320)rectangle (34,340);
\filldraw[color = white, white, very thick] (96,320)rectangle (98,340);
\filldraw[color = white, white, very thick] (160,320)rectangle (162,340);
\node[align=center, text=white, draw=none] (a) at (60,330) {\fontsize{18}{1}\selectfont \textcolor{white}{Fig. \ref{fig:sim2_res3}(b)}};
\node (a1) at (32,330) {};\draw[color = white,->,very thick] (a1) -- (a);
\node[above, align=center, text=white, draw=none] (b) at (32,320) {\fontsize{18}{1}\selectfont (32,32)};
\end{axis}
\end{tikzpicture}%}
  \end{minipage}
  \begin{minipage}{0.49\linewidth}
    \centering
    \scalebox{0.4}{% This file was created by matlab2tikz.
%
%The latest updates can be retrieved from
%  http://www.mathworks.com/matlabcentral/fileexchange/22022-matlab2tikz-matlab2tikz
%where you can also make suggestions and rate matlab2tikz.
%
\begin{tikzpicture}[scale = 1]
\centering
\begin{axis}[%
width=7.5in,
height=2.5in,
at={(0in,0in)},
scale only axis,
point meta min=1,
point meta max=384,
axis on top,
xmin=0.5,
xmax=192.5,
y dir=reverse,
ymin=0.5,
ymax=64.5,
hide axis
]
\addplot [forget plot] graphics [xmin=0.5,xmax=192.5,ymin=0.5,ymax=64.5] {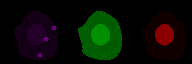};
\node[text=white, draw=none] at (rel axis cs:0.15,0.9) {\fontsize{18}{1}\selectfont \textbf{(b) CS, CI-FTI}}; 
\node[text=white, draw=none] at (rel axis cs:0.45,0.9) {\fontsize{18}{1}\selectfont \textbf{Uncon. Exp.}}; 
\node[text=white, draw=none] at (rel axis cs:0.85,0.9) {\fontsize{18}{1}\selectfont \textbf{RSNR = 9.11 dB}}; 
\filldraw[color = white, white, very thick] (32,320)rectangle (34,340);
\filldraw[color = white, white, very thick] (96,320)rectangle (98,340);
\filldraw[color = white, white, very thick] (160,320)rectangle (162,340);
\end{axis}
\end{tikzpicture}%}
  \end{minipage}
  \begin{minipage}{0.49\linewidth}
    \centering
    \scalebox{0.4}{% This file was created by matlab2tikz.
%
%The latest updates can be retrieved from
%  http://www.mathworks.com/matlabcentral/fileexchange/22022-matlab2tikz-matlab2tikz
%where you can also make suggestions and rate matlab2tikz.
%
\begin{tikzpicture}[scale = 1]
\centering
\begin{axis}[%
width=7.5in,
height=2.5in,
at={(0in,0in)},
scale only axis,
point meta min=1,
point meta max=384,
axis on top,
xmin=0.5,
xmax=192.5,
y dir=reverse,
ymin=0.5,
ymax=64.5,
hide axis
]
\addplot [forget plot] graphics [xmin=0.5,xmax=192.5,ymin=0.5,ymax=64.5] {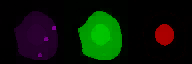};
\node[text=white, draw=none] at (rel axis cs:0.15,0.9) {\fontsize{18}{1}\selectfont \textbf{(e) CS, SI-FTI}}; 
\node[text=white, draw=none] at (rel axis cs:0.45,0.9) {\fontsize{18}{1}\selectfont \textbf{Uncon. Exp.}}; 
\node[text=white, draw=none] at (rel axis cs:0.85,0.9) {\fontsize{18}{1}\selectfont \textbf{RSNR = 20.5 dB}}; 
\filldraw[color = white, white, very thick] (32,320)rectangle (34,340);
\filldraw[color = white, white, very thick] (96,320)rectangle (98,340);
\filldraw[color = white, white, very thick] (160,320)rectangle (162,340);
\end{axis}
\end{tikzpicture}%}
  \end{minipage}
  \begin{minipage}{0.49\linewidth}
    \centering
    \scalebox{0.4}{% This file was created by matlab2tikz.
%
%The latest updates can be retrieved from
%  http://www.mathworks.com/matlabcentral/fileexchange/22022-matlab2tikz-matlab2tikz
%where you can also make suggestions and rate matlab2tikz.
%
\begin{tikzpicture}[scale = 1]
\centering
\begin{axis}[%
width=7.5in,
height=2.5in,
at={(0in,0in)},
scale only axis,
point meta min=1,
point meta max=384,
axis on top,
xmin=0.5,
xmax=192.5,
y dir=reverse,
ymin=0.5,
ymax=64.5,
hide axis
]
\addplot [forget plot] graphics [xmin=0.5,xmax=192.5,ymin=0.5,ymax=64.5] {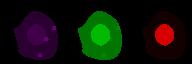};
\node[text=white, draw=none] at (rel axis cs:0.15,0.9) {\fontsize{18}{1}\selectfont \textbf{(c) ME, CI-FTI}}; 
\node[text=white, draw=none] at (rel axis cs:0.45,0.9) {\fontsize{18}{1}\selectfont \textbf{Con. Exp.~~~~}}; 
\node[text=white, draw=none] at (rel axis cs:0.85,0.9) {\fontsize{18}{1}\selectfont \textbf{RSNR = 4.61 dB}}; 
\filldraw[color = white, white, very thick] (32,320)rectangle (34,340);
\filldraw[color = white, white, very thick] (96,320)rectangle (98,340);
\filldraw[color = white, white, very thick] (160,320)rectangle (162,340);
\end{axis}
\end{tikzpicture}%}
  \end{minipage}
  \begin{minipage}{0.49\linewidth}
    \centering
    \scalebox{0.4}{% This file was created by matlab2tikz.
%
%The latest updates can be retrieved from
%  http://www.mathworks.com/matlabcentral/fileexchange/22022-matlab2tikz-matlab2tikz
%where you can also make suggestions and rate matlab2tikz.
%
\begin{tikzpicture}[scale = 1]
\centering
\begin{axis}[%
width=7.5in,
height=2.5in,
at={(0in,0in)},
scale only axis,
point meta min=1,
point meta max=384,
axis on top,
xmin=0.5,
xmax=192.5,
y dir=reverse,
ymin=0.5,
ymax=64.5,
hide axis
]
\addplot [forget plot] graphics [xmin=0.5,xmax=192.5,ymin=0.5,ymax=64.5] {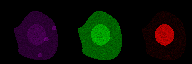};
\node[text=white, draw=none] at (rel axis cs:0.15,0.9) {\fontsize{18}{1}\selectfont \textbf{(f) ME, SI-FTI}}; 
\node[text=white, draw=none] at (rel axis cs:0.45,0.9) {\fontsize{18}{1}\selectfont \textbf{Con. Exp.~~~~}}; 
\node[text=white, draw=none] at (rel axis cs:0.85,0.9) {\fontsize{18}{1}\selectfont \textbf{RSNR = 5.05 dB}}; 
\filldraw[color = white, white, very thick] (32,320)rectangle (34,340);
\filldraw[color = white, white, very thick] (96,320)rectangle (98,340);
\filldraw[color = white, white, very thick] (160,320)rectangle (162,340);
\end{axis}
\end{tikzpicture}%}
  \end{minipage}
  \caption{An example of the reconstructed HS volumes from 20\,\% of the total measurements (or light exposure). The spectral content at the spatial location indicated by a white square is shown in Fig.~\ref{fig:sim2_res3}.}
  \label{fig:sim2_res2}
\end{figure}
%%%%%%%%%%%%%%%%%%% Simulation 2: result 3 %%%%%%%%%%%%%%%%%%%%%%%%%%%
\begin{figure}[tbh]
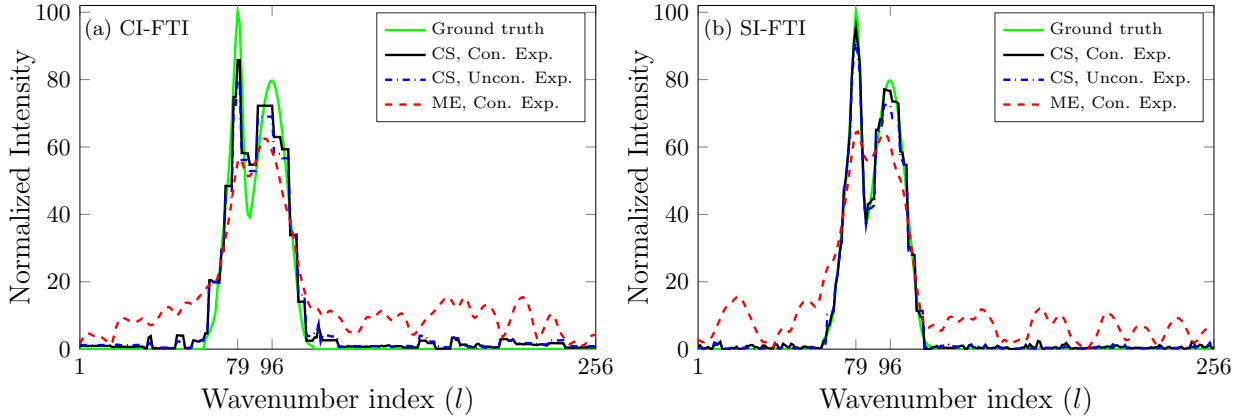

  \begin{minipage}{0.49\linewidth}
    \centering
    \scalebox{0.9}{\input{Figures/Sim2_res3_CIFTI.tex}}
  \end{minipage}
  \begin{minipage}{0.49\linewidth}
    \centering
    \scalebox{0.9}{\input{Figures/Sim2_res3_SIFTI.tex}}
  \end{minipage}
  \caption{The spectrum of the reconstructed HS volumes in Fig.~\ref{fig:sim2_res2} at spatial location $(q_1,q_2)=(32,32)$.}
  \label{fig:sim2_res3}
\end{figure}
%%%%%%%% 

Fig.~\ref{fig:sim2_res2} illustrates three spatial maps associated with the wavenumber indices $\nell \in \{72, 79, 96\}$ of the reconstructed HS volumes. This figure is one instance of the Monte-Carlo trials in the experiment of Fig.~\ref{fig:sim2_res1}, for $M/\Nhs = 0.2$. It can be seen that, for both systems, the HS volume recovered with CS-based formulation preserves the spatial configuration of the specimen, although the RSNR for CI-FTI may not be satisfactory. However, as mentioned in Sec.~\ref{sec:intro}, the main motivation for studying FTI is to acquire HS volumes with high spectral resolution, \ie limited by the life-time of the fluorochromes. Note that the spatial resolution is determined by physical characteristics of the 2D imaging sensor in Fig.~\ref{fig:FTI_scheme}. In order to illustrate the ability of the proposed approaches in preserving the spectral information, Fig.~\ref{fig:sim2_res3} also compares the spectral content  of the reconstructed HS volumes at the center spatial location indicated by a white square. The artifacts in the solution of ME problem are obvious and they could be sufficiently prominent for causing false detections in the decompositions of a biological HS volume into its spectral constituents. Despite minor disturbances, the CS-based solution in CI-FTI successfully follows the ground truth spectrum. However, the solution of SI-FTI almost perfectly matches the ground truth signal. 
%%%%%%%%%%%%%%%%%%%%%%%%%%%%%%%%%%%%%%%%%%%%%%%%%%%%%%% 
\subsection{Constrained-exposure simulations from actual experimental data}\label{subsec:Simulated experimental data} 
To prove the concept of CI-FTI, we have simulated a CI-FTI setup in a constrained-exposure context by subsampling the data recorded by an actual (Nyquist) FTI acquisition. We have scanned a thin layer of a biologic cell (\ie Convallaria, lily of the valley, cross section of rhizome with concentric vascular bundles) stained on a glass slide, using a confocal microscope (with a 40$\times$ objective) equipped with an FTI and a LUXEON light source~\cite{LUXEON}, \ie the \textit{HYPE} device used in~\cite{Leonard2015}. By changing the current level of the light source from $I_0=100$ mA to $I_1=700$ mA (with $\Delta I =25$ mA) we acquired 25 sets of Nyquist FTI measurements of size $(N_\xi,\bar N_{\rm p},\bar N_{\rm p}) = (1024,128,128)$, with ${\rm OPD}_{\rm max}=2.537~\mu {\rm m}$, \ie each set corresponds to one light intensity level\footnote{The light source specifications~\cite{LUXEON} specifies that the relative luminous flux is proportional to the forward current.}. This exhaustive acquisition enables us to simulate constrained-exposure compressive sensing-FTI (see below). 

Concerning the level of $\sigma_\Nyq$ of the measurement noise, we have estimated it using an RM estimator~\cite{donoho1994ideal} applied on each of the 25 HS data cubes. We observed that, over $[I_0, I_1]$ with $I_1/I_0 =7$, $\sigma_\Nyq \in [0.1408,0.2272]$, \ie a linear regression provides $\sigma_\Nyq(I) \approx a (I/100) + b$ with $a=1.44~10^{-2}$ and $b=1.26~10^{-1}$. This limited variation of $\sigma$ when $I$ varies confirms that the dependence of the Nyquist noise on the light intensity can be neglected (see Sec.~\ref{sec:noise} and Sec.~\ref{sec:Sec_05}).
\medskip

The simulated CI-FTI sensing model of this section contains one crucial difference compared to the subsampling procedure studied in Sec.~\ref{sec:Sec_03} and tested in the two previous sections. Since the subsampling set $\Omega^\xi$ contains possible repetition of the same OPD indices (\ie according to the pmf \eqref{eq:pmf-ci-FTI} that is used to draw $M_\xi$ \iid OPD indices with repetition), a restriction of the (noisy) Nyquist FTI observations above to $\Omega^\xi$ at each pixel location copies the same realizations of the noise on repeated indices. This differs from the model~\eqref{eq:cifti acquisition} where the noise samples vary on possible multiple instances of the same OPD index. 

Consequently, rather than artificially copying the same observations for each repeated index in $\Omega^\xi$, we have adopted the following more realistic experimental sensing scenario. For a given $M_\xi$, the subsampling multiset $\Omega^\xi$ is randomly generated according to the pmf \eqref{eq:pmf-ci-FTI}. We then construct a set $\bar{\Omega}^\xi:=\{\bar{\Omega}_1,\cdots,\bar{\Omega}_{M_\textsl{eff}}\}$ with cardinality $M_\textsl{eff} < M_\xi$ which is a replica of $\Omega^\xi$ with only unique indices\footnote{For $l\in\range{N_\xi}$, define the Bernoulli random variable $X_l$  being equal to 1 if the $l^{\rm th}$ OPD element is picked at least once over $M_\xi$ trials, and zero otherwise, then $\bb E X_l = \bb P(X_l = 1) = 1-(1-p(l))^{M_\xi}$ and $\bb E M_{\rm eff} = \sum_l 1-(1-p(l))^{M_\xi}$, since $M_{\rm eff} = \sum_l X_l$.}. Finally, CI-FTI measurements are built by restricting the Nyquist FTI measurements $\bs y^\Nyq$ to the subset~$\bar{\Omega}^\xi$. 

By following this procedure, the fidelity term in~\eqref{eq:cifti sub reconstruction} must be adapted; the weighting matrix $\bs D^\xi$ must stay constant across its diagonal elements. Indeed, we can easily show that each index~$\nell$ presents in $\Omega^\xi$ is repeated according to a binomial \rv $\omega_\nell$ of $M_\xi$ trials and success probability $p(\nell)$, with $p$ the pmf defined in \eqref{eq:pmf-ci-FTI}. Therefore, $\bb E \omega_\nell = M_\xi p(\nell)$, which is exactly the inverse of each entry $\tinv{M_\xi} d^2_{kk} = \frac{1}{M_\xi p(\Omega_k)}$ of $\inv{M_\xi} (\bs D^\xi)^2$ for which $\Omega_k = l$. In other words, in expectation, \emph{the matrix $\bs D^\xi$ accounts for a normalization of the multiplicity of each index present in $\Omega^\xi$}. While a careful mathematical analysis of this effect is postponed to a future work, we conclude that, for each optimization vector $\bs u_j$ and each $j \in \range{N_{\rm p}}$, $\|\bs I^*_{\bar{\Omega}^\xi} (\bs y^\Nyq_j - \bs F^* \bs u_j)\|^2 \approx \tinv{M_\xi} \|\bs D^\xi \bs I^*_{\Omega^\xi} (\bs y^\Nyq_j - \bs F^* \bs u_j)\|^2 \leq \varepsilon_\CI^2$ is an appropriate fidelity term in~\eqref{eq:cifti sub reconstruction} in the replacement of $\Omega^\xi$ by~$\bar \Omega^\xi$.
\medskip

Back to our simulation setup, from the recorded Nyquist FTI measurements, a CI-FTI system with constrained-exposure budget is simulated as follows. Given a reference current level of the light source, \eg $I_{\rm ref}=100$ mA as for the black curve in Fig.~\ref{fig:sim3_res1}, without loss of generality, we assume\footnote{Remind that the flux of the light source is proportional to its current level.} $I_\textsl{opd} = I_\textsl{ref}/\tau_\xi$ and thus a constrained-exposure budget is fixed as $I_\textsl{tot} = I_{\rm ref} N_\xi \le I_\textsl{max}$. Therefore, for every other current levels, $I({\rm mA})$, we are allowed to set $M_\xi(I,I_{\rm ref}) = \frac{I_{\rm ref}}{I} N_\xi$, \eg
$M_\xi(700,100)/N_\xi = 14.28\,\%$. Besides, in view of the above-mentioned sampling approach, considering $M_\textsl{eff} \le M$ unique subsampled indices does not violate the requirements of the constrained-exposure budget scenario. Hereafter, we report effective measurement ratio ($M_\textsl{eff}/N_\xi$) as it will be the actual exposure reduction ratio.

Since the Nyquist FTI measurements recorded at $I({\rm mA})=700$ mA has the highest MNR (see \eqref{eq:MNR-def}), the reconstructed HS volume\footnote{To do so, in \eqref{eq:cifti sub reconstruction}, we set $\Omega^\xi = \range{N_\xi}, \bs D^\xi = \sqrt{N_\xi}\bs I_{N_\xi}$, and $M_\xi=N_\xi$.} from this data is taken as the ground truth volume, \eg for the purpose of computing RSNR. For the sake of obtaining fair RSNR values, all the spectra are normalized with respect to their $\ell_2$-norm, meaning that in RSNR formula we replace $\bs x = [\bs x^\top_1,\cdots,\bs x^\top_{N_{\rm p}}]^\top$ with $\bs x^\circ := [\bs x^\top_1/\|\bs x_1\|,\cdots,\bs x^\top_{N_{\rm p}}/\|\bs x_{N_{\rm p}}\|]^\top$ and replace $\hat{\bs x} = [\hat{\bs x}^\top_1,\cdots,\hat{\bs x}^\top_{N_{\rm p}}]^\top$ with $\hat{\bs x}^\circ := [\hat{\bs x}^\top_1/\|\hat{\bs x}_1\|,\cdots,\hat{\bs x}^\top_{N_{\rm p}}/\|\hat{\bs x}_{N_{\rm p}}\|]^\top$. This allows for the computation of the RSNR independently of the light intensity --- the intensity of the reconstructed spectrum being proportional to the light intensity --- and it reduces the effect of various data corruptions, \eg saturated pixels, light diffraction due to the biological specimen, or motion artifact during the experiment.

%%%%%%%%%%%%%%%%%%% Simulation 3: result 1 and 3 %%%%%%%%%%%%%%%%%%%%%%%%%%%
\begin{figure}[tb]
  \null\hfill\begin{minipage}[t]{0.47\linewidth}
    \centering
    % This file was created by matlab2tikz.
%
%The latest updates can be retrieved from
%  http://www.mathworks.com/matlabcentral/fileexchange/22022-matlab2tikz-matlab2tikz
%where you can also make suggestions and rate matlab2tikz.
%
\definecolor{mycolor1}{rgb}{0.00000,0.44700,0.74100}%
\definecolor{mycolor2}{rgb}{1.00000,0.00000,1.00000}%
\begin{tikzpicture}[scale = 0.8]
\begin{axis}[%
width=3in,
height=2in,
at={(0.762in,0in)},
scale only axis,
xmin=0.05,
xmax=0.4,
xlabel={$\text{Effective measurement ratio}~(M_\textsl{eff}/N_\xi)$},
xtick={0.1,0.2,0.3,0.4},
xticklabels={0.1,0.2,0.3,0.4},
xlabel style={at = {(0.5,0.02)},font=\fontsize{12}{1}\selectfont},
ticklabel style={font=\fontsize{10}{1}\selectfont},
ymin=7,
ymax=15,
ytick={7,8,9,10,11,12,13,14,15},
yticklabels={7,8,9,10,11,12,13,14,15},
ylabel={RSNR (dB)},
ylabel style={at = {(0.04,0.5)},font=\fontsize{12}{1}\selectfont},
axis background/.style={fill=white},
legend style={at={(0.2,0.35)},anchor=north west,legend cell align=left,align=left,draw=white!15!black,font=\fontsize{11}{1}\selectfont}
]
\addplot [color=black,line width=0.5pt,only marks,mark=o,mark options={solid}]
  table[row sep=crcr]{%
0.3369140625	7.86972922966686\\
0.291015625	10.1447730149467\\
0.2705078125	11.1385333901642\\
0.2392578125	11.7595070242504\\
0.2099609375	12.2093109924338\\
0.193359375	12.965326729165\\
0.189453125	12.4746587787917\\
0.1689453125	13.2168100711228\\
0.154296875	13.3217838505492\\
0.1572265625	13.3369831806143\\
0.1474609375	13.4506739252724\\
0.138671875	13.3734920835185\\
0.123046875	13.3429051591467\\
0.1044921875	13.0854716590969\\
0.1171875	13.4359634928532\\
0.1201171875	13.6886829835256\\
0.12109375	12.8822738264134\\
0.1044921875	13.6537843492127\\
0.107421875	13.312922704166\\
0.091796875	12.6889583132442\\
0.09765625	13.1203889576301\\
0.09375	13.4994883634587\\
0.0947265625	13.1341071112126\\
0.0849609375	12.8460462932671\\
0.0888671875	11.956\\
0.3349609375	7.93253468194773\\
0.2880859375	10.1338473608626\\
0.265625	11.1639733263069\\
0.2470703125	11.7018541115881\\
0.2158203125	12.3266260672327\\
0.197265625	12.7434204562867\\
0.1630859375	12.4524682557328\\
0.15234375	12.9653636059724\\
0.1708984375	13.4085459708775\\
0.1552734375	13.2723701306902\\
0.1533203125	13.2072863204855\\
0.150390625	13.5009063949929\\
0.1259765625	13.1240117970319\\
0.12890625	13.4369163862956\\
0.1298828125	13.4719002515492\\
0.1220703125	13.4225947052765\\
0.1044921875	13.0886340694984\\
0.109375	13.655188277225\\
0.1103515625	13.6696332279027\\
0.1044921875	13.4101079229551\\
0.09765625	13.410694540719\\
0.1015625	13.5755372076125\\
0.09375	13.302594925488\\
0.0888671875	13.2650452506815\\
0.0810546875	12.9332409369664\\
};
\addlegendentry{\ $I_\text{ref}$ = 100 mA};

\addplot [color=blue,line width=0.5pt,only marks,mark=+,mark options={solid}]
  table[row sep=crcr]{%
0.3271484375	12.2811971657973\\
0.3154296875	13.0568733782696\\
0.3017578125	12.5498140777105\\
0.2666015625	13.6228317536017\\
0.267578125	13.9466828189503\\
0.2509765625	14.0341925747196\\
0.244140625	14.1007082611159\\
0.2138671875	13.9865708546928\\
0.208984375	14.2525572292072\\
0.21484375	14.3331473621376\\
0.193359375	14.4088831841383\\
0.185546875	14.2137052730312\\
0.1875	14.3881254016406\\
0.1787109375	14.1843905450534\\
0.1484375	14.0976156739064\\
0.169921875	14.5753562660512\\
0.1640625	14.4371402879874\\
0.158203125	14.4594805695597\\
0.1533203125	14.2870971634491\\
0.1484375	14.0763383486134\\
0.1484375	14.4475050217045\\
0.3544921875	12.2902388853537\\
0.322265625	12.8871458008573\\
0.2900390625	12.6456811155148\\
0.2744140625	13.6970598425434\\
0.255859375	13.6880634196732\\
0.2587890625	14.0332658047914\\
0.2412109375	13.9063832737883\\
0.2275390625	14.1740077532809\\
0.212890625	14.1924958802397\\
0.1904296875	14.1509013691941\\
0.189453125	14.335227241462\\
0.197265625	14.4156891583857\\
0.193359375	14.4525845443376\\
0.189453125	14.554460452798\\
0.1728515625	14.3164957780285\\
0.1787109375	14.4171656699666\\
0.1474609375	13.9611968690325\\
0.1630859375	14.6875833257798\\
0.1533203125	14.1795519098173\\
0.1572265625	14.3839819169756\\
0.140625	14.5360873413881\\
};
\addlegendentry{\ $I_\text{ref}$ = 200 mA};

\end{axis}
%\node (cc1) at (2.1,0.27) {};\node (cc2) at (2.1,2.7) {};\draw[|-|] (cc1) -- (cc2);
%\node[align=left, text=black, draw=none]  at (2.5,1.47) {\fontsize{5}{1}\selectfont $\approx\text{6 dB}$};
\draw[red,line width=.5pt,anchor=center] (2.7,3.7) node [left] {\fontsize{5}{1}\selectfont P\textsuperscript{1}} ellipse (4pt and 8pt) ;
\draw[red,line width=.5pt,anchor=center] (8.17,0.6) node [right] {\fontsize{5}{1}\selectfont P\textsuperscript{2}} ellipse (4pt and 8pt) ;
\draw[red,line width=.5pt,anchor=center] (4.1,4.7) node [left] {\fontsize{5}{1}\selectfont P\textsuperscript{3}} ellipse (4pt and 8pt) ;
\draw[red,line width=.5pt,anchor=center] (8.56,3.35) node [right] {\fontsize{5}{1}\selectfont P\textsuperscript{4}} ellipse (4pt and 8pt) ;

\end{tikzpicture}%

%%% Local Variables:
%%% mode: latex
%%% TeX-master: "../VDSCFTI"
%%% End:
    \caption{The reconstruction quality of CI-FTI system with constrained-exposure budget. For each curve, the (relative) light exposure budget is constrained to $I_{\rm ref}N_\xi$; and thus $M_\xi(I,I_{\rm ref})/N_\xi = I_{\rm ref}/I$ for $I \in \{I_{\rm ref},I_{\rm ref}+25,\cdots,700\}$ mA. Subsampling the OPD axis at higher light intensity results always in superior reconstruction.}
    \label{fig:sim3_res1}
  \end{minipage}
  \hfill
  \begin{minipage}[t]{0.47\linewidth}
    \centering
    \input{Figures/Sim3_res3_nu_afternormalization.tex}
    \caption{The spectra of the reconstructed HS volumes, associated with four marked instances in Fig.~\ref{fig:sim3_res1}, at the center spatial location. Performing the proposed VDS strategy, even at the maximum compression ratio, significantly improves the quality of the reconstructed spectrum, especially in terms of denoising.}
    \label{fig:sim3_res3}
  \end{minipage}
  \hfill
  \null
\end{figure}
%%%%%%%%%%%%%%%%%%% Simulation 3: result 2 %%%%%%%%%%%%%%%%%%%%%%%%%%%
\begin{figure}[tb]
  \begin{minipage}{1.01\linewidth}
    \begin{minipage}{0.19\linewidth}
      \centering
      % This file was created by matlab2tikz.
%
%The latest updates can be retrieved from
%  http://www.mathworks.com/matlabcentral/fileexchange/22022-matlab2tikz-matlab2tikz
%where you can also make suggestions and rate matlab2tikz.
%
\begin{tikzpicture}[scale = 0.8]

\begin{axis}[%
width=1.25in,
height=1.25in,
at={(0.729in,0in)},
scale only axis,
axis on top,
xmin=0.5,
xmax=128.5,
y dir=reverse,
ymin=0.5,
ymax=128.5,
hide axis
]
\addplot [forget plot] graphics [xmin=0.5,xmax=128.5,ymin=0.5,ymax=128.5] {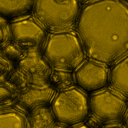};
\node[fill = none,text=white, draw=none,anchor = north west] at (rel axis cs:0,1.01) {\fontsize{6}{1}\selectfont \textbf{(a) 700 mA, 100 \%}}; 
\node[fill = none,text=white, draw=none,anchor = north west] at (rel axis cs:0,0.9)  {\fontsize{6}{1}\selectfont \textbf{Ground truth}}; 
\filldraw[color = white, fill = none, thick] (94.5,32.5)rectangle (126.5,64.5);
\end{axis}
\end{tikzpicture}%
    \end{minipage}
    \begin{minipage}{0.19\linewidth}
      \centering
      % This file was created by matlab2tikz.
%
%The latest updates can be retrieved from
%  http://www.mathworks.com/matlabcentral/fileexchange/22022-matlab2tikz-matlab2tikz
%where you can also make suggestions and rate matlab2tikz.
%
\begin{tikzpicture}[scale = 0.8]

\begin{axis}[%
width=1.25in,
height=1.25in,
at={(0.729in,0in)},
scale only axis,
axis on top,
xmin=0.5,
xmax=128.5,
y dir=reverse,
ymin=0.5,
ymax=128.5,
hide axis
]
\addplot [forget plot] graphics [xmin=0.5,xmax=128.5,ymin=0.5,ymax=128.5] {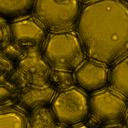};
\node[fill = none,text=white, draw=none,anchor = north west] at (rel axis cs:0,1.02) {\fontsize{6}{1}\selectfont \textbf{(b) P\textsuperscript{1}: 700 mA, 8 \%}}; 
\node[fill = none,text=white, draw=none,anchor = north west] at (rel axis cs:0,0.9)  {\fontsize{6}{1}\selectfont \textbf{RSNR = 12.93 dB}}; 
\filldraw[color = white, fill = none, thick] (94.5,32.5)rectangle (126.5,64.5);
\end{axis}
\end{tikzpicture}%
    \end{minipage}
    \begin{minipage}{0.19\linewidth}
      \centering
      % This file was created by matlab2tikz.
%
%The latest updates can be retrieved from
%  http://www.mathworks.com/matlabcentral/fileexchange/22022-matlab2tikz-matlab2tikz
%where you can also make suggestions and rate matlab2tikz.
%
\begin{tikzpicture}[scale = 0.8]

\begin{axis}[%
width=1.25in,
height=1.25in,
at={(0.729in,0in)},
scale only axis,
axis on top,
xmin=0.5,
xmax=128.5,
y dir=reverse,
ymin=0.5,
ymax=128.5,
hide axis
]
\addplot [forget plot] graphics [xmin=0.5,xmax=128.5,ymin=0.5,ymax=128.5] {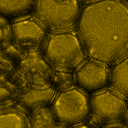};
\node[fill = none,text=white, draw=none,anchor = north west] at (rel axis cs:0,1.02) {\fontsize{6}{1}\selectfont \textbf{(c) P\textsuperscript{2}: 100 mA, 32 \%}}; 
\node[fill = none,text=white, draw=none,anchor = north west] at (rel axis cs:0,0.9) {\fontsize{6}{1}\selectfont \textbf{RSNR = 7.86 dB}};
\filldraw[color = white, fill = none, thick] (94.5,32.5)rectangle (126.5,64.5); 
\end{axis}
\end{tikzpicture}%
    \end{minipage}
    \begin{minipage}{0.19\linewidth}
      \centering
      % This file was created by matlab2tikz.
%
%The latest updates can be retrieved from
%  http://www.mathworks.com/matlabcentral/fileexchange/22022-matlab2tikz-matlab2tikz
%where you can also make suggestions and rate matlab2tikz.
%
\begin{tikzpicture}[scale = 0.8]

\begin{axis}[%
width=1.25in,
height=1.25in,
at={(0.729in,0in)},
scale only axis,
axis on top,
xmin=0.5,
xmax=128.5,
y dir=reverse,
ymin=0.5,
ymax=128.5,
hide axis
]
\addplot [forget plot] graphics [xmin=0.5,xmax=128.5,ymin=0.5,ymax=128.5] {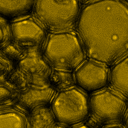};
\node[fill = none,text=white, draw=none,anchor = north west] at (rel axis cs:0,1.02) {\fontsize{6}{1}\selectfont \textbf{(d) P\textsuperscript{3}: 700 mA, 14 \%}}; 
\node[fill = none,text=white, draw=none,anchor = north west] at (rel axis cs:0,0.9)  {\fontsize{6}{1}\selectfont \textbf{RSNR = 14.53 dB}}; 
\filldraw[color = white, fill = none, thick] (94.5,32.5)rectangle (126.5,64.5);
\end{axis}
\end{tikzpicture}%
    \end{minipage}
    \begin{minipage}{0.19\linewidth}
      \centering
      % This file was created by matlab2tikz.
%
%The latest updates can be retrieved from
%  http://www.mathworks.com/matlabcentral/fileexchange/22022-matlab2tikz-matlab2tikz
%where you can also make suggestions and rate matlab2tikz.
%
\begin{tikzpicture}[scale = 0.8]

\begin{axis}[%
width=1.25in,
height=1.25in,
at={(0.729in,0in)},
scale only axis,
axis on top,
xmin=0.5,
xmax=128.5,
y dir=reverse,
ymin=0.5,
ymax=128.5,
hide axis
]
\addplot [forget plot] graphics [xmin=0.5,xmax=128.5,ymin=0.5,ymax=128.5] {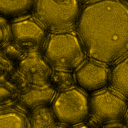};
\node[fill = none,text=white, draw=none,anchor = north west] at (rel axis cs:0,1.02) {\fontsize{6}{1}\selectfont \textbf{(e) P\textsuperscript{4}: 200 mA, 35 \%}}; 
\node[fill = none,text=white, draw=none,anchor = north west] at (rel axis cs:0,0.9)  {\fontsize{6}{1}\selectfont \textbf{RSNR = 12.30 dB}}; 
\filldraw[color = white, fill = none, thick] (94.5,32.5)rectangle (126.5,64.5);
\end{axis}
\end{tikzpicture}%
    \end{minipage}
  \end{minipage}
  %%%%%%% 
  \begin{minipage}{1.01\linewidth}
    \begin{minipage}{0.19\linewidth}
      \centering
      \vspace{1mm}
      % This file was created by matlab2tikz.
%
%The latest updates can be retrieved from
%  http://www.mathworks.com/matlabcentral/fileexchange/22022-matlab2tikz-matlab2tikz
%where you can also make suggestions and rate matlab2tikz.
%
\begin{tikzpicture}[scale = 0.8]

\begin{axis}[%
width=1.25in,
height=1.25in,
at={(0.729in,0in)},
scale only axis,
axis on top,
xmin=96.5,
xmax=128.5,
y dir=reverse,
ymin=32.5,
ymax=64.5,
hide axis
]
\addplot [forget plot] graphics [xmin=0.5,xmax=128.5,ymin=0.5,ymax=128.5] {Sim3_res2_gt-1.png};
\node[fill = none,text=white, draw=none,anchor = north west] at (rel axis cs:0,1.01) {\fontsize{6}{1}\selectfont \textbf{4x (a)}}; 
\end{axis}
\end{tikzpicture}%
    \end{minipage}
    \begin{minipage}{0.19\linewidth}
      \centering
      \vspace{1mm}
      % This file was created by matlab2tikz.
%
%The latest updates can be retrieved from
%  http://www.mathworks.com/matlabcentral/fileexchange/22022-matlab2tikz-matlab2tikz
%where you can also make suggestions and rate matlab2tikz.
%
\begin{tikzpicture}[scale = 0.8]

\begin{axis}[%
width=1.25in,
height=1.25in,
at={(0.729in,0in)},
scale only axis,
axis on top,
xmin=96.5,
xmax=128.5,
y dir=reverse,
ymin=32.5,
ymax=64.5,
hide axis
]
\addplot [forget plot] graphics [xmin=0.5,xmax=128.5,ymin=0.5,ymax=128.5] {Sim3_res2_P1_700mA_8Percent-1.png};
\node[fill = none,text=white, draw=none,anchor = north west] at (rel axis cs:0,1.01) {\fontsize{6}{1}\selectfont \textbf{4x (b)}}; 
\end{axis}
\end{tikzpicture}%
    \end{minipage}
    \begin{minipage}{0.19\linewidth}
      \centering
      \vspace{1mm}
      % This file was created by matlab2tikz.
%
%The latest updates can be retrieved from
%  http://www.mathworks.com/matlabcentral/fileexchange/22022-matlab2tikz-matlab2tikz
%where you can also make suggestions and rate matlab2tikz.
%
\begin{tikzpicture}[scale = 0.8]

\begin{axis}[%
width=1.25in,
height=1.25in,
at={(0.729in,0in)},
scale only axis,
axis on top,
xmin=96.5,
xmax=128.5,
y dir=reverse,
ymin=32.5,
ymax=64.5,
hide axis
]
\addplot [forget plot] graphics [xmin=0.5,xmax=128.5,ymin=0.5,ymax=128.5] {Sim3_res2_P2_100mA_32Percent-1.png};
\node[fill = none,text=white, draw=none,anchor = north west] at (rel axis cs:0,1.01) {\fontsize{6}{1}\selectfont \textbf{4x (c)}}; 
\end{axis}
\end{tikzpicture}%
    \end{minipage}
    \begin{minipage}{0.19\linewidth}
      \centering
      \vspace{1mm}
      % This file was created by matlab2tikz.
%
%The latest updates can be retrieved from
%  http://www.mathworks.com/matlabcentral/fileexchange/22022-matlab2tikz-matlab2tikz
%where you can also make suggestions and rate matlab2tikz.
%
\begin{tikzpicture}[scale = 0.8]

\begin{axis}[%
width=1.25in,
height=1.25in,
at={(0.729in,0in)},
scale only axis,
axis on top,
xmin=96.5,
xmax=128.5,
y dir=reverse,
ymin=32.5,
ymax=64.5,
hide axis
]
\addplot [forget plot] graphics [xmin=0.5,xmax=128.5,ymin=0.5,ymax=128.5] {Sim3_res2_P3_700mA_14Percent-1.png};
\node[fill = none,text=white, draw=none,anchor = north west] at (rel axis cs:0,1.01) {\fontsize{6}{1}\selectfont \textbf{4x (d)}}; 
\end{axis}
\end{tikzpicture}%
    \end{minipage}
    \begin{minipage}{0.19\linewidth}
      \centering
      \vspace{1mm}
      % This file was created by matlab2tikz.
%
%The latest updates can be retrieved from
%  http://www.mathworks.com/matlabcentral/fileexchange/22022-matlab2tikz-matlab2tikz
%where you can also make suggestions and rate matlab2tikz.
%
\begin{tikzpicture}[scale = 0.8]

\begin{axis}[%
width=1.25in,
height=1.25in,
at={(0.729in,0in)},
scale only axis,
axis on top,
xmin=96.5,
xmax=128.5,
y dir=reverse,
ymin=32.5,
ymax=64.5,
hide axis
]
\addplot [forget plot] graphics [xmin=0.5,xmax=128.5,ymin=0.5,ymax=128.5] {Sim3_res2_P4_200mA_32Percent-1.png};
\node[fill = none,text=white, draw=none,anchor = north west] at (rel axis cs:0,1.01) {\fontsize{6}{1}\selectfont \textbf{4x (e)}}; 
\end{axis}
\end{tikzpicture}%
    \end{minipage}
  \end{minipage}
  \caption{The spatial maps of the reconstructed HS volumes at $\nell = 70$ (equivalent to 594\,nm wavelength).}
  \label{fig:sim3_res2}
\end{figure}
%%%%%%%%%%%%%%%
The RSNR values of CI-FTI system for two constrained-exposure budgets, \ie $I_\textsl{tot} \in \{100N_\xi, 200N_\xi\}$ mA, and for two random generations of $\Omega^\xi$ is illustrated in Fig.~\ref{fig:sim3_res1}. These results stress that for a fixed light exposure budget, an application of the proposed CI-FTI method always results in a superior quality of the HS volume reconstruction.  For instance, for $100N_\xi$ mA exposure budget, subsampling 8\,\% of the OPD axis, approximately yields a 5\,dB  gain in the RSNR, in comparison to subsampling 34\,\% of the OPD axis. The reconstructed HS volumes corresponding to four circled points in Fig.~\ref{fig:sim3_res1} are shown in Fig.~\ref{fig:sim3_res3} (for the spectra observed at the center spatial location) and Fig.~\ref{fig:sim3_res2} (for the spatial maps at wavenumber index 70, \ie 594\,nm, at which the spectra maximum occurs). In Fig.~\ref{fig:sim3_res3}, the noise amplitudes in the spectra recovered from P$^2$ (34\,\% of FTI measurements) and P$^{4}$ (35\%) are significantly larger than in the spectra of P$^{1}$ (8\,\%) and P$^3$ (14\%), respectively.  In Fig.~\ref{fig:sim3_res2}, as expected from the low light condition (\eg at 100 mA and 200 mA), the spatial map quality is significantly degraded due to noise (Fig.~\ref{fig:sim3_res2}-c and Fig.~\ref{fig:sim3_res2}-e). Lower subsampling rates allows for improved spatial map qualities (Fig.~\ref{fig:sim3_res2}-b and Fig.~\ref{fig:sim3_res2}-d) associated with an increased light intensity of 700 mA where the MNR peaks. Note that the parallel frontiers of the biological cells in Fig.~\ref{fig:sim3_res2} are an effect of the 3D specimen transparency (also observed in the panchromatic microscope).

We conclude this section by mentioning that an SI-FTI framework could have also been simulated from the recorded data, \ie by randomly subsampling spatiotemporally the volume of recorded (Nyquist) interferograms. However, such a simulation would be too ideal with respect to an actual SI-FTI implementation integrating a spatial light modulator (SLM), \eg a semi-transparent Liquid Crystal Display (LCD), Liquid Crystal on Silicon (LCoS) \cite{nagahara2010programmable}, or Digital Micro-mirror Devices (DMD), as used in the single-pixel camera \cite{duarte2008single}. While their inclusion in a compressive imaging procedure is often beneficial, these devices are also known to induce non-negligible light diffraction associated with the small pixel pitch of the SLMs's elements. We therefore postpone the analysis of an actual SI-FTI to future study where the impact of light diffraction (\eg modeled by a spatial convolution with a calibrated point-spread-function~\cite{degraux2018multispectral}) must be carefully integrated in the sensing model.

\section{Conclusion}                             
\label{sec:Sec_07}
We have proposed two versions of compressive sensing Fourier transform interferometry (\ie CI-FTI and SI-FTI) where the light exposure can be compressed temporally and spatially in order to maximize the spectral  resolution of the resulting hyperspectral volumes, \ie in a process that minimizes the photo-bleaching phenomenon. These methods are practically plausible without any modification of interferometer but only of the optical setup to which it is associated (\eg the light system of a confocal microscope). The first proposed system, called Coded illumination-FTI, consists in binary modulation of illuminating light before exposing the biological specimen; while the second system, referred as  Structured Illumination-FTI, involves a spatial light-modulation that allows spatiotemporal light coding. By invoking the theory of compressive sensing and by deriving a variable density sampling strategy guided by~\cite{krahmer2014stable}, the two proposed frameworks are proved to reach optimum uniform recovery guarantees. Furthermore, the impact of promoting fluorochrome life-time as a constraint has been analyzed. Our theoretical analyses were verified via exhaustive numerical tests for three cases: \emph{(i)} noiseless synthetic arbitrary sparse HS data, \emph{(ii)} simulated biological HS volume, and \emph{(iii)} Experimental FTI measurement. However, future investigations can be followed in several directions. For instance, assuming other discrepancy sources, \eg Poisson noise, quantization, instrumental response, or taking into account the structure of the fluorochrome spectra in order to improve the HS recovery. Tracing other low-complexity prior models, \eg low-rankness, group sparsity, dictionary-based sparsity, shearlet, or TV sparsity models will be the scope of a future work.  
%%%%%%%%%%%%%%%%%%%%%%%%%%%%%%%%%%%%%%%%%%%%%%%%%%%%%%%%%%%%%%%%%%%%%%%%%%%%%
\appendix

%%%%%%%%%%%%%%%
\section{Proof of Theorem~\ref{prop:gen-bound-noise-vds}}
\label{sec:proof_prop_gen-bound-noise-vds}
In the context defined by Thm~\ref{prop:gen-bound-noise-vds}, we define $M$ independent, positive \rvs $X_j \sim |n_j|^2 \,\eta(\beta)^{-1}$ (for $1\leq j \leq M$), such that $\bb E X_j = |n_j|^2 \,\bb E \eta(\beta)^{-1} = |n_j|^2 N$, and $\bb E \sum_j X_j = N \|\bs n\|^2$. Therefore, the left-hand side of~\eqref{eq:gen-bound-noise-vds-ext-noise} reads 
  \begin{equation}
    \label{eq:appA-tmp1}
    \ts \tinv{M} \|\bs D \bs n\|^2 = \tinv{M}  \sum_{j=1}^M X_j = \frac{N}{M} \|\bs n\|^2 + \tinv{M} \sum_{j=1}^M (X_j - \bb E X_j).  
  \end{equation}
Moreover, for all $j \in \range{M}$ and any integer $p\geq 1$, $\bb E X_j^p \leq \|\bs n\|_\infty^{2p}\,\bb E_\beta \eta(\beta)^{-p}$, so that, for $|s| < 1/(4e\rho N \|\bs n\|_\infty^{2})$ and using $p! \geq e^{1-p} p^p$ (Stirling bound~\cite{rigollet2015high}),
\begin{align*}
\ts \bb E e^{s (X_j - \bb E X_j)}&\ts =\ \sum_{p=0}^{+\infty} \frac{1}{p!} s^p\, \bb E (X_j - \bb E X_j)^p = 1 + \sum_{p=2}^{+\infty} \frac{1}{p!} s^p\, \bb E (X_j - \bb E X_j)^p\\
&\ts \leq 1 + \sum_{p=2}^{+\infty} \frac{1}{p!} 2^p |s|^p\, \bb E X_j^p\\
&\ts \leq 1 + \sum_{p=2}^{+\infty} 2^p |s|^p\, \|\bs n\|_{\infty}^{2p}\ \frac{1}{p!} \bb E_\beta \eta(\beta)^{-p}\\
&\ts \leq 1 + \sum_{p=2}^{+\infty} 2^p |s|^p\, \|\bs n\|_{\infty}^{2p}\ e^{p-1} \big(\frac{1}{p} (\bb E_\beta \eta(\beta)^{-p})^{1/p}\big)^p\\
&\ts \leq 1 + \sum_{p=2}^{+\infty} 2^p |s|^p\, \|\bs n\|_{\infty}^{2p}\, e^{p-1}\rho^p N^p\\
&\ts = 1 + 4 e s^2\, \|\bs n\|_{\infty}^{4}\, \rho^2 N^2 \sum_{p=0}^{+\infty} 2^p e^{p} |s|^p\, \|\bs n\|_{\infty}^{2p}\, \rho^p N^p\\
&\ts \leq 1 + 8 e s^2\, \|\bs n\|_{\infty}^{4}\, \rho^2 N^2\ \leq\ \exp( 16 e^2 s^2\, \|\bs n\|_{\infty}^{4}\, \rho^2 N^2 /2),
\end{align*}
where the second line uses $\bb E_X |X - \bb E X|^p = \bb E_X |\bb E_{X'}(X - X')|^p \leq \bb E_X \bb E_{X'}(|X| + |X'|)^p \leq 2^{p-1} \bb E_X \bb E_{X'}(|X|^p + |X'|^p) \leq 2^p \bb E |X|^p$, for two \iid \rvs $X$ and $X'$. 

This shows that the \rvs $\{X_j - \bb E X_j: 1\leq j \leq M\}$ are positive sub-exponential \rvs with parameter $\lambda = 4 e\|\bs n\|_\infty^{2} \rho N$ \cite{rigollet2015high}, since $\bb E e^{s (X_j - \bb E X_j)} \leq \exp (s^2\lambda^2/2)$ for $|s|<1/\lambda$. Therefore, Bernstein inequality~\cite[Thm. 1.13]{rigollet2015high} shows that their sum concentrates around their mean, \ie 
\begin{equation*}
  \ts \bb P[\frac{1}{M} \sum_{j=1}^M (X_j -\bb E X_j) > t \lambda] \leq \exp(-\frac{M}{2}\min\{t^2,t\}).  
\end{equation*}
With the change of variable $\min\{t^2,t\} = s$, \ie $t = \max\{\sqrt s, s\}$, this gives $\ts \bb P[\frac{1}{M} \sum_{j=1}^M (X_j -\bb E X_j) > \max \{s , \sqrt s\} \lambda] \leq \exp(-\frac{M}{2} s)$, or equivalently, with $s \leftarrow s/M$, 
$$
\ts \bb P[\frac{1}{M} \sum_{j=1}^M (X_j -\bb E X_j) >  \max\big\{\frac{s}{M}, \frac{\sqrt s}{\sqrt M}\big\} \lambda\,] \leq \exp(-\frac{s}{2}).
$$
Therefore, with probability exceeding $1-\exp(-\frac{s}{2})$, \eqref{eq:appA-tmp1} provides 
\begin{align*}
\ts \frac{1}{M}\,\|\bs D \bs n\|^2 \leq \frac{N}{M} \|\bs n\|^2 + 4 e\max\big\{\frac{s}{M}, \frac{\sqrt s}{\sqrt M}\big\}\|\bs n\|_\infty^{2} \rho N,
\end{align*}
which gives the result. 

%%%%%%%%%%%%%%%%%%%%%%%%%%%%%%%%%%%%%%
\section{Proof of Corollary~\ref{cor:noise-level-estim-Gauss-noise}}\label{sec:proof_cor_noise-level-estim-Gauss-noise}
By union bound, we first know that $\bb P[\exists i \in \range{M}: | n_i| \geq t \sigma] \leq M \exp(-t^2/2)$. 
Therefore, $\|\bs n\|_\infty \leq \sigma \sqrt{2\log M +s}$ with probability exceeding $1-\exp(-s/2)$. Moreover, since $n_i^2$ is a $\chi^2$ distribution, we have $\|\bs n\|^2 \leq \sigma^2 (M + \sqrt{M}\sqrt{s/2} + s)$ with probability exceeding $1 - \exp(-s/2)$ (see, \eg \cite[Lem. 1]{laurent2000adaptive}).

Therefore, by union bound over the failure of these two events and over the one covered by Thm.~\ref{prop:gen-bound-noise-vds} in \eqref{eq:gen-bound-noise-vds-ext-noise}, we have 
\begin{align*}
\ts \frac{1}{M}\,\|\bs D \bs n\|^2&\ts \leq \frac{N}{M}\|\bs n\|^2 + 4e \max\big\{\frac{s}{M}, \frac{\sqrt s}{\sqrt M}\big\}\|\bs n\|_\infty^{2} \rho\,N\\
&\ts \leq \sigma^2 \big[ \frac{N}{M} (M + \frac{1}{\sqrt 2}\sqrt{M}\sqrt{s} + s) + 4e (2\log M +s) \max\big\{\frac{s}{M}, \frac{\sqrt s}{\sqrt M}\big\} \rho\,N \big]\\
&\ts = \sigma^2 N\,\big[ (1 + \frac{1}{\sqrt 2}\frac{\sqrt{s}}{\sqrt{M}} + \frac{s}{M}) + 4e (2\log M +s) \max\big\{\frac{s}{M}, \frac{\sqrt s}{\sqrt M}\big\} \rho \big],
\end{align*}
with probability exceeding $1-3\exp(-\frac{s}{2})$, which provides the result. Note that in the case of a UDS, we just need to consider the $\chi^2$-bound on $\|\bs n\|^2$ above since $\bs D^2 = N \Id_N$ is deterministic.

%%%%%%%%%%%%%%%%%%%%%%%%%%%%%%%%%%%%%%
\section{Proof of Prop.~\ref{prop:cifti local coherence}}\label{sec:proof cifti}
The proof of Prop.~\ref{prop:cifti local coherence} requires us to compute here
the local coherence between the 1D discrete Fourier and Haar wavelet bases. We tightly follow the developments of \cite[Cor. 6.4]{krahmer2014stable}, improving them by a factor of $9 \pi$, which leads to a smaller hidden constant in the sample complexity bound. 
Let us first recall the definitions of the 1D discrete Fourier and Haar wavelet bases of~$\bb C^{N}$, for some dimension $N \in \bb N_0$.
\begin{definition}[1D Haar wavelet basis]\label{def:one-dimensional haar}
  Fix ${N} = 2^{\bar n}$ for some $\bar n \in \bb{N}$. The 1D discrete Haar wavelet basis of $\bb C^{N}$ consists of the functions 
$$
\ts \{\tilde{\psi}_j\}_{j=1}^{{N}}:=\{\psi\} \cup \{h_{n,\nell} : 0 \leq n \leq \bar n-1,\, 0\leq \nell \leq 2^n-1\},
$$
where, for $t \in \bb N \cap [0, {N}-1]$, $\psi(t) = 2^{-{\bar n}/2}$ is the constant (scaling) function and $h_{n,\nell} (t) := 2^{\frac{n-{\bar n}}{2}} h(2^{n-{\bar n}}\, t - \nell)$ is the wavelet function at resolution $n$ and position $\nell$, with $h(t)$ equals $1$, $-1$ and $0$ over $[0,1/2)$, $[1/2, 1)$ and $\bb R \setminus [0,1)$, respectively, \ie 
  \begin{equation*}
    h_{n,\nell}(t) =
    \begin{cases}
      2^{\frac{n-{\bar n}}{2}} &{\rm if}\ \nell 2^{{\bar n}-n} \le t < (\nell +\frac{1}{2})2^{{\bar n}-n},\\ 
      -2^{\frac{n-{\bar n}}{2}} &{\rm if}\ (\nell +\frac{1}{2}) 2^{{\bar n}-n} \le t < (\nell +1)2^{{\bar n}-n},\\ 
      0&{\rm \normalfont otherwise}.
    \end{cases}
  \end{equation*}
\end{definition}

\begin{definition}[1D discrete Fourier basis]\label{def:one-dimensional Fourier}
  Fix ${N} = 2^{\bar n}$ for some ${\bar n} \in \bb{N}$. The 1D discrete Fourier basis of $\bb C^{N}$ consists of the functions
  \begin{equation*}
    \ts \big\{\phi_k(t) := \frac{1}{\sqrt{{N}}}e^{2 \pi i \frac{k}{{N}}t} :\ -\frac{{N}}{2}+1 \leq k \leq \frac{{N}}{2},\ t \in [0, {N}-1]\big\},
\end{equation*}
with $k$ and $t$ integers.
\end{definition}

Let us now bound the local coherence 
$$
\ts \mu_{k}^{\rm loc}(\bs F, \bs{\Psi}_{\rm 1D}) := \mu_{k}^{\rm loc}(\bs{A}:=\bs F^* \bs{\Psi}_{\rm 1D}) = \max_{1\le j\le {N}} \lvert\langle \phi_{k-(N/2)},\tilde{\psi}_j \rangle \rvert,
$$
for $k \in \range{N}$. Given $k' = k - N/2$ with $-\frac{{N}}{2}+1 \leq k' \leq \frac{{N}}{2}$, we have to compute three cases: \emph{(i)} $\lvert\langle\phi_{k'}, \psi\rangle\rvert$ for all ${k'}$, \emph{(ii)} $\lvert\langle\phi_0, h_{n,\nell}\rangle\rvert$ for all $n,\nell$, and \emph{(iii)} $\lvert\langle\phi_{k'}, h_{n,\nell}\rangle\rvert$ for all non-zero ${k'}$ and all $n,\nell$. For the first two cases, we observe that $\lvert\langle\phi_{k'}, \psi\rangle\rvert$ equals one if ${k'}=0$ and zero otherwise, while $\lvert\langle\phi_0, h_{n,\nell}\rangle\rvert = 0$, for all $n$ and $\nell$. For the third case, if ${k'}\neq 0$, a direct computation provides $s := \langle\phi_{k'}, h_{n,\nell}\rangle = s_1 - s_2$ with 
\begin{equation}
  s_1 := \sum_{j = \nell 2^{{\bar n}-n}}^{(\nell+\frac{1}{2})2^{{\bar n}-n}-1} 2^{\frac{n-{\bar n}}{2}}2^{-\frac{{\bar n}}{2}}e^{-2 \pi i 2^{-{\bar n}} {k'} j},\ s_2 := \sum_{j = (\nell+\frac{1}{2})2^{{\bar n}-n}}^{(\nell+1)2^{{\bar n}-n}-1} 2^{\frac{n-{\bar n}}{2}}2^{-\frac{{\bar n}}{2}}e^{-2 \pi i 2^{-{\bar n}} {k'} j}.
  \label{eq:s1-s2}
\end{equation}
Separate computations of $s_1$ and $s_2$ yields
\begin{align}
  s_1 &\ts = 2^{\frac{n}{2}-{\bar n}} e^{-2\pi i {k'} \nell 2^{-n}} \sum_{j=0}^{2^{{\bar n}-n-1}-1} e^{-2\pi i 2^{-{\bar n}}{k'} j},
	\label{eq:s1}\\
  s_2 &\ts = 2^{\frac{n}{2}-{\bar n}} e^{-2\pi i {k'} (\nell+\frac{1}{2}) 2^{-n}} \sum_{j=0}^{2^{{\bar n}-n-1}-1} e^{-2\pi i 2^{-{\bar n}}{k'} j}.
	\label{eq:s2}
\end{align} 
Therefore, we get

\begin{align*}
 s &\ts = s_1 - s_2 = 2^{\frac{n}{2}-{\bar n}}\ \big| \sum_{j=0}^{2^{({\bar n}-n-1)}-1} e^{-2\pi i 2^{-{\bar n}}{k'} j}\big|\ \big|1-e^{-2\pi i {k'} 2^{-(n+1)}}\big|\\
                                             &\ts = 2^{\frac{n}{2}-{\bar n}}\ \frac{2\sin^2 (\pi {k'} 2^{-n-1})}{|\sin(\pi {k'} 2^{-{\bar n}})|} \le 2^{\frac{n}{2}} \frac{\sin^2(\pi {k'} 2^{-n-1})}{|{k'}|}.
\end{align*}
In the last inequality we used the fact that $|\sin(\pi x)| \ge 2|x|$ for $|x| \le 1/2$.
Let us consider two cases. First, if $\pi |{k'}| /2^{-n} \ge \pi/2$, then $2^{n/2} < \sqrt{2|{k'}|}$ and $s\le \sqrt{2/|{k'}|}$. Second, if $\pi |{k'}| /2^{-n} < \pi/2$, then using the fact that $\cos |x| \ge 1-2/\pi |x|$ for $|x|<\pi/2$ we get $\sin^2 \pi {k'} 2^{-n-1} = 1 - \cos \pi {k'} 2^{-n} \le |{k'}| 2^{-n}$ which leads to $s \le 2^{-n/2} \le 1/\sqrt{2|{k'}|}$. Combining the two cases leads to 
\begin{align*}
  \ts |\langle\phi_{{k'}},
  h_{n,\nell}\rangle| \le\ \frac{\sqrt{2}}{\sqrt{|{k'}|}}.
\end{align*}

Gathering all results, we thus find $\underset{j}{\max}\lvert\langle \phi_{k'},\tilde{\psi}_j\rangle\rvert \le \min\big\{1,\frac{\sqrt 2}{\sqrt{|{k'}|}}\big\}$, and 
\begin{equation}
  \ts \mu^{\rm loc}_{k}(\bs{A}) \le \kappa_{k} := \sqrt 2 \min \Big \{1,\frac{1}{\sqrt{|k-{N}/2|}} \Big\}.\label{eq:app local coherence 1D}
\end{equation}

Therefore, \eqref{eq:vds pmf} gives $p(k) := C_{{N}} \min \big\{1,\frac{1}{|k-{N}/2|}\big\}$ for $k \in \range{{N}}$, which implies~\eqref{eq:pmf-ci-FTI} for $N = N_\xi$, with $C_{N}$ ensuring $\sum_{k=1}^{{N}} p(k)=1$. Concerning this constant, we find 
\begin{align*}
\ts C_{N}^{-1}&\ts = \sum_k \min\{1,\frac{1}{|k-\frac{{N}}{2}|}\} = 1 + \sum_{k=1}^{\frac{{N}}{2}-1} \frac{1}{\frac{{N}}{2}-k} + \sum_{k=\frac{{N}}{2}+1}^{{N}} \frac{1}{k-\frac{{N}}{2}}\\
&\ts = 1 + \sum_{k=1}^{\frac{{N}}{2}-1} \frac{1}{k} + \sum_{k=1}^{\frac{{N}}{2}} \frac{1}{k} = 1 + \frac{2}{{N-2}} + 2 \sum_{k=1}^{\frac{{N}}{2}-1} \frac{1}{k}.
\end{align*}
However, for any integer $D\geq 2$, $\sum_{k=1}^{D-1} \frac{1}{k} \leq 1+\int_2^D \frac{1}{s-1} \ud s =1+ \log(D-1)$ and $\sum_{k=1}^{D-1} \frac{1}{k} \geq \int_1^D \frac{1}{s} \ud s =\log(D)$. 
Therefore,  
\begin{equation}
  \label{eq:bound-CN}
\ts 2\log(\frac{{N}}{2}) \le 1 + \frac{2}{{N-2}} + 2\log(\frac{{N}}{2}) \le \ts C_{N}^{-1} \leq 3 + \frac{2}{{N-2}} + 2\log(\frac{{N}}{2}-1) <4 + 2\log(\frac{{N}}{2}) ,  
\end{equation}
which is true from ${N}\geq 2$.  Finally, from the definition of $\bs \kappa$, we have 
$\|\bs \kappa\|^2/2 \leq C_{N}^{-1}$, so that $\|\bs \kappa\|^2 \leq 8 + 4\log(\frac{{N}}{2}) \lesssim \log {N}$, and \eqref{eq:rip_for_onb_vds} then explains the sufficient condition~\eqref{eq:cifti m} for $N = N_\xi$.

\section{Proof of Prop.~\ref{prop:sifti local coherence}}
\label{sec:proof sifti}
This proof requires us to compute a bound on the local coherence $\mu_k^{\rm loc}(\bs \Phi, \bs \Psi) := \mu_k^{\rm loc}(\bs A := \bs \Phi^* \bs \Psi)$ for $k \in \range{N}$, when $\bs \Phi = \bs{I}_{N_{\rm p}}\otimes \bs{F}$ and $\bs \Psi = \bs{\Psi}_{\rm 2D}\otimes \bs{\Psi}_{\rm 1D}$. 

We note first that, from the Kronecker product properties, $\bs{A} = (\bs{I}_{N_{\rm p}}\bs{\Psi}_{\rm 2D})\otimes (\bs{F}^*\bs{\Psi}_{\rm 1D})$. Using the relation \eqref{eq:transl-1D-2D-ind-repr} between the 1D ``$k$'' and the 2D ``$(j,\nell)$'' index representations of interferometric data, the definition of local coherence \eqref{eq:loc-coh} gives
\begin{equation}
  \mu_k^{\rm loc}(\bs{A}) =
  \underbrace{\underset{\nell'}{\max}|(\bs{F}^*\bs{\Psi}_{\rm 1D})_{\nell,\nell'}|}_{=:~
    \mu_{\nell}^{\rm loc}(\bs{F}^*\bs{\Psi}_{\rm 1D})}\cdot\underbrace{\underset{j'}{\max}|(\bs{I}_{N_{\rm p}}\bs{\Psi}_{\rm 2D})_{j,j'}|}_{=:~
    \mu_{j}^{\rm loc}(\bs{I}_{N_{\rm p}}\bs{\Psi}_{\rm 2D})},\label{eq:app local coherence 3d}
\end{equation}
since 
\begin{equation}
  \label{eq:max-kron-prod}
\ts \max_{s} |(\bs B \otimes \bs C)_{r,s}| = \max_{s'} |B_{r',s'}| \max_{s''} |C_{r'',s''}|,  
\end{equation}
for any matrices $\bs B \in \bb C^{b_1 \times b_2}, \bs C \in \bb C^{c_1 \times c_2}$, and $r = (r'-1)b_1 + r''$.

Let us now bound the two terms of right-hand side of \eqref{eq:app local coherence 3d}. For the first, \eqref{eq:app local coherence 1D} provides
\begin{equation}
  \ts \mu_{\nell}^{\rm loc}(\bs{F}^*\bs{\Psi}_{\rm 1D}) \le \sqrt 2\, \min\Big\{1,\frac{1}{\sqrt{|\nell-N_\xi/2|}}\Big\}.\label{eq:app local coherence 2D first part}
\end{equation}

Concerning the second, we are going to show that $\mu_{j}^{\rm loc}(\bs{I}_{N_{\rm p}}\bs{\Psi}_{\rm 2D}) = 1/2$ for the two possible types of 2D wavelet constructions, \ie the isotropic~\cite{mallat2008wavelet} and anisotropic schemes~\cite{nowak1999wavelet}.  Let us first recall the definition of these 2D bases.

\begin{definition}[2D Haar wavelet basis with isotropic levels]\label{def:two-dimensional isotropic haar}
  Fix $N = 2^{\bar n}$ for some $\bar n \in \bb{N}$. For the resolution $0 \leq n \leq \bar n-1$, and the position parameter $0\leq \nell \leq 2^n-1$, we define
  \begin{equation*}
      h^0_{n,\nell}(t) =
      \begin{cases}
        2^{\frac{n-{\bar n}}{2}} &{\rm if}\ \nell 2^{{\bar n}-n} \le t < (\nell +1)2^{{\bar n}-n},\\ 
        0&{\rm \normalfont otherwise}.
      \end{cases}
  \end{equation*}
  The 2D Haar wavelet basis with isotropic levels $\ts \{\tilde{\psi}_j\}_{j=1}^{N^2}$ of $\bb C^{N \times N}$ consists of the functions 
  \begin{align*}
    \phi^0(t_1,t_2)&= \psi(t_1)\, \psi(t_2),\\
    \phi^1_{n, (\nell_1,\nell_2)}(t_1,t_2)&= h^0_{n,\nell_1}(t_1)\, h_{n,\nell_2}(t_2),\\
    \phi^2_{n, (\nell_1,\nell_2)}(t_1,t_2)&= h_{n,\nell_1}(t_1)\, h^0_{n,\nell_2}(t_2),\\
    \phi^3_{n,(\nell_1,\nell_2)}(t_1,t_2)&= h_{n,\nell_1}(t_1)\, h_{n,\nell_2}(t_2),
  \end{align*}
  with $0 \leq n \leq \bar n-1$ and $0\leq \nell_1, \nell_2 \leq 2^n-1$, which provides $N^2$ possible functions. Above, $h_{n,\nell}(t)$ and $\psi(t)$ are the elements of the 1D Haar wavelet basis (see Def.~\ref{def:one-dimensional haar}).
\end{definition}

\begin{definition}[2D Haar wavelet basis with anisotropic levels]
\label{def:two-dimensional anisotropic haar}
A 2D wavelet basis of $\bb C^{N \times N}$ with anisotropic levels can be decomposed into the Kronecker product of two replicates of an 1D wavelet basis of $\bb C^N$. In the case of the Haar wavelet transform, we have $\bs{\Psi}_{\rm 2D} = \bs{\Psi}_{\rm 1D} \otimes \bs{\Psi}_{\rm 1D}$. In other words, if $\bs x = {\rm vec}(\bs X) \in \bb C^{N^2}$ is the vectorization of an $(N\times N)$ image $\bs X \in \bb C^{N \times N}$, we have 
$\ts \bs{\Psi}_{\rm 2D} \bs x = {\rm vec}(\bs{\Psi}_{\rm 1D}^* \bs X \bs{\Psi}_{\rm 1D})$. 
\end{definition}

Identifying in these definitions $N$ with $\bar N_{\rm p}$ (and thus $N^2$ with $N_{\rm p} = \bar N^2_{\rm p}$), we can now proceed and bound the local coherence of these two bases when the sensing basis is the Dirac basis $\bs I_{N_{\rm p}}$. 
\medskip
 
\paragraph{(i) Anisotropic case} In this case, from Def.~\ref{def:two-dimensional anisotropic haar} and since $\bs{I}_{N_{\rm p}} = \bs{I}_{\bar N_{\rm p}} \otimes \bs{I}_{\bar N_{\rm p}}$, we find 
\begin{align*}
  &\mu_{j}^{\rm loc}(\bs{I}_{N_{\rm p}}\bs{\Psi}_{\rm 2D}) = \mu_{j}^{\rm loc}\big((\bs{I}_{\bar N_{\rm p}}\bs{\Psi}_{\rm 1D})\otimes (\bs{I}_{\bar N_{\rm p}}\bs{\Psi}_{\rm 1D})\big)\\
&=\underbrace{\underset{j'_1}{\max}|(\bs{I}_{\bar N_{\rm p}}\bs{\Psi}_{\rm 1D})_{j_1,j'_1}|}_{=:\mu_{j_1}^{\rm loc}(\bs{I}_{\bar N_{\rm p}}\bs{\Psi}_{\rm 1D})} . \underbrace{\underset{j'_2}{\max}|(\bs{I}_{\bar N_{\rm p}}\bs{\Psi}_{\rm 1D})_{j_2,j'_2}|}_{=:\mu_{j_2}^{\rm loc}(\bs{I}_{\bar N_{\rm p}}\bs{\Psi}_{\rm 1D})}, \label{eq:app coeherence 2d wavelet}
\end{align*}
where we invoked again \eqref{eq:max-kron-prod} using the 1D pixel indexing $j := j_1+\bar N_{\rm p} (j_2-1)$ associated with the 2D pixel indexing $j_1, j_2 \in \range{\bar N_{\rm p}}$. Let $e_k$ be the $k^{\rm th}$ element of the Dirac basis of $\bb C^{\bar N_{\rm p}}$. Setting $N = \bar N_{\rm p}$ in Def.~\ref{def:one-dimensional haar}, we find $|(\bs{I}_{\bar N_{\rm p}}\bs{\Psi}_{\rm 1D})_{j_1,j'_1}|\ =\ \lvert \langle
  e_{j_1},\tilde{\psi}_{j'_1}\rangle \rvert$ and 
\begin{equation*}
  \lvert \langle
  e_{j_1},\tilde{\psi}_{j'_1}\rangle \rvert \  =\  
  \begin{cases}
    2^{\frac{-{\bar n}}{2}} &{\rm if}\ \tilde{\psi}_{j'_1}=\psi,\\ 
    2^{\frac{n-{\bar n}}{2}} &{\rm if}\ \tilde{\psi}_{j'_1}=h_{n,\nell},~0\leq n \leq {\bar n}-1,~0 \leq \nell \leq 2^{n}-1.  
  \end{cases}
\end{equation*}
Therefore, the maximum of the expression above is reached for $n = \bar n -1$ and $\mu_{j_1}^{\rm loc}(\bs{I}_{\bar N_{\rm p}}\bs{\Psi}_{\rm 1D})=\mu_{j_2}^{\rm loc}(\bs{I}_{\bar N_{\rm p}}\bs{\Psi}_{\rm 1D}) = 1/\sqrt{2}$ and $\mu_{j}^{\rm loc}(\bs{I}_{N_{\rm p}}\bs{\Psi}_{\rm 2D}) = 1/2$. 
\medskip

\paragraph{(ii) Isotropic case} Similarly to the developments above, by setting $N = \bar N_{\rm p}$ in Def.~\ref{def:two-dimensional isotropic haar}, we find
\begin{equation*}
  |(\bs{I}_{N_{\rm p}}\bs{\Psi}_{\rm 2D})_{j,j'}|\ =\ \lvert \langle
  e_{j},\tilde{\psi}_{j'}\rangle \rvert \  =\  
  \begin{cases}
    2^{-{\bar n}} &{\rm if}\ \tilde{\psi}_{j'}=\phi^0,\\ 
    2^{n-{\bar n}} &{\rm if}\ \tilde{\psi}_{j'}=\phi^s_{n, (\nell_1, \nell_2)},  
  \end{cases}
\end{equation*}
for some $s \in \{1,2,3\}$, $0\leq n \leq {\bar n}-1$, and $0 \leq \nell_1,\nell_2 \leq 2^{n}-1$. Therefore, its maximum is reached for $n = \bar n - 1$ and $\mu_{j}^{\rm loc}(\bs{I}_{N_{\rm p}}\bs{\Psi}_{\rm 2D}) = 1/2$.

\bigskip

Combining the last two cases with~\eqref{eq:app local coherence 2D first part}, and~\eqref{eq:app local coherence 3d} results in
\begin{equation}
  \ts \mu_{k}^{\rm loc}(\bs{A}) \le \kappa_k := \frac{\sqrt{2}}{2}\min\Big\lbrace 1,\frac{1}{\sqrt{|k_\xi-N_\xi/2|}}\Big\rbrace, \label{eq:app kappa function}
\end{equation}
which provides \eqref{eq:coh-bound-SI}.

The pmf associated with SI-FTI can be then formulated from~\eqref{eq:app kappa function} and~\eqref{eq:app kappa norm} as
\begin{align*}
  p(k) = p(k(j,\nell))&=\ts \frac{C_{N_\xi}}{N_{\rm p}}\min\Big\lbrace 1,\frac{1}{|\nell-N_\xi/2|}\Big\rbrace,
\end{align*}
where the normalizing constant $C_{N_\xi}$, \ie such that $\sum_{k} p(k) = 1$, is formulated in App.~\ref{sec:proof cifti}. Moreover, $\|\bs \kappa\|^2 = \frac{1}{2}N_{\rm p}\sum_{\nell=1}^{N_\xi}\min\Big\lbrace
  1,\frac{1}{|\nell-N_\xi/2|}\Big\rbrace = \frac{1}{2}N_{\rm p} C_{N_\xi}^{-1}$ so that, from~\eqref{eq:bound-CN},  
\begin{equation}
 \ts  \|\bs \kappa\|^2 \leq N_{\rm p} (2 + \log(N_\xi/2)) \lesssim N_{\rm p} \log N_\xi.\label{eq:app kappa norm}
\end{equation}

\section*{Acknowledgment}                        
\label{sec:acknowledgment}
We would like to thank Gilles Puy for his enlightening advices on estimation of the power of the noise affected by VDS. We thank Ben Adcock for the valuable discussions we had with him during the iTWIST'18 workshop (Marseille, France). We also thank the anonymous reviewers for their thorough reading and useful comments during the revision of this work.

\bibliographystyle{IEEEtran}        
\bibliography{IEEEabrv,refs}

\end{document}